\documentclass[runningheads]{llncs}
\usepackage[T1]{fontenc}
\usepackage{graphicx}
\usepackage{amsmath, amsfonts, mathtools, xcolor, url, hyperref, subcaption, amssymb,comment}
\usepackage{cite}
\usepackage{cleveref}
\usepackage{graphicx}
\usepackage{tikz}
\usepackage{pgfplots}
\pgfplotsset{compat=1.18}
\usetikzlibrary{positioning, shapes.geometric}

\crefname{figure}{Fig.}{Figs.}
\crefname{example}{Example}{Examples}
\crefname{definition}{Definition}{Definitions}
\crefname{proposition}{Proposition}{Propositions}
\crefname{corollary}{Corollary}{Corollaries}
\crefname{theorem}{Theorem}{Theorems}
\crefname{lemma}{Lemma}{Lemmas}
\crefname{assumption}{Assumption}{Assumptions}
\crefname{section}{Section}{Sections}
\crefname{appendix}{Appendix}{Appendices}
\crefname{table}{Table}{Tables}
\crefname{footnote}{Footnote}{Footnotes}

\usepackage{color}

\begin{document}
\title{Congestion Games with Heterogeneous Valuations: An Optimal Transport Approach}
\titlerunning{Congestion Games with Heterogeneous Valuations}
%
\author{Pan-Yang Su\inst{1}\orcidID{0000-0003-2551-828X} \and
Negar Mehr\inst{2}\orcidID{0000-0002-1045-4423} \and
Shankar Sastry\inst{1}\orcidID{0009-0000-9021-7235}}
\authorrunning{P.-Y. Su et al.}
%
\institute{Department of Electrical Engineering and Computer Sciences, University of California, Berkeley, Berkeley, CA 94720, USA\\
\email{pan\_yang\_su@berkeley.edu, sastry@eecs.berkeley.edu}\\
\and
Department of Mechanical Engineering,
University of California, Berkeley, Berkeley, CA 94720, USA\\
\email{negar@berkeley.edu}}
\maketitle              
\begin{abstract}
In emerging urban mobility and logistics applications, such as advanced air mobility, electric vehicle charging, and shared service systems, agents with heterogeneous valuations choose among multiple destinations while sharing congested network resources. However, existing congestion game and resource allocation models do not simultaneously capture heterogeneous destination preferences and aggregate congestion externalities. We introduce a new nonatomic congestion game framework in which agents are endowed with heterogeneous destination valuations modeled by a measure space. We characterize Nash equilibria and social optima in this setting and show that both admit finite-dimensional representations in terms of threshold vectors and dual potentials. These structures induce a partition of the valuation space that determines agents’ destination choices. Our analysis leverages Kantorovich duality from optimal transport theory and provides a new geometric perspective on congestion games with heterogeneous valuations.

\keywords{Congestion games  \and Semi-discrete matching \and Optimal transport.}
\end{abstract}
\section{Introduction}
\label{sec: introduction}
\textbf{A New Congestion Game Model.}
In everyday commuting, travelers choose among different destinations (e.g., restaurants, tourist spots, supermarkets) based on their heterogeneous valuations, traffic conditions, and the limited capacity of the destinations themselves. For example, when electric vehicle owners need to charge, they select stations by weighing specific destination activities and travel times against the hard limits of available charging plugs. Similarly, in service systems such as delivery hubs or cloud servers, users choose among service locations with different types or prices, where increased demand leads to longer delays and strict resource capacities must be respected.

In all these scenarios, an agent’s utility can be modeled by the difference between their heterogeneous valuation of a destination and the delay induced by congestion. However, despite their common structure, these problems cannot be directly captured by classical models. Congestion games handle aggregate delays and route choices but typically assume homogeneous agents or one-dimensional valuation distribution. Conversely, resource allocation and matching models accommodate heterogeneous valuations but abstract away from endogenous congestion. \textbf{To bridge this gap, we propose a new congestion game model that explicitly integrates multidimensional, agent-specific valuations with aggregate, endogenous congestion and strict resource capacities.}

\textbf{Theoretical Contributions.}
Our framework bridges two classical lines of research: \textbf{nonatomic congestion games} \cite{10.5555/1076293,ROUGHGARDEN2004389} and \textbf{semi-discrete matching problems} \cite{wolansky2020semidiscreteoptimaltransport,doi:10.1086/687476}. Combining heterogeneous valuations and aggregate congestion externalities from these two models requires new techniques. In classical congestion games \cite{10.5555/1076293,ROUGHGARDEN2004389}, agents are homogeneous, allowing the characterization of Nash equilibria and social optima to be solved via finite-dimensional convex optimization. Even when heterogeneous agents are introduced via elastic demand \cite{10.5555/1109557.1109630}, their types are typically one-dimensional, which permits a straightforward threshold argument (high-valuation agents vs. low-valuation agents). 

In our model, however, agents choose among multiple destinations, making valuations multidimensional with no natural total order. Consequently, classical optimization and threshold approaches no longer apply. To resolve this, we leverage \textbf{optimal transport theory} \cite{villani2008optimal} to show that both Nash equilibria and social optima can still be characterized by finitely many threshold variables and aggregate flows (\cref{fig: threshold} and \cref{prop: ne,prop: kantorovich 0}).\footnote{Our model extends standard optimal transport because transport costs are endogenously determined by congestion rather than exogenously given. In this regard, our model parallels the literature on congested optimal transport \cite{doi:10.1137/060672832}. However, whereas classical congested optimal transport assigns mass to a fixed target distribution, our destination distribution emerges endogenously from the agents' strategic choices.} As a result, we explicitly characterize the resulting partitions of the valuation space and reduce the infinite-dimensional social welfare optimization problem to a finite-dimensional one.

\textbf{An Emerging Application.}
As a possible use case of our proposed model, consider \textbf{Advanced Air Mobility (AAM)}, a new air traffic paradigm envisioned by the Federal Aviation Administration (FAA) to support applications ranging from retail delivery \cite{WingWalmart2025} and defense \cite{Skydio2025} to passenger transport \cite{Joby2025} and beyond. Despite its potential, the FAA has acknowledged that the current Air Traffic Management (ATM) infrastructure is ill-equipped to handle the scale of these new operations \cite{faaUtmConops,faaUamConops}.

Existing ATM protocols \cite{doi:10.1287/opre.46.3.406,doi:10.1287/trsc.34.3.239.12300,doi:10.1287/opre.1100.0899,10.1007/978-3-642-86726-2_17} primarily minimize aggregate delay and ignore heterogeneous valuations. In contrast, recent market-based approaches \cite{balakrishnan2017distributed,10886225} elicit valuations but model congestion only through hard capacity constraints. Because our framework addresses both heterogeneous valuations and congestion-induced delays, it is particularly suitable for managing AAM systems.

In summary, our contributions are threefold:
\begin{enumerate}
    \item We introduce a new congestion game model with heterogeneous valuations and resource capacity constraints. Our framework bridges classical congestion games with semi-discrete matching problems and captures emerging applications such as advanced air mobility.
    \item We leverage optimal transport theory to establish partitional characterizations of both Nash equilibria and social optima. 
    \item Through the characterizations, we reduce the analysis of the infinite-dimensional Nash equilibria and social optima to finite-dimensional variables including aggregate flows, threshold vectors, and dual potentials.
\end{enumerate}

The rest of the paper is organized as follows. In \cref{sec: related work}, we review some related work. Then, we formalize our model in \cref{sec: model} and use routing games as an example in \cref{sec: examples and applications}. After that, we provide our main theoretical results in \cref{sec: Characterizations} and conclude in \cref{sec: Conclusion}. We provide some extensions in \cref{sec: Extensions} and omitted proofs in \cref{app: proofs}.

\section{Related Work}
\label{sec: related work}
Our work lies at the intersection of two lines of research: nonatomic congestion games \cite{10.5555/1076293,ROUGHGARDEN2004389} and matching markets with a continuum of agents \cite{gretsky1992nonatomic,GRETSKY199960,wolansky2020semidiscreteoptimaltransport,doi:10.1086/687476}.

Our model extends nonatomic congestion games by incorporating heterogeneous multidimensional valuations over different strategies. While Milchtaich \cite{MILCHTAICH1996111,MILCHTAICH2009750} extended finite congestion games to allow player-specific payoff functions, our model instead considers a continuum of heterogeneous agents with multidimensional valuation vectors. This richer preference structure leads to an optimal transport formulation of the social welfare optimization problem and gives rise to a new partition of the valuation space that is absent from Milchtaich \cite{MILCHTAICH1996111,MILCHTAICH2009750}. 

When specialized to routing, our model generalizes routing games and traffic assignment models with elastic (variable) demand
\cite{10.5555/1109557.1109630,
10.1145/2000378.2000391,
YILDIRIM2005659,
doi:10.1287/trsc.14.2.174}, which are recovered by introducing an outside option representing not traveling as an additional destination. Our model also generalizes routing models with capacity and side constraints \cite{doi:10.1287/moor.1040.0098,LARSSON1995433,NIE2004285}. Also, unlike combined destination-choice and traffic assignment models \cite{EVANS197637,LAM1992275,doi:10.1287/trsc.22.1.14}, we do not assume a parametric random-utility specification. Instead, agents have arbitrary heterogeneous valuations over destinations.

On the matching side, our work is related to assignment games \cite{shapley1971assignment} and their extensions to markets with a continuum of agents \cite{gretsky1992nonatomic,GRETSKY199960}, as well as semi-discrete matching and optimal transport models \cite{wolansky2020semidiscreteoptimaltransport}. Our work is particularly related to the framework of Azevedo and Leshno \cite{doi:10.1086/687476}, who derive cutoff-based allocations when matching a finite set of colleges and a continuum of students. Like their model, our equilibrium and welfare characterizations are described by finite-dimensional threshold vectors that partition a continuum of heterogeneous agents. However, Azevedo and Leshno study stable matchings in a two-sided market with ordinal preferences and exogenous matching values, whereas we study a one-sided congestion game with cardinal utilities and endogenous congestion externalities.

Our work is also related to matching with externalities
\cite{KeisukeBando2016,10.1093/restud/rdac032,SASAKI199693}.
Unlike that literature, which typically studies identity-dependent or complementary externalities, we focus on aggregate congestion externalities generated by resource utilization. More recently, Estes and Mani \cite{estes2023nonmonetary} study congestion in finite matching markets with heterogeneous valuations. In contrast, we consider nonatomic agents, general congestion functions, and network congestion externalities.

\section{Model}
\label{sec: model}
In this section, we describe the proposed congestion game model, which extends the standard model in Roughgarden and Tardos \cite{ROUGHGARDEN2004389} through a measure space of agents' valuation distributions. The reader can consult \cref{ex: two 1} for a concrete example, and \cref{app: notation} provides the notation.

\subsection{Nonatomic Congestion Games with Heterogeneous Valuations}
\label{subsec: val and act}
Let $E$ be a finite ground set of \emph{elements}, where each element represents a resource, such as an edge in a traffic network. Each element $e$ has a positive \emph{capacity} $c_e$, capturing the limited resource availability, and a \emph{cost} or \emph{latency function} $\ell_e$, representing the congestion cost. We assume that the latency functions are nonnegative, nondecreasing, and continuous. Other than edge capacities, our model can also include general convex side constraints, such as regional capacity constraints involving multiple elements (see \cref{sec: convex}).

There are $k$ \emph{groups} of agents. Each group is associated with a \emph{strategy set} $\mathcal{P}_i \subseteq 2^E$, a positive \emph{rate} $r_i$ representing the total mass of agents in the group, and a finite destination \emph{index set} $D_i$. The index set induces a partition $\{\mathcal{P}_{i, j}\}_{j \in D_i}$ of the strategy set ($\mathcal{P}_i = \cup_{j \in D_i} \mathcal{P}_{i, j}$). Each strategy represents a collection of resources that an agent must utilize to obtain a particular outcome. In routing games, a strategy corresponds to a path, the outcomes are physical destinations, and $D_i$ is the index set of destinations available to group $i$. Consequently, $\mathcal{P}_{i, j}$ is the set of paths leading to the destination indexed by $j$ (see \cref{sec: examples and applications}).

Finally, we define the joint strategy set $\mathcal{P} := \cup_{i \in [k]} \mathcal{P}_i$. A \emph{flow} is a function $f: \mathcal{P} \rightarrow \mathbb{R}^+$ that specifies the mass of agents utilizing each strategy. For a fixed flow $f$ and element $e \in E$, we define the flow of the element to be $f_e = \sum_{P \ni e} f_P$. A flow $f$ is said to be \emph{feasible} if $\sum_{P \in \mathcal{P}_i} f_P = r_i$  for all $i \in [k]$. Given a flow $f$, $\ell_P(f) = \sum_{e \in P} \ell_e(f_e)$ is the latency of a strategy $P$.\footnote{Note that $\ell_e(f_e)$ only depends on $f_e$, the amount of congestion on element $e$, while $\ell_P(f)$ depends not only on $f_P$, the amount of congestion on strategy $P$, but also on the amount of congestion on each element $e \in P$. Specifically, $\ell_P(f)$ depends on $\{f_e | e \in P\}$.}

We write $c_e = \infty$ whenever the capacity exceeds the maximum possible demand on edge $e$; that is, $c_e \geq \sum_{i\in I} r_i$, where $I = \{i \in [k] | \exists P \in \mathcal{P}_i, \text{ s.t. } e \in P\}$. To ensure the existence of a feasible flow, we assume that each group has at least one strategy consisting entirely of infinite-capacity elements ($c_e = \infty$ for every $e\in P$). This strategy can be interpreted as an \emph{outside option} of not participating. 

For each group $i$, we associate a measure space $(V_i, \mathcal{B}_{V_i}, \mu_i)$ of \emph{valuations}. The valuation vector $v \in V_i$ specifies an agent's preferences over the destinations in $D_i$, where the coordinate $v_j$ represents the agent's valuation of obtaining destination $j$.

\begin{definition}
The \emph{valuation space} of commodity $i$ is a measure space $(V_i, \mathcal{B}_{V_i}, \mu_i)$, where $V_i := [0, \bar{V}]^{|D_i|}$ with some upper bound $\bar{V} > 0$, $\mathcal{B}_{V_i}$ is the Borel $\sigma$-algebra on $V_i$, and $\mu_i$ is a measure with $\mu_i(V_i) = r_i$.
\end{definition}

Since $(V_i, \mathcal{B}_{V_i}, \mu_i)$ is a finite measure space, we can view it as a \emph{scaled} probability space where the probability is scaled by a factor of $r_i$. We assume that $\mu_i$ is absolutely continuous (with respect to the Lebesgue measure $\lambda$). The absolute continuity requirement is only for convenience; we relax it in \cref{subsec: abs conti}.

Furthermore, we form the joint valuation space of all agents and identify each agent by a group-valuation pair $(i, v)$ with $i \in [k]$ and $v \in V_i$. Note that $\bigsqcup$ denotes the disjoint union.
\begin{definition}
The \emph{joint valuation space} is a measure space $(V, \mathcal{B}_{V}, \mu)$, where 
\begin{equation}
V := \bigsqcup_{i \in [k]}V_i, \quad \mathcal{B}_V := \{\bigsqcup_{i \in [k]} A_i | A_i \in \mathcal{B}_{V_i}, \forall i \in [k]\},
\end{equation}
and the measure $\mu$ is defined by
\begin{equation}
\mu(\bigsqcup_{i \in [k]} A_i) := \sum_{i \in [k]} \mu_i(A_i), \quad \forall \bigsqcup_{i \in [k]} A_i \in \mathcal{B}_V.
\end{equation}
\end{definition}

For group $i$, the action space is $\mathcal{P}_i$. An \emph{action profile} represents the set of actions chosen by all agents.
\begin{definition}
\label{def: action profile}
An \emph{action profile} $a$ is a measurable\footnote{The measurability condition dictates that $\{v \in V_i|a_{i, v} = P\}$ is $\mu_i$-measurable for any $i \in [k], P \in \mathcal{P}_i$, so all the relevant operations are well-defined.} function from $V$ to $\mathcal{P}$ such that $a((i, v)) \in \mathcal{P}_i , \forall i \in [k]$. Equivalently, we denote an action profile as $a :=(a_{i, v})_{i \in [k], v \in V_i}$ with $a_{i, v} := a((i, v))$.
\end{definition}

\begin{remark}
Based on \cite[Theorem 2]{schmeidler1973equilibrium}, we only need to consider pure strategies. Also, as is standard in nonatomic games \cite{schmeidler1973equilibrium}, we neglect the distinction between a measurable function and its equivalence class.
\end{remark}

With the action profile $a$, the induced flow $f$ is the pushforward measure of $\mu$ by $a$, given by $f_P = \mu(a^{-1}(P)), \forall P \in \mathcal{P}$. \cref{app: f and a} explores the relationship between the action profiles and feasible flows.

Each agent $(i,v)$ has a continuous \emph{utility function} $U_i: \mathbb{R}_+^2 \to \mathbb{R}$ that is strictly increasing in its first argument (the valuation of the chosen strategy) and strictly decreasing in its second argument (the congestion cost). The separable utility $U_i(x, y) = x-y$ is a special case (see \cref{sec: examples and applications}). Under an action profile $a$, the utility of agent $(i,v)$ is given by \eqref{eq: utility}, where the induced flow is denoted by $f$.
\begin{equation}
\label{eq: utility}
u_{i, v}(a_{i, v}, a_{-(i, v)}) = \begin{cases}
U_i(v_j, \ell_P(f)), &a_{i, v} = P \in \mathcal{P}_{i, j}, \; f_{e} \leq c_{e}, \forall e \in P \\
-\infty, &\text{otherwise}.
\end{cases}
\end{equation}

\eqref{eq: utility} captures two forms of congestion. The first is \emph{soft congestion}, modeled by the latency functions, which increase continuously with edge loads and represent congestion costs. The second is \emph{hard congestion}, modeled by edge capacities, which impose feasibility constraints that cannot be violated.\footnote{Equivalently, the value $-\infty$ may be replaced by any sufficiently negative constant. Since agents can always choose an alternative feasible action, no Nash equilibrium or social optimum assigns positive measure to infeasible strategies.} In applications, the latter captures limited infrastructure resources such as vertiport landing slots, electric vehicle charging ports, parking spaces, or airport runway capacities. 

\begin{definition}
An \emph{instance} is a seven-tuple $(E, \mathcal{P}, r, \ell, c, V, U)$ with each component as defined above.
\end{definition}

\subsection{Nash Equilibria and Social Optima}
In \cref{def: NE}, we adopt the standard notion of a Nash equilibrium for nonatomic games introduced by Schmeidler \cite{schmeidler1973equilibrium}, which is robust to deviations on sets of agents of measure zero. In \cref{app: different equilibrium notions}, we discuss how our definition relates to other equilibrium notions proposed in the literature. Likewise, our notion of a social optimum in \cref{def: so} is also invariant to changes on sets of agents of measure zero. With a slight abuse of terminology, we sometimes refer to the flow induced by a Nash equilibrium (respectively, a social optimum) simply as a Nash equilibrium (respectively, a social optimum).

\begin{definition}
\label{def: NE}
An action profile $a$ is a \emph{Nash equilibrium} if for a.e. $i \in [k]$ and $v \in V_i$, the following holds.\footnote{We abbreviate \emph{almost everywhere} by \emph{a.e.}}
\begin{equation}
\label{eq: NE def}
u_{i, v}(a_{i, v}, a_{-(i, v)}) \geq u_{i, v}(b, a_{-(i, v)}), \forall b \in \mathcal{P}_i .
\end{equation}
\end{definition}

\begin{definition}
Given an action profile $a$, the \emph{social welfare} is defined as 
\begin{equation}
SW(a) := \int_V u_{i, v}(a_{i, v}) d\mu ((i, v)) =  \sum_{i \in [k]} \int_{V_i} u_{i, v}(a_{i, v}) d \mu_i(v).
\end{equation}
\end{definition}

\begin{definition}
\label{def: so}
An action profile $a^*$ is a \emph{social optimum} if it maximizes the social welfare; that is, 
\begin{equation}
\label{eq: SW}
a^* \in \arg\max_{a} SW(a).
\end{equation}
\end{definition}

\section{Routing Games with Heterogeneous Valuations}
\label{sec: examples and applications}
In this section, we illustrate how our proposed model $(E, \mathcal{P}, r, \ell, c, V, U)$ generalizes classical routing games with variable demand \cite{10.5555/1109557.1109630}. We then present a simple two-destination example demonstrating the structural differences introduced by heterogeneous destination valuations. Additional one-destination examples are deferred to \cref{app: some examples}. In \cref{app: Other Applications}, we briefly discuss how our framework also applies to market-sharing games \cite{1626428} and congestible club goods \cite{81322e7f-62c2-3874-8382-6c3096359c71,2d696ccb-ad6d-384c-9b47-2eda4c556617,cornes1996theory}.

\subsection{Model}
Consider a directed network $G = (N, E)$ with vertex set $N$, edge set $E$, and $k$ source-set of destinations pairs $(s_1, D_1), ..., (s_k, D_k)$. We refer to each pair as a \emph{commodity}.\footnote{We follow the terminology in classical routing games \cite{10.1145/509907.509971,10.5555/1076293} that uses a \emph{commodity} to refer to a source-set of destinations pair, although they are not physical commodities being transported. Rather, they are \emph{agents}, which are typically humans or algorithms that take actions to maximize utilities.} For each commodity $i$, $s_i \in N$ is the source and $D_i$ is a finite index set of destinations. Each destination index $j \in D_i$ maps to a distinct destination vertex $t_{i, j} \in N$. We denote the set of simple paths from $s_i$ to destination $t_{i, j}$ by $\mathcal{P}_{i, j}$, which is assumed to be nonempty. The rate $r_i > 0$ represents the total mass of agents in commodity $i$, and the valuation space $V_i$ describes agents' valuations over destinations in $D_i$.

Finally, each edge $e\in E$ is given a load-dependent latency function that we denote by $\ell_e$. For
each $e\in E$, we assume that the latency function $\ell_e$ is nonnegative, nondecreasing, differentiable, and standard.\footnote{According to Roughgarden \cite{10.1145/509907.509971}, a latency function $\ell$ is standard if $x\cdot \ell(x)$ is convex on $\mathbb{R}_+$.} Also, we let $\hat{\ell}_e$ denote its \emph{marginal-cost function}, defined by $\hat{\ell}_e(x) = \frac{d}{dy}(y \ell_e(y))|_{y=x}$ for all $x \in \mathbb{R}_+$. We consider the special case where $U_i$ is separable; specifically, we assume that $U_i(v_j, \ell_p(f)) = v_j - \ell_P(f)$. Following the standard notation for routing games \cite{10.1145/506147.506153,10.5555/1076293}, we use $(G, r, \ell, c, V)$ to denote such a routing instance as described above.

\subsection{A Two-Destination Example}
We next consider the simplest setting in which agents choose among multiple destinations. Unlike the one-destination case, the valuation space is now multidimensional, and Nash equilibria and social optima give rise to nontrivial partitions of the space. 
\begin{figure}[]
    \centering
    \begin{subfigure}[t]{0.45\textwidth}
        \centering
        \begin{tikzpicture}[->, thick, node distance=2.5cm]
            \node[circle, draw, minimum size=0.7cm, inner sep=1pt] (A) {$s_1$};
            \node[circle, draw, minimum size=0.7cm, inner sep=1pt, right=of A, yshift=1cm] (B) {$t_{1, 1}$};
            \node[circle, draw, minimum size=0.7cm, inner sep=1pt, right=of A, yshift=-1cm] (C) {$t_{1, 2}$};

            \node[left=2mm of A] {$r_1=9$};

            \draw (A) -- node[midway, above, xshift=-.5cm] {$\begin{aligned}
                &c_1 = \infty\\
                &\ell_1(x) = 6x+33
            \end{aligned}$} (B);
            \draw (A) -- node[midway, below, xshift=-.5cm] {$\begin{aligned}
                &c_2 = \infty\\
                &\ell_2(x) = 4x
            \end{aligned}$} (C);
        \end{tikzpicture}
        \caption{The single-commodity network used in \cref{ex: two 1}, where one source $s_1$ is connected to two destinations $t_{1, 1}$ and $t_{1, 2}$.}
        \label{fig:two-link}
    \end{subfigure}
    \hfill
    \begin{subfigure}[t]{0.48\textwidth}
        \centering
        \begin{tikzpicture}
        \begin{axis}[
            width=0.8\linewidth,
            height=0.8\linewidth,
            axis lines = left,
            xlabel = \(v_1\),
            xlabel style={at={(ticklabel* cs:1)}, anchor=west},
            ylabel = \(v_2\),
            xlabel style={at={(ticklabel* cs:1)}, anchor=west},
            ylabel = \(v_2\),
            xmin = 0, xmax = 4.5, 
            ymin = 0, ymax = 4.5, 
            xtick = {0, 1, 2, 3, 4}, 
            ytick = {0, 1, 2, 3, 4}  
        ]
        \addplot[black, thick] coordinates {(0,3) (3,3)};
        \addplot[black, thick] coordinates {(3,0) (3,3)};
        \addplot[black, thick] coordinates {(2,0) (3,1)};
        \addplot[only marks, mark=*] coordinates {(2,0)};

        \node at (axis cs:2,0.2) [anchor=south] {$\tau_i$};
        \node at (axis cs:3.8,0.5) {$V_{i,1}(\tau_i)$};
        \draw[->][black, thick] (axis cs:3.1,0.5) -- (axis cs:2.65,0.35);

        \node at (axis cs:1.4,1.5) {$V_{i,2}(\tau_i)$};
        \end{axis}
        \end{tikzpicture}
        \caption{The valuation space $V_i = [0, 3]^2$, threshold vector $\tau_i = (1, 0)$, and action profile (depicted by $V_{i, j}(\tau_i)$ for $j \in \{1, 2\}$) for commodity $i$.}
        \label{fig: threshold}
    \end{subfigure}
    \caption{A two-destination example and the induced partition of the valuation space.}
\end{figure}

\begin{example}
\label{ex: two 1}
Consider the unconstrained ($c=\infty$) single-commodity network in \cref{fig:two-link}, where a single source $s_1$ is connected to two destinations, indexed by the set $D_1 = \{1, 2\}$ (corresponding to nodes $t_{1, 1}$ and $t_{1, 2}$, respectively). We assume a total rate of $r_1 = 9$ and let the valuation space be $V_1 = [0, 3]^2$ endowed with the Lebesgue measure. 

At the Nash equilibrium, the valuation space is partitioned into two regions according to the vector $\tau_1 = (2, 0)$, as illustrated in \cref{fig: threshold}.
\begin{enumerate}
    \item $V_{1, 1}(\tau_1)$: Agents satisfying  $v_1 - 2 \geq v_2$ travel to destination $t_{1, 1}$.
    \item $V_{1, 2}(\tau_1)$: Agents satisfying $v_1 - 2  < v_2$ travel to destination $t_{1, 2}$.
\end{enumerate}
In \cref{subsec: characterization of nash}, we will formally define the partition $\{V_{i, j}(\tau_i)\}_{j \in D_i}$ and show how to compute one. Here, we simply verify that the above partition constitutes a Nash equilibrium. Indeed, the flows $f_1 = 0.5$ and $f_2 = 8.5$ yield $\ell_1(f_1) = 36$ and $\ell_2(f_2) = 34$, so every agent is assigned to the destination that yields the larger utility under the above partition. 

The social optimum is obtained analogously by replacing the latency functions with their marginal-cost counterparts $\hat{\ell}_1(x) = 12x + 33$ and $\hat{\ell}_2(x) = 8x$. The resulting partition has the same structure as in the Nash equilibrium but with a different threshold at $\tau_1 = (1, 0)$. Under this partition, $f_1 = 2$ and $f_2 = 7$, yielding marginal costs $\hat{\ell}_1(f_1) = 57$ and $\hat{\ell}_2(f_2) = 56$. Hence, every agent is assigned to the destination that maximizes $v_j - \hat{\ell}_j(f_j)$.

This example cannot be reduced to a single-destination routing game. One might attempt to introduce a pseudo-destination $T$ and connect both $t_{1, 1}$ and $t_{1, 2}$ to $T$ via zero-latency edges. However, in the resulting network, both paths to $T$ would have identical equilibrium latency, whereas the original network admits different equilibrium latencies on the two paths. More fundamentally, equilibria and social optima are characterized by a partition of the two-dimensional valuation space, a feature that has no analogue in classical single-destination routing games. As we show later in \cref{sec: Characterizations}, this partition arises from a collection of threshold vectors and dual potentials, and is precisely the type of partition encountered in semi-discrete optimal transport.
\end{example}

\section{Characterizations of Nash Equilibria and Social Optima}
\label{sec: Characterizations}
In this section, we show that, although Nash equilibria and social optima are defined in terms of infinite-dimensional action profiles, both admit finite-dimensional characterizations, as observed in \cref{ex: two 1}. Specifically, a Nash equilibrium is completely determined by the aggregate flow together with one threshold vector for each commodity, while a social optimum requires an additional characterization of how flows distribute to different strategies for the same destination. 

The main difficulty stems from heterogeneous multidimensional valuations. In classical congestion games, where agents are homogeneous, the aggregate flow alone suffices to characterize both Nash equilibria and social optima \cite{ROUGHGARDEN2004389}. There is no need to specify which agents use which strategies. Under heterogeneous valuations, however, the aggregate flow no longer determines the assignment of agents to destinations, and one must work with the infinite-dimensional action profile.

In classical routing games with elastic demand, each commodity has only one destination ($|D_i|=1$), so the assignment is completely determined by a threshold on the one-dimensional valuation space \cite{10.5555/1109557.1109630}. With multiple destinations, this total-order structure on the real line disappears, and it is not obvious a priori that an analogous finite-dimensional characterization exists. The goal of this section is to show that such a characterization indeed exists and naturally generalizes the one-dimensional threshold structure.

\subsection{Characterization of Nash Equilibria}
\label{subsec: characterization of nash}
We begin by analyzing the existence of Nash equilibria of our model. The inclusion of hard capacity constraints introduces discontinuities into the agents' utility functions with respect to the aggregate action profile. Consequently, a Nash equilibrium is not guaranteed to exist in the general capacitated setting; we provide a counterexample demonstrating this non-existence in \cref{subsec: ex 3}. 

However, when capacities are unbinding ($c = \infty$), the game preserves the necessary continuity properties, allowing us to establish existence via Schmeidler's theorem for nonatomic games \cite[Theorem 1]{schmeidler1973equilibrium}.

\begin{proposition}
\label{prop: ne exists}
Given an instance $(E, \mathcal{P}, r, \ell, c, V, U)$ with $c = \infty$, a Nash equilibrium exists.
\end{proposition}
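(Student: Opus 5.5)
The plan is to invoke Schmeidler's theorem \cite[Theorem 1]{schmeidler1973equilibrium}. It guarantees a pure-strategy equilibrium in a nonatomic game with finitely many actions under one condition. Each player's payoff must depend on her own action and on the aggregate action profile, here the induced flow $f=(f_P)_{P\in\mathcal{P}} \in \mathbb{R}_+^{|\mathcal{P}|}$. That dependence must be continuous in the aggregate and measurable in the player's identity. So the proof reduces to putting our game in this form and checking these hypotheses once $c=\infty$.

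\textbf{Step 1 (reduction to finite actions and an aggregate).} The player space is $(V,\mathcal{B}_V,\mu)$, a finite atomless measure space, since each $\mu_i$ is absolutely continuous. The normalization $\mu(V)=\sum_i r_i$ is harmless after rescaling. Each player $(i,v)$ has the finite action set $\mathcal{P}_i$. To match Schmeidler's setting of a common action set, take the common set to be $\mathcal{P}$. Extend utilities by assigning $U_i(0,\cdot)$ minus a large constant, or simply $-\infty$ replaced by a sufficiently negative constant as in the paper's footnote, to actions $P\notin\mathcal{P}_i$. Such actions are strictly dominated, because some action in $\mathcal{P}_i$ always yields a finite value. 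An equilibrium of the extended game therefore uses only actions in $\mathcal{P}_i$ for a.e.\ $(i,v)$. Alternatively, one applies Schmeidler's result directly to the disjoint union with group-dependent action sets, since his proof handles finitely many types of action sets.

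\textbf{Step 2 (continuity and measurability).} With $c=\infty$, the capacity clause in \eqref{eq: utility} is vacuous for every feasible flow, so the payoff is $u_{i,v}(P,f)=U_i(v_j,\ell_P(f))$ for $P\in\mathcal{P}_{i,j}$. The map $f\mapsto f_e=\sum_{P\ni e}f_P$ is linear, each $\ell_e$ is continuous, and hence $\ell_P(f)=\sum_{e\in P}\ell_e(f_e)$ is continuous in $f$. Since $U_i$ is continuous, $u_{i,v}(P,\cdot)$ is continuous on the compact set of aggregate flows. For fixed $P$ and $f$, the map $v\mapsto U_i(v_j,\ell_P(f))$ is continuous on $V_i$, hence Borel. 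Joint measurability in $(v,f)$ then follows from the Carath\'eodory property.

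\textbf{Step 3 (conclusion).} Schmeidler's theorem yields a measurable $a:V\to\mathcal{P}$ such that, for $\mu$-a.e.\ $(i,v)$, $a_{i,v}$ maximizes $u_{i,v}(\cdot,f)$, where $f$ is the pushforward flow. This is exactly \cref{def: NE}. The step I expect to need the most care is Step 1: matching the formal hypotheses of Schmeidler's theorem, namely a common finite action set and a normalized atomless player space. The substantive point is that $c=\infty$ is exactly what removes the discontinuity that \cref{subsec: ex 3} exhibits in the capacitated case.
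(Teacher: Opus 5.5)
Your proposal is correct and follows essentially the same route as the paper: both apply Schmeidler's pure-strategy existence result for nonatomic games whose payoffs depend on others only through the aggregate flow, after checking continuity in the aggregate (which is exactly where $c=\infty$ enters) and measurability in the agent's type. Two small differences: the pure-strategy, aggregate-dependence statement you describe is Schmeidler's Theorem~2 (the one the paper's appendix proof invokes), not Theorem~1; and your handling of group-dependent action sets via strictly dominated dummy actions is more careful than the paper, which simply takes $\mathcal{P}$ as the common action space.
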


Now, we give the characterization of a Nash equilibrium. We first introduce some notation. Given a threshold vector $\tau_i \in \mathbb{R}^{|D_i|}$, we define the following a.e.-partition $\{V_{i, j}(\tau_i)\}_{j \in D_i}$ for $V_i$, where some sets $V_{i, j}(\tau_i)$ may be empty.\footnote{\label{footnote: disjoint}To see that the sets $V_{i, j}(\tau_i)$ are a.e.-disjoint, consider the indifference set $S = \{v \in V_i \mid U_i(v_{j_1}, \tau_{i, j_1}) = U_i(v_{j_2}, \tau_{i, j_2})\}$ for any distinct $j_1, j_2 \in D_i$. If we fix all coordinates of $v$ except for $v_{j_2}$, strict monotonicity of $U_i$ in its first argument implies that there is at most one value of $v_{j_2}$ satisfying the equality. Thus, the Lebesgue measure of $S$ is zero. Since $\mu_i$ is absolutely continuous, $S$ is also $\mu_i$-null.} \cref{fig: threshold} illustrates one example.

\begin{equation}
\label{eq: threshold}
V_{i, j}(\tau_i) := \{v \in V_i|j \in \arg\max_{\hat{j}} U_i(v_{\hat{j}}, \tau_{i, \hat{j}})\}.
\end{equation}

The partition $\{V_{i, j}(\tau_i)\}_{j \in  D_i}$ captures agents' destination choices. When an agent $(i, v)$ selects destination $t_{i, \hat{j}}$, her utility is $U_i(v_{\hat{j}}, \tau_{i, \hat{j}})$, and she chooses a destination that maximizes her utility, i.e., $\arg\max_{\hat{j}} U_i(v_{\hat{j}}, \tau_{i, \hat{j}})$. The observation motivates the following definition.
\begin{definition}
An action profile $a$ is \emph{threshold-based} if for any group $i \in [k]$, there exists a threshold vector $\tau_i \in \mathbb{R}^{|D_i|}$ such that the following holds a.e.
\begin{equation}
\label{eq: mono}
a_{i, v} \in \mathcal{P}_{i, j} \iff v \in V_{i, j}(\tau_i).
\end{equation}
\end{definition}

The following proposition characterizes agents' strategies in a Nash equilibrium in terms of the aggregate flow $f$ and threshold vectors $\tau_i$.
\begin{proposition}
\label{prop: ne}
Given an instance $(E, \mathcal{P}, r, \ell, c, V, U)$, a Nash equilibrium is threshold-based. Moreover, a feasible flow $f$ is induced by a Nash equilibrium if and only if the following hold for every group $i$.
\begin{enumerate}
    \item $f_e \leq c_e, \forall e \in E$.
    \item For any strategies $P, Q \in \mathcal{P}_{i, j}$ with $f_P > 0$, $\ell_P(f) \leq \ell_Q(f)$.
    \item $\sum_{P \in \mathcal{P}_{i, j}}f_P = \mu_i(V_{i, j}(\tau_i)), \forall j \in D_i$, where $\tau_i$ is given by
    \begin{equation}
    \label{eq: tau}
    \tau_i = (\min_{P \in \mathcal{P}_{i, 1}} \ell_P(f), \min_{P \in \mathcal{P}_{i, 2}} \ell_P(f), ..., \min_{P \in \mathcal{P}_{i, |D_i|}} \ell_P(f)).
    \end{equation}
\end{enumerate}
\end{proposition}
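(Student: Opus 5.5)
The plan is to fix a Nash equilibrium $a$ with induced flow $f$ and to define $\tau_i$ by \eqref{eq: tau}. The whole argument reduces to the observation that, once $f$ is fixed, a nonatomic agent cannot change $f$ by deviating, so her decision is a decision problem with latencies held fixed. The one subtlety is the capacity clause in \eqref{eq: utility}. If $f_e > c_e$ for some $e$ with $f_e>0$, then a positive mass of agents receives $-\infty$. Each group has an outside option made of infinite-capacity elements, whose utility is finite, so these agents would strictly gain by deviating. This contradicts \cref{def: NE}, and gives condition 1.

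Given condition 1, every strategy $Q$ is a feasible deviation. We check this using \eqref{eq: utility} at the unchanged flow $f$, since a single agent has measure zero. For a.e.\ $(i,v)$ with $a_{i,v}=P\in\mathcal{P}_{i,j}$, the Nash condition says two things. Within destination $j$, we need $\ell_P(f)\le \ell_Q(f)$ for all $Q\in\mathcal{P}_{i,j}$, because $U_i$ is strictly decreasing in its second argument. If $f_P>0$, the set of such agents has positive measure, so at least one of them satisfies the inequality, and condition 2 follows. Across destinations, we need $U_i(v_j,\tau_{i,j})\ge U_i(v_{\hat j},\tau_{i,\hat j})$ for all $\hat j$. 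Hence $v\in V_{i,j}(\tau_i)$ a.e., which is the forward implication of \eqref{eq: mono}. For the reverse implication I will use footnote~\ref{footnote: disjoint}: the sets $V_{i,j}(\tau_i)$ are a.e.\ disjoint, and the sets $a^{-1}(\mathcal{P}_{i,j})$ partition $V_i$. It follows that $a_{i,v}\in\mathcal{P}_{i,j}\iff v\in V_{i,j}(\tau_i)$ a.e. So $a$ is threshold-based, and integrating gives condition 3, namely $\sum_{P\in\mathcal{P}_{i,j}}f_P=\mu(a^{-1}(\mathcal{P}_{i,j}))=\mu_i(V_{i,j}(\tau_i))$.

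For the converse, take a feasible $f$ satisfying conditions 1--3. The main work is constructing a measurable action profile that induces $f$ and is a Nash equilibrium. For each $i$ and $j$, condition 3 says that the mass $\mu_i(V_{i,j}(\tau_i))$ equals the total flow $\sum_{P\in\mathcal{P}_{i,j}}f_P$. Since $\mu_i$ is absolutely continuous and hence atomless, we can split $V_{i,j}(\tau_i)$ into measurable pieces $A_P$, one for each $P\in\mathcal{P}_{i,j}$, with $\mu_i(A_P)=f_P$. One concrete way is to slice by the first coordinate, using continuity of $t\mapsto\mu_i(V_{i,j}(\tau_i)\cap\{v_1\le t\})$ and the intermediate value theorem; \cref{app: f and a} presumably covers this. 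Set $a_{i,v}=P$ on $A_P$, and fix the null overlaps arbitrarily. Then $a$ induces $f$.

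It remains to verify the Nash condition for a.e.\ agent, at the fixed flow $f$. Condition 1 makes every assigned strategy feasible. A strategy that could violate a capacity gives a deviation worth $-\infty$, which is never profitable; any other deviation $Q\in\mathcal{P}_{i,\hat j}$ yields $U_i(v_{\hat j},\ell_Q(f))\le U_i(v_{\hat j},\tau_{i,\hat j})$. For an agent on $P\in\mathcal{P}_{i,j}$ with $f_P>0$, condition 2 gives $\ell_P(f)=\tau_{i,j}$. Strategies with $f_P=0$ carry only a null set of agents and can be ignored. Combining this with $v\in V_{i,j}(\tau_i)$ gives $U_i(v_j,\ell_P(f))=\max_{\hat j}U_i(v_{\hat j},\tau_{i,\hat j})$, which is at least the utility of any deviation. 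I expect the main obstacle to be the measure-theoretic bookkeeping: handling deviations by null sets, the $-\infty$ capacity case, and the measurable splitting. The economic content itself is straightforward.
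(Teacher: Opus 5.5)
Your proposal is correct and follows essentially the same route as the paper. The forward direction uses single-agent deviations at the fixed flow $f$: the outside option gives condition 1, a within-destination switch gives condition 2, and a cross-destination comparison gives the threshold structure and condition 3; the converse splits each $V_{i,j}(\tau_i)$ into pieces of mass $f_P$ by nonatomicity and then checks the Nash condition directly, and you are somewhat more careful than the paper in deriving the reverse inclusion from a.e.-disjointness and in discarding the null set of agents on strategies with $f_P=0$.
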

The three conditions serve different roles. First, the flow respects the capacity constraints. Second, conditional on selecting the same destination, agents only choose strategies with the minimum latency. Finally, the threshold structure means that agents choose the destination that gives the highest utility. Specifically, when isolating the choice between any two destinations $j_1, j_2 \in D_i$, their corresponding two-dimensional valuation plane is partitioned into two regions by the indifference curve $U_{i}(v_{j_1}, \tau_{j_1}) = U_i(v_{j_2}, \tau_{j_2})$ (see \cref{fig: threshold}).

\subsection{Characterization of Social Optima}
Similar to Proposition \ref{prop: ne}, we seek a finite-dimensional characterization of a social optimum. We begin by decomposing the social welfare maximization problem into an upper-level optimization over aggregate flows and a lower-level assignment problem. The lower-level problem is a semi-discrete optimal transport problem, so we can use optimal transport duality to obtain a finite-dimensional characterization of social optima.

\begin{proposition}
\label{prop: max sw 0}
Given an instance $(E, \mathcal{P}, r, \ell, c, V, U)$, the social welfare optimization problem \eqref{eq: SW} is equivalent to the following.
\begin{subequations}
\label{eq: upper}
\begin{align}
\label{eq: upper obj}
\sup_{f} \quad  &\sum_{i \in [k]} \mathcal{S}_i(f) \\
\label{eq: upper con1}
\ \ \text{s.t. } \quad &f_P \geq 0, \forall P \in \mathcal{P}\\
\label{eq: upper con2}
&\sum_{P \in \mathcal{P}_i} f_P = r_i, \forall i \in [k]\\
\label{eq: upper con3}
&c_e \geq f_e, \forall e\in E,
\end{align}
\end{subequations}
where $\mathcal{S}_i(f)$ is given by\footnote{Note that $\mathcal{S}_i(f)$ depends only on $\{f_P | P \in \mathcal{P}_i\}$, rather than on the entire flow $f$. However, for notational simplicity, we denote it by $\mathcal{S}_i(f)$.}
\begin{subequations}
\label{eq: lower 0}
\begin{align}
\mathcal{S}_i(f) := \sup_{T\text{ measurable}} \quad  &\sum_{j \in D_i}\sum_{P \in \mathcal{P}_{i, j}} \int_{T^{-1}(P)} U_i(v_j, \ell_P(f))d\mu_i(v) \\
\ \ \text{s.t. } \quad &\mu_i(T^{-1}(P)) = f_P, \forall P \in \mathcal{P}_i.
\end{align}
\end{subequations}
\end{proposition}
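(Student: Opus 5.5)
The plan is to prove equivalence by comparing values: the supremum of $SW(a)$ over action profiles equals the value of \eqref{eq: upper}, and there is an explicit correspondence between maximizers. The key observation is that an action profile $a$ can be split into two pieces of data. The first is its induced flow $f_P=\mu(a^{-1}(P))$. The second is, for each group $i$, the restriction $T_i := a(i,\cdot)\colon V_i\to\mathcal{P}_i$, which is measurable and satisfies $\mu_i(T_i^{-1}(P))=f_P$. Conversely, any feasible $f$ together with measurable maps $T_i$ satisfying these marginal constraints glues into an action profile $a((i,v)):=T_i(v)$, and the flow induced by this $a$ is exactly $f$. Constraints \eqref{eq: upper con1}--\eqref{eq: upper con2} hold automatically, since $\mu_i(V_i)=r_i$ and the preimages $T_i^{-1}(P)$ partition $V_i$.

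First I will treat welfare for profiles whose flow satisfies \eqref{eq: upper con3}. Since the latencies $\ell_P(f)$ depend on $a$ only through $f$, we have, for every $v\in T_i^{-1}(P)$ with $P\in\mathcal{P}_{i,j}$, $u_{i,v}(a_{i,v})=U_i(v_j,\ell_P(f))$. This uses the fact that $f_e\le c_e$ for all $e$, so the $-\infty$ branch never occurs. Splitting $V_i$ into the sets $T_i^{-1}(P)$ then gives
\[
SW(a)=\sum_{i\in[k]}\sum_{j\in D_i}\sum_{P\in\mathcal{P}_{i,j}}\int_{T_i^{-1}(P)}U_i(v_j,\ell_P(f))\,d\mu_i(v).
\]
Each integrand is bounded because $U_i$ is continuous on a compact set ($v_j\in[0,\bar V]$ and $\ell_P(f)$ is fixed), so all integrals are finite. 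Consequently, for fixed $f$, maximizing $SW$ over profiles inducing $f$ amounts to maximizing each group's term separately. The maps $T_i$ are independent across groups, so the supremum is $\sum_i\mathcal{S}_i(f)$. Taking the supremum over $f$ then shows that $\sup SW$ restricted to capacity-respecting profiles equals the value of \eqref{eq: upper}.

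Second, I will handle profiles violating capacity. If some $f_e>c_e$, then $f_e>0$, so some strategy $P\ni e$ has $f_P>0$. Every agent using $P$ receives $-\infty$ on a set of positive measure, so $SW(a)=-\infty$. The reference to \cref{app: f and a} and the outside-option assumption guarantee that at least one capacity-respecting profile exists: route all of group $i$ onto its infinite-capacity strategy. That profile has finite welfare, so violating profiles are never optimal and do not change the supremum. Hence the two problems have equal values. Moreover, $a^*$ is a social optimum if and only if its flow $f^*$ is optimal in \eqref{eq: upper} and each $T_i^*$ attains $\mathcal{S}_i(f^*)$; I will state this as the precise meaning of ``equivalent''.

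The step I expect to be most delicate is measure-theoretic. It consists of checking that the gluing and restriction maps are mutually inverse bijections between action profiles and families $(f,(T_i)_i)$ satisfying the marginal constraints, up to null sets, which is consistent with the paper's convention of identifying measurable functions with equivalence classes. I would deal with it directly from the definition of $\mathcal{B}_V$ as disjoint unions of Borel sets, because then measurability of $a$ is equivalent to measurability of each $T_i$. No existence of maximizers is needed, since the claim concerns equality of suprema together with the correspondence of maximizers whenever they exist.
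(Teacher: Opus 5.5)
Your proposal is correct and follows the same route as the paper's proof. Fix the induced flow $f$, note that the latencies (and hence the integrands) are then fixed, and identify the inner maximization over profiles inducing $f$ with the per-group problems $\mathcal{S}_i(f)$ via $T_i(v)=a((i,v))$. You are more careful than the paper on points it leaves implicit: capacity-violating profiles have $SW=-\infty$, the outside option gives a finite-welfare feasible profile, and the supremum splits across groups because the $T_i$ are coupled only through the fixed $f$.
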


We next characterize the lower-level assignment problem \eqref{eq: lower 0} through its dual formulation. Analogous to the threshold characterization of Nash equilibria, a social optimum is determined by a collection of finite-dimensional vectors. Unlike a Nash equilibrium, where all active strategies for a given destination have identical latency, a social optimum may distribute agents for the same destination across strategies with different latencies. Thus, these vectors, called \emph{dual potentials}, are indexed by individual strategies rather than destinations. Formally, for any $q \in \mathbb{R}^{|\mathcal{P}_i|}$, we define
\begin{equation}
\label{eq: Vip}
V_{i, P}(q) := \{v \in V_i \mid P \in \arg\max_{\hat{P} \in \mathcal{P}_{i}}\left(U_i(v_{\hat{j}}, \ell_{\hat{P}}(f)) - q_{\hat{P}}\right)\},
\end{equation}
where $\hat{j}$ is the destination associated with strategy $\hat{P}$ (i.e., $\hat{P} \in \mathcal{P}_{i, \hat{j}}$). Note that, unlike in \eqref{eq: threshold}, where the sets $V_{i, j}(\tau_i)$ are a.e.-disjoint, the sets $V_{i, P}(q)$ may overlap on sets with positive measure---for example, when two distinct strategies $P, Q \in \mathcal{P}_{i, j}$ for the same destination share identical latencies and dual potentials.

\begin{remark}
Although both $\tau$ and $q$ induce partitions of the valuation space, they have different origins. The vector $\tau$ parametrizes the equilibrium threshold partition, whereas $q$ is the optimizer of the dual optimal transport problem and therefore corresponds to the Kantorovich dual potential.
\end{remark}

\begin{proposition}
\label{prop: kantorovich 0}
A solution to the optimization problem \eqref{eq: lower 0} exists, and any optimal transport map $T$ satisfies the following a.e.
\begin{equation}
T(v) \in \arg\max_{\hat{P} \in \mathcal{P}_i} U_i(v_{\hat{j}}, \ell_{\hat{P}}(f)) - q_{\hat{P}},
\end{equation}
where $\hat{j}$ is the destination associated with $\hat{P}$, and $q \in \mathbb{R}^{|\mathcal{P}_i|}$ is an optimal solution to the dual problem 
\begin{equation}
\label{eq: dual phi 0}
\min_{w \in \mathbb{R}^{\mathcal{P}_i}} \underbrace{\int_{V_i} \max_{\hat{P} \in \mathcal{P}_i} \left(U_i(v_{\hat{j}}, \ell_{\hat{P}}(f)) - w_{\hat{P}} \right) d\mu_i(v) + \sum_{P \in \mathcal{P}_i} w_P f_P}_{=: \Phi_i(w, f)}.
\end{equation}
The optimal value can be evaluated via the induced transport map $T$ as
\begin{equation}
\label{eq: solution t 0}
\mathcal{S}_i(f) = \sum_{j \in D_i}\sum_{P \in \mathcal{P}_{i, j}} \int_{T^{-1}(P)} U_i(v_j, \ell_P(f)) d\mu_i(v).
\end{equation}
Moreover, for all $P \in \mathcal{P}_i$, $T^{-1}(P) \subseteq V_{i, P}(q)$ up to a $\mu_i$-null set. Also, there exists some $K > 0$ such that $q_P \in [0, 2K], \forall P \in \mathcal{P}_i$. 
\end{proposition}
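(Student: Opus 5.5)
The plan is to treat \eqref{eq: lower 0} as a semi-discrete optimal transport problem. The source is the finite measure $\mu_i$ on the compact set $V_i$, and the target is the discrete measure $\sum_{P\in\mathcal{P}_i} f_P\delta_P$ on the finite set $\mathcal{P}_i$. The reward is $c(v,P):=U_i(v_j,\ell_P(f))$ for $P\in\mathcal{P}_{i,j}$, and it is continuous and bounded on $V_i\times\mathcal{P}_i$. We may assume $\sum_P f_P=r_i=\mu_i(V_i)$, since otherwise the problem is infeasible.

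\textbf{Step 1: existence and duality for the relaxed problem.} First I relax to the Kantorovich problem over couplings $\gamma$ of $\mu_i$ and $\sum_P f_P\delta_P$. This problem has a maximizer, because the set of couplings is weakly compact and the reward is continuous and bounded. Kantorovich duality (as in \cite{villani2008optimal}, with finite marginals rescaled by $r_i$) then gives
\[
\max_\gamma \int c\,d\gamma \;=\; \inf_{\phi,w}\Big\{\int\phi\,d\mu_i+\sum_P w_Pf_P : \phi(v)+w_P\ge c(v,P)\Big\}.
\]
For fixed $w$, the best choice is $\phi(v)=\max_{\hat P}(c(v,\hat P)-w_{\hat P})$, so the dual reduces to minimizing $\Phi_i(w,f)$ over $w\in\mathbb{R}^{\mathcal{P}_i}$.

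\textbf{Step 2: existence of $q$ and the bound $q_P\in[0,2K]$.} Because $\sum_P f_P=\mu_i(V_i)$, the function $\Phi_i$ is invariant under $w\mapsto w+t\mathbf 1$, so I can normalize $\min_P w_P=0$. Let $K:=\sup|c|$, which is finite by continuity on a compact set. Suppose some $w_P>2K$. Then $c(v,P)-w_P<-K\le c(v,P')-w_{P'}$ at the coordinate $P'$ with $w_{P'}=0$. Hence the entry $P$ is never active in the max, and lowering $w_P$ to $2K$ leaves the integral unchanged while not increasing $\sum_P w_Pf_P$. So we may restrict to $w\in[0,2K]^{\mathcal{P}_i}$. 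On this compact set $\Phi_i$ is continuous (it is convex and finite), so a minimizer $q$ exists.

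\textbf{Step 3: support of optimal plans and plans versus maps.} By complementary slackness, any optimal $\gamma$ is concentrated on $\{(v,P):\phi_q(v)+q_P=c(v,P)\}$, which is exactly $\{v\in V_{i,P}(q)\}$. Conversely, I need optimal plans to be realizable by a map $T$ so that $\mathcal{S}_i(f)$ equals the Kantorovich value. This is the step I expect to be the main obstacle, because ties among strategies in $V_{i,P}(q)$ can have positive measure, as noted after \eqref{eq: Vip}.

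To handle it, I split the tie regions measurably. Distinct destinations tie only on null sets, by the argument in footnote \ref{footnote: disjoint}. Within one destination $j$, on each region where several $P$ are simultaneously optimal, the plan's conditional weights can be realized by a measurable partition. This works because $\mu_i$ is atomless, so Lyapunov's theorem, or simply slicing along a coordinate, lets us produce sets of any prescribed masses.

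Therefore the map problem and the plan problem have the same value, the supremum in \eqref{eq: lower 0} is attained, and \eqref{eq: solution t 0} holds. Every optimal map $T$ induces an optimal plan $(\mathrm{id},T)_\#\mu_i$. Step 3 then gives $T^{-1}(P)\subseteq V_{i,P}(q)$ up to null sets, which is the same statement as $T(v)\in\arg\max_{\hat P}(c(v,\hat P)-q_{\hat P})$ a.e.
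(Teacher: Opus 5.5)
Your proposal is correct and follows essentially the same route as the paper: Kantorovich duality, reduction of the dual to $\Phi_i$ via $\phi(v)=\max_{\hat P}(U_i(v_{\hat j},\ell_{\hat P}(f))-w_{\hat P})$, and complementary slackness to concentrate optimal plans on the argmax graph. Like the paper, you then use an atomless splitting (Lyapunov-type) argument to build a map with the prescribed masses $f_P$, and the translation-invariance plus capping argument to get $q\in[0,2K]^{\mathcal{P}_i}$. The differences are minor: you obtain dual attainment directly from compactness of the capped box rather than citing it, and you correctly construct an optimal map with the same masses rather than asserting, as the paper's wording does, that every optimal plan is map-induced. One small fix: take $K>0$ to be a bound on $|U_i|$ that is uniform over all feasible flows (possible by continuity on compact sets), since the later definition of $\Lambda_i(f)$ uses a single box $[0,2K]$.
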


\begin{remark}
\cref{prop: ne exists,prop: max sw 0} do not rely on any regularity assumptions on the latency functions. \cref{prop: kantorovich 0} additionally requires $U_i$ to be continuous so that the dual problem admits an optimizer on a compact domain and the semi-discrete optimal transport characterization applies.
\end{remark}

Given a flow $f$, let $\Lambda_i(f)$ denote the set of optimal dual potentials solving \eqref{eq: dual phi 0}:
\begin{equation}
\Lambda_i(f) = \arg\min_{q \in [0, 2K]^{|\mathcal{P}_i|}} \Phi_i(q, f).
\end{equation}
\begin{corollary}
\label{cor: exist opt}
Given an instance $(E, \mathcal{P}, r, \ell, c, V, U)$, a social optimum exists and the following hold:
\begin{enumerate}
    \item The value function $f \mapsto \min_{q \in [0, 2K]^{|\mathcal{P}_i|}}\Phi_i(q, f)$ is continuous.
    \item The correspondence $f \mapsto \Lambda_i(f)$ is upper hemicontinuous with nonempty, compact values.
    \item For every feasible flow $f$, the set $\Lambda_i(f)$ is a complete lattice under the product order.
\end{enumerate}
\end{corollary}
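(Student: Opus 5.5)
The plan is to treat $\Phi_i(q,f)$ as a jointly continuous function on the compact set $[0,2K]^{|\mathcal{P}_i|}\times F$, where $F$ is the set of feasible flows satisfying \eqref{eq: upper con1}--\eqref{eq: upper con3}. Then Berge's maximum theorem gives items 1 and 2, and Weierstrass applied to the upper-level problem \eqref{eq: upper} gives existence of a social optimum. The set $F$ is a polytope: it is nonempty by the outside-option assumption, and it is bounded because $0\le f_P\le r_i$.

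First, joint continuity. The term $\sum_P w_P f_P$ is bilinear. For the integral term, the integrand
\[
g(v,w,f)=\max_{\hat P\in\mathcal{P}_i}\bigl(U_i(v_{\hat j},\ell_{\hat P}(f))-w_{\hat P}\bigr)
\]
is continuous in $(w,f)$ for each fixed $v$, since $U_i$ and the $\ell_e$ are continuous, the $f_e$ are linear in $f$, and a finite max preserves continuity. It is also uniformly bounded on $V_i\times[0,2K]^{|\mathcal{P}_i|}\times F$: $v$ ranges over $[0,\bar V]^{|D_i|}$, and the latencies are bounded on the compact $F$, so $U_i$ is evaluated on a compact set. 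Since $\mu_i$ is finite, dominated convergence gives continuity of $(w,f)\mapsto\Phi_i(w,f)$.

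By \cref{prop: kantorovich 0}, restricting the minimization in \eqref{eq: dual phi 0} to the compact cube $[0,2K]^{|\mathcal{P}_i|}$ loses nothing, provided the constant $K$ can be chosen uniformly in $f\in F$. I would check this in the proof of \cref{prop: kantorovich 0}: $K$ should come from a bound on $|U_i|$ over the relevant compact range, which is uniform over $F$. The constraint set is then constant, hence a continuous compact-valued correspondence, and Berge's maximum theorem (applied to $-\Phi_i$) yields the following:
\begin{itemize}
\item the value function is continuous, which is item 1;
\item $\Lambda_i$ is upper hemicontinuous with nonempty compact values, which is item 2.
\end{itemize}

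For existence of a social optimum, strong duality in semi-discrete optimal transport, implicit in \cref{prop: kantorovich 0}, identifies $\mathcal{S}_i(f)=\min_{q}\Phi_i(q,f)$. So the objective of \eqref{eq: upper} is continuous on the compact set $F$ and attains its maximum at some $f^*$. By \cref{prop: kantorovich 0}, an optimal transport map $T_i$ exists at $f^*$ for each group. Gluing these maps across groups gives an action profile that attains $SW$ equal to the optimal value of \eqref{eq: upper}. By \cref{prop: max sw 0}, this profile is a social optimum.

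Item 3 is the main obstacle. I would use submodularity of $\Phi_i(\cdot,f)$ in $w$. The term $\sum_P w_Pf_P$ is modular. The integrand is $h(w)=\max_P(a_P-w_P)$ with $a_P=U_i(v_{\hat j},\ell_P(f))$; write $x_P=a_P-w_P$, so $h$ is the max of the coordinates of $x$, which reverses order relative to $w$. For any $x,y$,
\[
\max(x\vee y)+\max(x\wedge y)\le\max x+\max y .
\]
To see this, say $\max(x\vee y)=\max x$, attained at coordinate $P$ with $x_P\ge y_P$; then $\max(x\wedge y)\le\max y$ because every coordinate of $x\wedge y$ is at most the corresponding coordinate of $y$. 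Hence $h$ is submodular. Integration preserves this, so $\Phi_i(\cdot,f)$ is submodular on the lattice $[0,2K]^{|\mathcal{P}_i|}$.

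Now let $q,q'$ be minimizers. Submodularity gives $\Phi(q\vee q')+\Phi(q\wedge q')\le\Phi(q)+\Phi(q')=2\min\Phi$, which forces both $q\vee q'$ and $q\wedge q'$ to be minimizers. So $\Lambda_i(f)$ is a sublattice. A compact sublattice of $\mathbb{R}^n$ is a complete lattice: suprema of arbitrary subsets are limits of finite joins, which stay in the set by closedness, and infima are handled the same way. This proves item 3.
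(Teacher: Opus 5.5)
Your proposal is correct and follows essentially the same route as the paper. Items 1--2 come from joint continuity of $\Phi_i$ and Berge's maximum theorem on the fixed cube $[0,2K]^{|\mathcal{P}_i|}$; the proof of \cref{prop: kantorovich 0} does choose $K$ uniformly over feasible flows, so your check succeeds. Existence comes from the extreme value theorem on the compact feasible set, and item 3 from submodularity of $\Phi_i(\cdot,f)$ plus compactness.

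Your extra details fill in steps the paper leaves to citations: dominated convergence for continuity, gluing optimal maps at $f^*$ into an actual optimal profile, and the direct proof that minimizers of a submodular function form a sublattice. Your submodularity argument is also cleaner than the paper's, whose monotonicity chain gets one inequality direction wrong. The correct facts are $h(w\vee w')\le\min\{h(w),h(w')\}$ and $h(w\wedge w')=\max\{h(w),h(w')\}$, and the latter follows from the max structure rather than from monotonicity.
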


\cref{prop: kantorovich 0} establishes that each strategy is associated with a dual potential $q_P$, and agents select a strategy maximizing $U_i(v_j, \ell_P(f)) - q_P$. This induces a semi-discrete optimal transport partition of the valuation space into regions $V_{i, P}(q)$ assigned to strategy $P$. For additive utilities, these regions are polyhedra determined by comparing the effective latencies $\ell_P(f) + q_P$ (see \cref{app: separable}).

The dual potentials function like strategy-specific prices. Without congestion, charging these potentials directly implements a social optimum, mirroring the classic assignment game \cite{shapley1971assignment}. Under congestion, however, the dual potentials should not be interpreted as actual prices. Rather, they are supporting prices for the lower-level assignment problem \emph{conditional on a fixed aggregate flow}. In the routing-game setting (see \cref{sec: examples and applications}), these dual potentials coincide with the path prices induced by the dual variables of the capacity constraints; see \cref{app: separable} for the derivation.

While the optimal dual potential vector need not be unique, \cref{cor: exist opt} guarantees that the solution set forms a complete lattice under the product order. This property may be useful for comparative statics and algorithm design.

\section{Conclusion}
\label{sec: Conclusion}
Motivated by applications in advanced air mobility, we introduced a class of nonatomic congestion games with heterogeneous destination valuations. Leveraging optimal transport duality, we derived finite-dimensional characterizations of both Nash equilibria and social optima in terms of threshold vectors, dual potentials, and aggregate flows. Our characterizations reduce infinite-dimensional assignment problems to finite-dimensional optimization and reveal a geometric partition of the valuation space analogous to semi-discrete optimal transport.

Several directions remain for future work. Theoretically, our finite-dimensional characterizations provide a foundation for equilibrium computation, comparative statics, and efficiency analysis. On the application side, we are interested in designing dynamic mechanisms that steer equilibrium behavior toward socially optimal outcomes without requiring explicit knowledge of the underlying valuation distributions. Extending this framework to other resource-allocation environments is another promising direction.


%
%
%
\bibliographystyle{splncs04}
\bibliography{notes/refs}
%

\appendix

\section{Notation}
\label{app: notation}
We denote the set of real numbers by \(\mathbb{R}\) and nonnegative real numbers by \(\mathbb{R}_+\). For a positive integer $N$, we define $[N] := \{1, 2, ..., N\}$. When indexing a vector $v = (v_1, ..., v_n) \in \mathbb{R}^n$, we follow the standard game-theoretic notation: $(a, v_{-i}) = (v_1, ..., v_{i-1}, a, v_{i+1}, ..., v_n)$ for any $i \in [n]$. When comparing two vectors $u, v \in \mathbb{R}^n$, we use $u \geq v$ to denote the pointwise comparison; that is, $u_i \geq v_i$ for any $i \in [n]$. The same convention applies to $>$, $<$, and $\leq$. For a mapping $f: X \rightarrow Y$ and a subset $V \subseteq Y$, we denote the inverse image of $V$ under $f$ as $f^{-1}(V) := \{x\in X: f(x) \in V\}$. When $V$ is a singleton, say $V = \{y\}$, we use the shorthand notation $f^{-1}(y)$ to denote $f^{-1}(\{y\})$.

\section{Some Single-Destination Examples}
\label{app: some examples}
In \cref{subsec: ex 1,subsec: ex 2,subsec: ex 3}, we use the one-link network shown in \cref{fig:one-link} to illustrate the role of heterogeneous valuations and capacity constraints in a simple setting. These examples are intended to build intuition for the model and the results developed in the main text. A summary is provided in \cref{table: compare}.

A similar one-link example was used in \cite[Section 1]{10.5555/1109557.1109630} to demonstrate the inefficiency of Nash equilibria in routing games. Our examples instead focus on illustrating the impact of heterogeneous valuations and capacity constraints.

Throughout this section, we focus on a single commodity, so we drop the index $i$ to simplify notation.
\subsection{First Example: Classical Routing Games}
\label{subsec: ex 1}
We begin with a benchmark example that essentially reduces to a classical nonatomic routing game. Specifically, every agent prefers traveling to dropping out, and the edge capacity is sufficiently large so that it never binds.
\begin{example}
\label{ex: 1}
Consider a one-link network with total traffic $r=1$, latency function $\ell_1(x) = x$, and edge capacity $c_1 = 1$, as shown in \cref{fig:one-link}. Agent valuations are uniformly distributed on $[2,3]$, i.e., $\mu \sim \text{Unif}([2, 3])$. The unique Nash equilibrium routes all traffic through the link, yielding a social welfare of 3/2. This equilibrium is also socially optimal.
\end{example}

\begin{figure}
    \centering
    \begin{tikzpicture}[->, thick, node distance=2.5cm]
        \node[circle, draw, minimum size=0.7cm, inner sep=1pt] (A) {$s$};
        \node[circle, draw, minimum size=0.7cm, inner sep=1pt, right=of A] (B) {$t_1$};
        \node[left=2mm of A] {$r=1$};
        \draw (A) -- node[midway, above] {$\begin{aligned}
                &c_1 = 1 \text{ or } 1/2\\
                &\ell_1(x) = x
            \end{aligned}$} (B);
    \end{tikzpicture}
    \caption{A one-link network with total traffic $r = 1$, latency function $\ell_1(x) = x$, and capacity $c_1$. This network is used throughout \cref{ex: 1,ex: 2,ex: no nash}.}
    \label{fig:one-link}
\end{figure}

\begin{table}
\hspace{0cm}
\caption{A summary of \cref{ex: 1,ex: 2,ex: no nash}.}
\begin{center}
\begin{tabular}{ |c|c|c|c| } 
 \hline
 Example & Valuation distribution & Capacity & Required intervention\\ 
 \hline
 \cref{ex: 1} & $\mu \sim \text{Unif}([2, 3])$ & $c_1 = 1$ & none \\ 
 \hline
 \cref{ex: 2} & $\mu \sim \text{Unif}([0, 1])$ & $c_1 = 1$ & marginal-cost tolling $\hat{\ell}_1(x)=2x$\\ 
 \hline
 \cref{ex: no nash} & $\mu \sim \text{Unif}([2, 3])$ & $c_1 = 1/2$ & capacity fee $\delta = 2$\\ 
 \hline
\end{tabular}
\end{center}
\label{table: compare}
\end{table}

\subsection{Second Example: Heterogeneous Valuations and Marginal-Cost Tolling}
\label{subsec: ex 2}
We now consider a setting in which agents' valuations are sufficiently heterogeneous that some agents may prefer to drop out. As in classical routing games with elastic (or variable) demand, a Nash equilibrium need not maximize social welfare because agents do not internalize the congestion externality. The standard remedy is marginal-cost tolling, which we illustrate below.
\begin{example}
\label{ex: 2}
Consider routing one unit of traffic through the network in \cref{fig:one-link} again. Unlike \cref{ex: 1}, we now let the valuation distribution be $\mu \sim \text{Unif}([0, 1])$. For convenience, we refer to agents with valuations at least 1/2 as \emph{high-valuation}  agents and the remainder as \emph{low-valuation} agents. 

The unique Nash equilibrium routes exactly the high-valuation agents while the low-valuation agents drop out. Indeed, if the traffic exceeds 1/2, then some low-valuation agents obtain negative utility and prefer to drop out. Conversely, if the traffic is below 1/2, then some high-valuation agents obtain positive utility and have an incentive to enter. Hence, the unique equilibrium flow is 1/2.

Now, consider the social optimum. As will be shown formally in \cref{prop: kantorovich}, any social optimum has a threshold structure: there exists $\tau$ such that agents with valuations below $\tau$ drop out while those with valuations above $\tau$ travel. The corresponding social welfare is $\int_\tau^1 udu - (1-\tau)^2$, where the first term is the total valuation of the participating agents and the second is the total congestion cost. Differentiating with respect to $\tau$ gives $\tau = 2/3$.

As in classical routing games, the socially optimal flow can be induced through marginal-cost tolling. Replacing the latency function $\ell_1(x) = x$ by its marginal-cost counterpart $\hat{\ell}_1(x) = 2x$ yields a unique Nash equilibrium in which exactly the highest one third of the agents choose to travel, coinciding with the social optimum under the original latency function.
\end{example}

\subsection{Third Example: Capacity Constraints and Capacity Fees}
\label{subsec: ex 3}
We next consider a setting in which the capacity constraint becomes binding. Unlike the previous example, the capacity constraint may prevent the existence of a Nash equilibrium. A natural way to restore equilibrium is to introduce prices associated with the capacity constraints, analogous to competitive prices in assignment games \cite{shapley1971assignment}. We refer to these prices as \emph{capacity fees}.
\begin{example}
\label{ex: no nash}
Consider the setting in \cref{ex: 1} ($\mu \sim \text{Unif}([2, 3])$) but decrease the capacity to $1/2$. Any flow that routes half of the traffic through the link and drops the other half is feasible.\footnote{By half of the traffic, we mean traffic with measure $1/2$. That is, $\mu(\{j \in [2, 3] | a_{i, j} \neq \emptyset\}) = \mu(\{j \in [2, 3] | a_{i, j} = \emptyset\}) = 1/2$.} \emph{However, there is no Nash equilibrium.} 
Suppose the induced flow is feasible. If its total traffic is strictly below the capacity, then some non-routed agent can profitably deviate by entering. If the total traffic equals the capacity, then any non-routed agent still has a profitable deviation, since her unilateral deviation has measure zero and therefore does not violate the capacity constraint. Conversely, if the induced flow exceeds the capacity, every routed agent receives utility $-\infty$ and prefers to drop out. Hence, no Nash equilibrium exists.

The nonexistence persists even under mixed strategies because our game violates \cite[Condition (a)]{schmeidler1973equilibrium}. In contrast, under the equilibrium notion of \cite[Definition 4.3.1, Proposition 4.3.2]{10.5555/1076293}, where only coalitions of positive measure are allowed to deviate, every feasible flow routing half of the traffic is an equilibrium. Thus, the nonexistence result stems from the equilibrium concept rather than the capacity constraint alone. We discuss this distinction further in \cref{app: different equilibrium notions}.

Despite the nonexistence results, there are ways to induce a socially optimal flow that only routes the high-valuation agents. One way to enforce this flow is to apply a capacity fee of $\delta = 2$ at the edge, so a traveling agent experiences a total cost of $\ell_1(x) + \delta$. Then, the unique Nash flow routes the high-valuation agents and drops the other half. Unlike the marginal-cost toll in \cref{ex: 2}, this fee is not determined by the latency function. As we will see later in \cref{cor: induce so}, these destination fees coincide with the dual variables of the capacity constraints in the social welfare optimization problem. 
\end{example}

\section{Relationship with Existing Traffic Assignment Models}
\label{subsec: relationship}

\subsection{Feasible Flows and Action Profiles}
\label{app: f and a}
\cref{lemma: feasible flow} below justifies why most previous work in congestion games takes the set of feasible flows as primitives, rather than action profile.
\begin{lemma}
\label{lemma: feasible flow}
Any action profile induces a feasible flow, and any feasible flow is induced by an action profile.
\end{lemma}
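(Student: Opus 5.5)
The lemma has two directions, and I would handle them separately, working one group $i$ at a time and then assembling the pieces through the disjoint-union structure of $(V,\mathcal{B}_V,\mu)$.

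\emph{An action profile induces a feasible flow.} Let $a$ be an action profile. By the measurability condition in \cref{def: action profile}, each preimage $a^{-1}(P)\cap V_i=\{v\in V_i\mid a_{i,v}=P\}$ is $\mu_i$-measurable. Hence $f_P:=\mu(a^{-1}(P))\ge 0$ is well defined. Since $a((i,v))\in\mathcal{P}_i$, the sets $\{a^{-1}(P)\cap V_i\}_{P\in\mathcal{P}_i}$ form a finite measurable partition of $V_i$. Additivity then gives $\sum_{P\in\mathcal{P}_i}f_P=\mu_i(V_i)=r_i$. One point needs a remark: a strategy $P$ may lie in several $\mathcal{P}_i$. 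In that case $f_P$ collects mass from each such group. I would either treat the $\mathcal{P}_i$ as disjoint copies (as is standard) or interpret feasibility commodity-wise, so that the group-$i$ sums are computed from $a|_{V_i}$. With that convention this direction is immediate.

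\emph{A feasible flow is induced by an action profile.} Fix a feasible flow $f$ and a group $i$, and enumerate $\mathcal{P}_i=\{P_1,\dots,P_m\}$. I need a measurable partition $V_i=A_1\sqcup\dots\sqcup A_m$ with $\mu_i(A_s)=f_{P_s}$; setting $a_{i,v}=P_s$ for $v\in A_s$ then finishes the proof. To build the partition, I would use absolute continuity of $\mu_i$. Define $g(t):=\mu_i(\{v\in V_i\mid v_1\le t\})$ for $t\in[0,\bar V]$. This function is nondecreasing, and it is continuous because each hyperplane slice $\{v_1=t\}$ is Lebesgue-null and hence $\mu_i$-null. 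It also satisfies $g(0^-)=0$ and $g(\bar V)=r_i$.

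Let $F_s:=\sum_{u\le s}f_{P_u}$, so that $F_m=r_i$ by feasibility. By the intermediate value theorem I can choose cut points $t_s$ with $g(t_s)=F_s$ and $t_1\le\dots\le t_m$. Take $A_s:=\{v\mid t_{s-1}<v_1\le t_s\}$, with $A_1=\{v_1\le t_1\}$ and the top slab adjusted to include $v_1=\bar V$. Each $A_s$ is Borel, and $\mu_i(A_s)=F_s-F_{s-1}=f_{P_s}$.

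The main obstacle is this second direction, specifically the divisibility of $\mu_i$. With general (atomic) measures it fails, which is exactly what \cref{subsec: abs conti} will revisit. For the absolutely continuous case, the continuity of $g$ carries the whole argument. Alternatively, one can cite Lyapunov's convexity theorem for nonatomic measures, which gives measurable subsets of any prescribed mass, and apply it iteratively to the remainder $V_i\setminus(A_1\cup\dots\cup A_{s-1})$. Finally, measurability of the assembled $a$ on $V$ follows because $\mathcal{B}_V$ is generated groupwise.
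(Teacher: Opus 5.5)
Your proof is correct and has the same skeleton as the paper's. The first direction is immediate. For the converse, you build, group by group, a measurable partition $\{A_P\}_{P\in\mathcal{P}_i}$ of $V_i$ with $\mu_i(A_P)=f_P$, then set $a_{i,v}=P$ on $A_P$.

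The one genuine difference is how that partition is produced. The paper cites the divisibility property of nonatomic measures, that a nonatomic finite measure can be split into measurable pieces of any prescribed masses. That uses only nonatomicity of $\mu_i$. You instead cut $V_i$ into explicit slabs $\{t_{s-1}<v_1\le t_s\}$, choosing the cut points by the intermediate value theorem. This requires the distribution function $g$ of $v_1$ to be continuous, which you get from absolute continuity: every slice $\{v_1=t\}$ is $\lambda$-null. Your route is more elementary and constructive, and it is complete under the paper's standing absolute-continuity assumption. It is, however, less general. A nonatomic measure concentrated on a segment $\{v_1=t_0\}$ makes $g$ jump, and then no choice of $v_1$-slabs works. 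For mere nonatomicity you would need the Lyapunov-type fallback you mention, which is exactly the paper's argument.

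Two cosmetic fixes. Write $g(0)=0$, which holds because $\{v_1=0\}$ is null, rather than $g(0^-)=0$. Taking $t_s:=\min g^{-1}(F_s)$ makes the monotonicity of the cut points automatic.

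Your remark about strategies shared by several groups is a real point that the paper passes over with ``clearly.'' Under the literal definitions, a shared $P$ makes $f_P$ aggregate mass from several groups. That can break $\sum_{P\in\mathcal{P}_i}f_P=r_i$ in the first direction, and it leaves the group split of $f_P$ unspecified in the second. Treating the $\mathcal{P}_i$ as disjoint copies is therefore needed for the lemma as stated.
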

\begin{proof}
Clearly, any flow induced by an action profile is feasible, so we only prove the second claim. Fix some group $i$. Since $\mu_i$ is nonatomic, there exists a partition $\{A_P\}_{P \in \mathcal{P}_i }$ of $V_i$ such that $\mu_i(A_P) = f_P$ for any $P \in \mathcal{P}_i$ (cf. \cite[Theorem 10.52]{aliprantis2006infinite}). Then, the action profile $a$ such that $a_{i, v} = P$ with $v \in A_{P}$ induces $f$.
\end{proof}

\subsection{Single-Destination Case}
\label{app: NE with single des}
Our model, together with the extension to general convex side constraints in \cref{sec: convex}, subsumes classical traffic assignment models with elastic demand and side constraints \cite{YILDIRIM2005659}. Below, we compare our equilibrium notion with those used in the single-destination literature, i.e., when $|D_i|=1$ for every $i\in[k]$. 

\subsubsection{Routing Games with Elastic Demand}
When capacity constraints are absent ($c_e=\infty$ for all $e\in E$), our equilibrium notion (\cref{def: NE}) is equivalent to the equilibrium definition of \cite[Definition 2.1]{10.5555/1109557.1109630}. Conversely, \cref{def: NE} together with \cref{prop: ne} provides a micro-foundation for their aggregate-flow formulation by explicitly modeling a continuum of heterogeneous agents.

\subsubsection{Routing Games with Edge Capacity Constraints}
\label{app: different equilibrium notions}
We now consider routing games with edge capacity constraints and inelastic demand. Since the destination choice becomes trivial in this setting, we denote an instance simply by $(G,r,\ell,c)$. In the remainder of this subsection, we identify a Nash equilibrium with its induced feasible flow. That is, a feasible flow is called a \emph{Nash equilibrium} if it is induced by an action profile satisfying \cref{def: NE}. The literature contains two other equilibrium concepts for this setting. Below, we follow the terminology in Correa et al. \cite{doi:10.1287/moor.1040.0098}. 

We first define the \emph{capacitated user equilibrium} introduced in \cite[Definition 3.1]{doi:10.1287/moor.1040.0098}. The following formulation is introduced by \cite[Definition 4.3.1]{10.5555/1076293}.
\begin{definition}
\label{def: NE TR}
Given an instance $(G, r, \ell, c)$, a feasible flow $f$ is a \emph{capacitated user equilibrium} if for all $i \in [k]$, $P_1, P_2 \in \mathcal{P}_i$ with $f_{P_1} > 0$, and $\delta \in (0, f_{P_1}]$, either $\ell_{P_1}(f) \leq \ell_{P_2}(\tilde{f})$, where 
\begin{equation}
\tilde{f}_P = \begin{cases}
f_P - \delta, &\text{ if } P = P_1\\
f_P + \delta, &\text{ if } P = P_2\\
f_P, &\text{ if } P \notin \{P_1, P_2\},
\end{cases}
\end{equation}
or $\tilde{f}$ is infeasible. 
\end{definition}

Unlike \cref{def: NE}, a capacitated user equilibrium only rules out deviations by measurable sets of agents of positive measure.\cref{ex: no nash} shows how this difference in definitions leads to different existence conclusions. The following result from \cite[Proposition 4.3.2]{10.5555/1076293} gives an equivalent characterization of a capacitated user equilibrium.
\begin{proposition}
Given an instance $(G, r, \ell, c)$, a feasible flow $f$ is a capacitated user equilibrium if and only if for every $i \in [k]$ and $P_1, P_2 \in \mathcal{P}_i$ with $f_{P_1} > 0$, one of the following holds.
\begin{enumerate}
    \item $\ell_{P_1}(f) \leq \ell_{P_2}(f)$, or
    \item there is an edge $e \in P_2 \setminus P_1$ with $f_e = c_e$.
\end{enumerate}
\end{proposition}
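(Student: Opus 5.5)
We prove the two directions of the equivalence separately, working directly from \cref{def: NE TR}. The key fact is that in the deviated flow $\tilde f$ only the edges of $P_1 \triangle P_2$ change load: edges in $P_1\setminus P_2$ lose $\delta$, edges in $P_2\setminus P_1$ gain $\delta$, and edges in $P_1\cap P_2$ are unchanged. Hence
\[
\ell_{P_2}(\tilde f)=\sum_{e\in P_2\cap P_1}\ell_e(f_e)+\sum_{e\in P_2\setminus P_1}\ell_e(f_e+\delta).
\]
Moreover, $\tilde f$ is infeasible exactly when some $e\in P_2\setminus P_1$ has $f_e+\delta>c_e$.

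\emph{Sufficiency.} Fix $i$, $P_1,P_2$ with $f_{P_1}>0$, and $\delta\in(0,f_{P_1}]$, and consider the two cases of the characterization.
\begin{enumerate}
\item If $\ell_{P_1}(f)\le \ell_{P_2}(f)$, then since every $\ell_e$ is nondecreasing, the display gives $\ell_{P_2}(\tilde f)\ge \ell_{P_2}(f)\ge \ell_{P_1}(f)$. So the first alternative of \cref{def: NE TR} holds.
\item If some $e\in P_2\setminus P_1$ has $f_e=c_e$, then $\tilde f_e=c_e+\delta>c_e$, so $\tilde f$ is infeasible.
\end{enumerate}
In either case the definition is satisfied for every $\delta$.

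\emph{Necessity.} We argue by contrapositive. Suppose $\ell_{P_1}(f)>\ell_{P_2}(f)$ and every $e\in P_2\setminus P_1$ has $f_e<c_e$; we construct a profitable feasible deviation.
\begin{enumerate}
\item Set $\epsilon:=\ell_{P_1}(f)-\ell_{P_2}(f)>0$.
\item By continuity of each $\ell_e$ and finiteness of $E$, choose $\delta\in(0,f_{P_1}]$ small enough that $\delta\le \min_{e\in P_2\setminus P_1}(c_e-f_e)$, where the minimum is positive, or vacuous if the set is empty.
\item Shrinking $\delta$ further if needed, also ensure $\sum_{e\in P_2\setminus P_1}\big(\ell_e(f_e+\delta)-\ell_e(f_e)\big)<\epsilon$.
\end{enumerate}
Then $\tilde f$ is feasible. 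Edges outside $P_2\setminus P_1$ only weakly decrease in load and $f$ was feasible; the flow conservation $\sum_{P\in\mathcal P_i}\tilde f_P=r_i$ is preserved and $\tilde f\ge 0$ since $\delta\le f_{P_1}$. Also $\ell_{P_2}(\tilde f)<\ell_{P_2}(f)+\epsilon=\ell_{P_1}(f)$. Thus neither alternative of \cref{def: NE TR} holds, so $f$ is not a capacitated user equilibrium.

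The only delicate step is this small-$\delta$ continuity argument. We must pick $\delta$ simultaneously below the slack on every unsaturated edge of $P_2\setminus P_1$ and small enough to keep the latency increase under $\epsilon$. Both requirements are finite and positive, so taking the minimum of the finitely many bounds suffices. The case where $P_2\setminus P_1$ is empty is trivial: then $\ell_{P_2}(\tilde f)=\ell_{P_2}(f)<\ell_{P_1}(f)$ with $\tilde f$ feasible.
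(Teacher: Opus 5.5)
Your proof is correct. Sufficiency follows because only edges of $P_2\setminus P_1$ gain load (by exactly $\delta$), so either monotonicity of the $\ell_e$ or a saturated edge on $P_2\setminus P_1$ settles it. Necessity follows by choosing $\delta$ below $f_{P_1}$ and below the finitely many positive slacks on $P_2\setminus P_1$, and small enough by continuity that the latency increase stays under $\ell_{P_1}(f)-\ell_{P_2}(f)$.

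The paper gives no proof of its own; it imports the result from Roughgarden's Proposition 4.3.2, and your argument is the standard direct proof. You also rightly read ``$\tilde f$ is infeasible'' as a capacity violation. Under the main text's rate-only definition of feasibility, $\tilde f$ would always be feasible and the equivalence would fail.
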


The second notion is the \emph{BMW equilibrium} defined in \cite[Section 3.2]{doi:10.1287/moor.1040.0098}; see also Larsson and Patriksson \cite{LARSSON1999233,LARSSON1995433}.\footnote{Contradicting the terminology in Correa at el. \cite{doi:10.1287/moor.1040.0098}, the BMW equilibrium is called a \emph{capacitated user equilibrium} in Larsson and Patriksson \cite{LARSSON1995433}.} It is defined as a solution to \eqref{eq: upper new}. The advantage is an optimization-based characterization, and it always exists due to the extreme value theorem.
\begin{subequations}
\label{eq: upper new}
\begin{align}
\min_{f} \quad  &\sum_{e \in E}\int_0^{f_e} \ell(x)dx \\
\ \ \text{s.t. } \quad &f_P \geq 0, \forall P \in \mathcal{P}\\
&\sum_{P \in \mathcal{P}_i} f_P = r_i, \forall i \in [k]\\
&f_e \leq c_e, \forall e \in E\\
&f_e = \sum_{P \in \mathcal{P}: e \in P} f_P, \forall e \in E.
\end{align}
\end{subequations}

\begin{definition}
\label{def: NE LP}
Given an instance $(G, r, \ell, c)$, a feasible flow $f$ is a BMW equilibrium if it minimizes \eqref{eq: upper new}.
\end{definition}

Correa et al. \cite[Lemma 3.2]{doi:10.1287/moor.1040.0098} (see also \cite[Lemma 4.3.6]{10.5555/1076293}) proved that every BMW equilibrium is a capacitated user equilibrium. The converse is false by \cite[Example 4.3.3]{10.5555/1076293}.
Below, we show that our definition is a further restriction of a BMW equilibrium. \cref{fig: Venn for Nash} summarizes the relationship.
\begin{lemma}
Given an instance $(G,r,\ell,c)$, every Nash equilibrium is a BMW equilibrium. The converse is false.
\end{lemma}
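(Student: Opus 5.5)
We first show that a Nash equilibrium is a BMW equilibrium. Here $|D_i|=1$ and demand is inelastic, so a Nash equilibrium flow $f$ satisfies $f_e\le c_e$ for all $e$, since otherwise agents on a violating path get $-\infty$. By \cref{prop: ne}, condition 2, every used path $P\in\mathcal{P}_i$ has minimum latency among all paths in $\mathcal{P}_i$. The key point is that the capacity constraints impose no restriction on unilateral deviations: a single agent has measure zero, so a deviation to a path $Q$ whose edges are currently feasible ($f_e\le c_e$) yields utility $U_i(v,\ell_Q(f))$. Hence $f$ is a Wardrop equilibrium in the ordinary sense over all paths, with $\ell_P(f)\le \ell_Q(f)$ for all $Q\in\mathcal{P}_i$ whenever $f_P>0$.

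Now use the variational inequality. The objective $\Psi(g)=\sum_e\int_0^{g_e}\ell_e$ is convex because each $\ell_e$ is nondecreasing and continuous. For any feasible $g$ of \eqref{eq: upper new},
\[
\Psi(g)-\Psi(f)\ \ge\ \sum_e \ell_e(f_e)(g_e-f_e)=\sum_i\sum_{P\in\mathcal{P}_i}\ell_P(f)(g_P-f_P).
\]
Let $L_i=\min_{P\in\mathcal{P}_i}\ell_P(f)$. Then $\sum_P\ell_P(f)f_P=\sum_i L_i r_i$, and $\sum_P\ell_P(f)g_P\ge \sum_i L_i r_i$. So the difference is nonnegative and $f$ minimizes \eqref{eq: upper new}. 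The only subtlety is the justification that every $Q$ is an admissible deviation. This holds because feasibility of $f$ makes each $Q$'s edges satisfy $f_e\le c_e$, and a measure-zero deviation does not change $f$. I would state this carefully using \eqref{eq: utility}.

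For the converse being false, reuse \cref{ex: no nash} with $r=1$, $c_1=1/2$, and a single link plus the outside option. In the inelastic framing, take two parallel links: link 1 with $\ell_1(x)=x$ and $c_1=1/2$, and link 2 with $\ell_2(x)=2$ and $c_2=\infty$. Minimizing \eqref{eq: upper new} puts $f_1=1/2$, $f_2=1/2$, since the unconstrained minimizer would push $f_1$ to $1$. So a BMW equilibrium exists. Under \cref{def: NE}, however, an agent on link 2 has cost $2>\ell_1(1/2)=1/2$ and deviates alone to link 1, whose capacity remains respected at the unchanged flow. Every feasible flow is either below capacity on link 1 with profitable entry, or equal to capacity with the same deviation, so no Nash equilibrium exists.

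The main obstacle is getting the deviation semantics exactly right against \eqref{eq: utility}. With that settled, both directions are routine.
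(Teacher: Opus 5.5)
Your proof is correct and follows essentially the paper's argument. The paper obtains the same variational inequality $\sum_P \ell_P(f)(g_P-f_P)\ge 0$ by integrating the pointwise Nash inequality against an action profile that induces $g$ (via \cref{lemma: feasible flow}), whereas you derive it from the Wardrop condition of \cref{prop: ne}, and you also spell out the convexity step and the capacity-feasibility of $f$ that the paper leaves implicit. For the converse, your two-link version of \cref{ex: no nash}, with an uncapacitated constant-latency link replacing the outside option, is a genuine inelastic instance of $(G,r,\ell,c)$; this is slightly more careful than the paper's direct citation of an example that involves valuations and dropping out.
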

\begin{proof}
Suppose that $f$ is a Nash equilibrium induced by an action profile $a$. We show that $f$ satisfies the variational inequality characterizing a BMW equilibrium.

Let $g$ be any feasible flow. By \cref{lemma: feasible flow}, there exists an action profile $b$ inducing $g$. Since $a$ is a Nash equilibrium, for $\mu$-almost every agent $(i,v)$,
\begin{equation}
u_{i,v}(a_{i,v},a_{-(i,v)}) \ge u_{i,v}(b_{i,v},a_{-(i,v)}).
\end{equation}
Because there is only one destination, utilities are given by
$u_{i, v}(P, a_{-(i, v)}) = -\ell_P(f)$. Thus, $\ell_{a_{i,v}}(f)\le \ell_{b_{i,v}}(f)$ for $\mu$-almost every $(i, v)$.

Integrating $\ell_{a_{i,v}}(f)\le \ell_{b_{i,v}}(f)$  over the valuation space gives $\int_V\ell_{a_{i,v}}(f) d\mu \leq \int_V\ell_{b_{i,v}}(f) d\mu$. Since $a$ induces $f$ and $b$ induces $g$, we get
\begin{equation}
\sum_{P\in\mathcal P}\ell_P(f)f_P
    \le
\sum_{P\in\mathcal P}\ell_P(f)g_P,
\end{equation}
or equivalently,
\begin{equation}
\sum_{P\in\mathcal P}
\ell_P(f)(g_P-f_P)\ge0.
\end{equation}
This is precisely the variational inequality characterizing characterizing the minimizers of \eqref{eq: upper new}. Therefore, $f$ is a BMW equilibrium.

The converse does not hold, as shown by \cref{ex: no nash}.
\end{proof}

\begin{figure}[h]
\centering
\begin{tikzpicture}
    \draw[thick] (0,0) circle (3.5cm) node[above right=2mm] {};

    \draw[thick] (0,-1.1) circle (2.4cm) node[left=3mm] {};

    \draw[thick] (0,-2.1) circle (1.4cm) node[right=3mm] {};

    \node[align=center] at (0,2) {Capacitated user equilibrium \\
    (\cref{def: NE TR}) \\ $\bullet$\cite[Example 4.3.3]{10.5555/1076293}};

    \node[align=center] at (0,-2.2) {Nash equilibrium \\
    (\cref{def: NE})};

    \node[align=center] at (0,0) {BMW equilibrium \\
    (\cref{def: NE LP}) \\ $\bullet$\cref{ex: no nash}};

\end{tikzpicture}
\caption{Relationship among the three equilibrium notions for routing games with edge-capacity constraints. The cited examples show that each inclusion is strict.}
\label{fig: Venn for Nash}
\end{figure}

\section{Extensions}
\label{sec: Extensions}
\subsection{Convex Side Constraints}
\label{sec: convex}
In this subsection, we extend our model to general nonatomic congestion games with convex side constraints; see \cite[Theorem 2.2]{LARSSON1999233} for specific formulations.

A \emph{convex side constraint} is an inequality of the form $c(f) \leq 0$ where $c$ is a convex function of the aggregate flow. We let $\mathcal{C}$ be a finite set of such constraints. We say that a constraint $c$ \emph{involves} a strategy $P$ if $c$ depends on $f_e$ for some $e \in P$. A strategy $P$ is \emph{feasible} under flow $f$ if every constraint involving $P$ is satisfied. An instance is a seven-tuple $(E, \mathcal{P}, r, \ell, \mathcal{C}, V, U)$.

Given an instance $(E, \mathcal{P}, r, \ell, \mathcal{C}, V,U)$ and an action profile $a$, the utility function is defined as follows.
\begin{equation}
u_{i, v}(a_{i, v}, a_{-(i, v)}) = \begin{cases}
U_i(v_j, \ell_P(f)), &a_{i, v} = P \in \mathcal{P}_{i, j}\text{ is feasible} \\
-\infty, &\text{otherwise}.
\end{cases}
\end{equation}
Nash equilibria and social optima are defined as before, and the characterizations in \cref{sec: Characterizations} continue to hold after replacing the edge-capacity constraints $f_e \leq c_e$ with the collection of convex side constraints.

\subsection{General Measures}
\label{subsec: abs conti}
We now remove the assumption in \cref{subsec: val and act} that each measure $\mu_i$ is absolutely continuous with respect to the Lebesgue measure $\lambda$. Instead, we only assume that each $\mu_i$ is a finite measure, for which a Radon--Nikodym derivative may or may not exist. This relaxation allows for two possible interpretations: some agents may have nonnegligible influence on the game, or a nonnegligible fraction of agents may share exactly the same type.

To accommodate this setting, we extend the definition of an action profile $a$ to be a measurable function $a : V \to \Delta(\mathcal{P})$, such that for every agent $(i,v) \in V$, the distribution $a((i,v))$ places all its mass on $\mathcal{P}_i$. As before, any action profile $a$ induces a feasible flow $f$ by aggregating actions across types. The converse relies on the existence of regular conditional probabilities.
\begin{lemma}
\label{lemma: general measure}
Any action profile induces a feasible flow, and any feasible flow is induced by an action profile.
\end{lemma}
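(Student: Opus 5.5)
The first claim is immediate. Let $a$ be an action profile with $a((i,v))\in\Delta(\mathcal{P}_i)$, and set
\[
f_P := \int_{V_i} a((i,v))(P)\, d\mu_i(v), \qquad P \in \mathcal{P}_i .
\]
Measurability of $a$ makes each integrand measurable, so $f_P$ is well defined and nonnegative. Because $a((i,v))$ is a probability vector supported on $\mathcal{P}_i$, summing over $P\in\mathcal{P}_i$ gives $\sum_{P\in\mathcal{P}_i} f_P = \mu_i(V_i) = r_i$. Hence $f$ is feasible.

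For the converse, fix a feasible flow $f$. I would build $a$ one group at a time. For group $i$, define the probability measure $\pi_i$ on the finite set $\mathcal{P}_i$ by $\pi_i(P) := f_P/r_i$, and let
\[
\gamma_i := \mu_i \otimes \pi_i
\]
be the product measure on $V_i \times \mathcal{P}_i$. The first marginal of $\gamma_i$ is $\mu_i$ and the second marginal is $f$ restricted to $\mathcal{P}_i$. Disintegrating $\gamma_i$ with respect to its first marginal gives a kernel $v\mapsto \gamma_i(\cdot\mid v)\in\Delta(\mathcal{P}_i)$. This step uses the existence of regular conditional probabilities, which holds because $V_i\times\mathcal{P}_i$ is a Polish space (a compact subset of Euclidean space times a finite set) with its Borel $\sigma$-algebra.

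I then set $a((i,v)) := \gamma_i(\cdot\mid v)$. This kernel is measurable in $v$ and places all its mass on $\mathcal{P}_i$. Integrating against $\mu_i$ recovers the second marginal of $\gamma_i$, which is $f_P$ for every $P\in\mathcal{P}_i$. Gluing the groups together over the disjoint union $V$ gives a measurable action profile $a$ that induces $f$.

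In fact, for the product coupling the disintegration can be written down explicitly: the constant kernel $a((i,v)) = \pi_i$ works. That choice is trivially measurable and gives $\int_{V_i}\pi_i(P)\,d\mu_i = f_P$. I would present this constant kernel as the main construction and note that the regular-conditional-probability argument covers arbitrary couplings, for example when one later wants $a$ to be concentrated on prescribed sets as in \cref{prop: kantorovich 0}.

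The only point needing care, and the place I would expect an obstacle, is measurability of $v\mapsto a((i,v))$ into $\Delta(\mathcal{P})$ when a general coupling is used. Here $\Delta(\mathcal{P})$ is a finite-dimensional simplex, so measurability reduces to measurability of each coordinate $v\mapsto a((i,v))(P)$. This follows from the standard disintegration theorem on Polish spaces.
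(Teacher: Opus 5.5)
Your proof is correct. It rests on the same idea as the paper's proof: view $\mu_i$ and $f|_{\mathcal{P}_i}$ as finite measures of mass $r_i$, and produce a kernel $v \mapsto a((i,v)) \in \Delta(\mathcal{P}_i)$ whose $\mu_i$-average is $f|_{\mathcal{P}_i}$. Your version is more explicit and more elementary than the paper's.

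The paper invokes the existence of a regular conditional probability $\kappa_i(\cdot \mid v)$ via Durrett's disintegration theorem. It never says which joint measure on $V_i \times \mathcal{P}_i$ is being disintegrated. You supply that missing coupling, the product $\mu_i \otimes (f|_{\mathcal{P}_i}/r_i)$. You then observe that its disintegration is the constant kernel $a((i,v)) = f|_{\mathcal{P}_i}/r_i$. This makes measurability trivial and removes the need for Polish-space or regular-conditional-probability machinery. It also makes clear that atoms of $\mu_i$ play no role.

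As you anticipate, the general disintegration argument is needed only when the kernel must be supported on prescribed sets. An example is disintegrating an optimal coupling in the Kantorovich extension of \cref{prop: kantorovich 0}, which is the setting the paper has in mind.
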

\begin{proof}
The first claim is straightforward, so we only prove the second claim. Any feasible flow $f$ defines a finite measure on $\mathcal{P}$, so when focusing on commodity $i$, we can view both $(V_i, \mathcal{B}_{V_i}, \mu_i)$ and $(\mathcal{P}_i, 2^{\mathcal{P}_i}, f^i)$ as \emph{scaled} probability spaces, where $f^i$ is the restriction of $f$ to $\mathcal{P}_i$. Then there exists a regular conditional probability $\kappa_i(\cdot | v)$ on $\mathcal{P}_i$ such that $\int_{V_i} \kappa_i(P | v) d\mu_i = f_P, \forall P \in \mathcal{P}_i$ and $v \mapsto \kappa_i(\cdot | v)$ is measurable (cf. \cite[Theorem 4.1.17]{Durrett_2019}). From $\kappa_i$'s, we can define the action profile $a$ by
\begin{equation}
a_P((i, v)) = \kappa_i(P | v), \forall P \in \mathcal{P}_i,
\end{equation}
which is measurable by construction.
\end{proof}

Kantorovich duality continues to characterize the lower-level assignment problem through optimal couplings (see \cite[Theorem 5.10]{villani2008optimal}). Also, by the disintegration theorem (e.g., \cite[Theorem 4.1.17]{Durrett_2019}), every coupling admits a disintegration into regular conditional probabilities. The only difference is that, in general, optimal couplings need not be induced by deterministic transport maps. Consequently, all of the characterizations in \cref{sec: Characterizations} remain valid after replacing deterministic action profiles by randomized ones. Note that, unlike the absolutely continuous case, the indifference sets in \eqref{eq: threshold} may have positive $\mu_i$-measure. Consequently, the sets $V_{i, j}(\tau_i)$ are no longer a.e.-disjoint, and different measurable tie-breaking rules may lead to different action profiles while inducing the same aggregate flow.

Our extension to general measures is distinct from atomic congestion games \cite{ROUGHGARDEN2004389}. In an atomic congestion game, each player must choose a single strategy, but we allow randomization across multiple strategies according to the regular conditional probabilities above. Thus, our extension models populations with atomic masses rather than atomic players.

\section{Other Applications}
\label{app: Other Applications}
\subsection{Market-Sharing Games}
Our model can be viewed as a nonatomic extension of market-sharing games introduced by Goemans et al. \cite{1626428}. Classical market-sharing games consider a finite number of firms competing for customers across multiple markets. In contrast, we consider a continuum of heterogeneous firms, each choosing at most one market to serve. 

Let $D=\{t_1,\ldots,t_m\}$ be the set of markets. Each firm $(i, v)$ has a valuation vector $v=(v_1,\ldots,v_m)$, where $v_j$ represents its value for serving market $j$. Let $n_j$ denote the size of market $j$, and let $f_j$ denote the total
measure of firms serving that market. Assuming customers are shared equally among participating firms, a firm serving market $j$ receives a market share of $n_j/f_j$. Hence, its utility is
\begin{equation}
U_i(v_j, \ell_j(f)) = v_j \cdot \frac{n_j}{f_j}.
\end{equation}
By defining the congestion function as $\ell_j(f)=\frac{f_j}{n_j}$, the utility becomes
\begin{equation}
U_i(v_j, \ell_j(f)) = \frac{v_j}{\ell_j(f)},
\end{equation}
which is an instance of our model with the nonseparable utility function $U_i(x,y)=x/y$. Therefore, our characterization results provide finite-dimensional descriptions of Nash equilibria and socially optimal market allocations in this nonatomic market-sharing game. 

\subsection{Congestible Club Goods}
Our model also applies to the allocation of congestible club goods, a concept introduced by Buchanan \cite{81322e7f-62c2-3874-8382-6c3096359c71} (see also \cite{2d696ccb-ad6d-384c-9b47-2eda4c556617,cornes1996theory}). Examples include coworking spaces, fitness centers, golf clubs, marinas, or other membership-based facilities that provide excludable but congestible services.

Let $D=\{t_1,\ldots,t_m\}$ be the set of clubs. Each agent chooses at most one club to join (or the outside option). The valuation vector $v=(v_1,\ldots,v_m)$ represents the agent's willingness to pay for membership in each club. If $f_j$ agents join club $j$, members experience congestion through a cost function $\ell_j(f_j)$, which captures reduced service quality, longer waiting times, or increased crowding.

The utility $U_i(v_j,\ell_j(f))$ captures heterogeneous preferences over clubs together with the disutility from congestion. Capacity constraints naturally model limits on memberships or physical capacity, while more general convex side constraints (see \cref{sec: convex}) can capture shared resource limitations across multiple clubs.

The resulting allocation problem is therefore an instance of our model. Consequently, our characterization results yield finite-dimensional descriptions of Nash equilibria and socially optimal membership allocations, together with supporting congestion prices that can be interpreted as membership fees or congestion charges.

\section{Inducing Social Optima in Routing Games}
\label{app: separable}
The characterization in \cref{sec: Characterizations} applies to arbitrary utility functions. We now consider the routing-game model introduced in \cref{sec: examples and applications}, where $U_i(v_j, \ell_P(f)) = v_j - \ell_P(f)$. Under this separable utility, the social welfare optimization problem becomes a convex optimization problem in the path flows. This structure results in a first-order characterization of the social optimum together with an explicit pricing interpretation.

Moreover, the strategy-dependent dual potential $q$ from \cref{prop: kantorovich 0} collapses to a destination-dependent threshold vector $\tau_i$, exactly as in the characterization of Nash equilibria (\cref{prop: ne}). Thus, unlike the general setting where different strategies for the same destination may require different thresholds, the routing-game structure allows the social optimum to be characterized solely by aggregate flows and one threshold vector for each commodity.

Throughout this section, we assume that the standard constraint qualifications for the KKT conditions hold.

\subsection{Social Welfare Optimization Problem for Routing Games}
\label{app: sw opt}
First, we introduce a notation. Given a feasible flow $f$ and a commodity $i$, we let $\nu_{f, i}$ be the measure on $D_i$ induced by the flow $f$.
\begin{equation}
\nu_{f, i}(j) := \sum_{P\in \mathcal{P}_{i, j}} f_P, j \in D_i.
\end{equation}

The next two propositions specialize \cref{prop: max sw 0,prop: kantorovich 0} to separable utilities. The only difference is that now the valuation and latency parts are separated, so the lower-level optimization has a cleaner structure, making the whole optimization problem convex.
\begin{proposition}
\label{prop: max sw}
Given an instance $(G, r, \ell, c, V)$, the social welfare optimization problem \eqref{eq: SW} is equivalent to the following.
\begin{subequations}
\label{eq: upper 0}
\begin{align}
\label{eq: upper obj 0}
\sup_{f} \quad  &\sum_{i \in [k]} \mathcal{S}_i(f)-  \sum_{P \in \mathcal{P}} f_P \ell_P(f) \\
\label{eq: upper con1 0}
\ \ \text{s.t. } \quad &f_P \geq 0, \forall P \in \mathcal{P}\\
\label{eq: upper con2 0}
&\sum_{P \in \mathcal{P}_i} f_P = r_i, \forall i \in [k]\\
\label{eq: upper con3 0}
&c_e \geq f_e, \forall e \in E,
\end{align}
\end{subequations}
where $\mathcal{S}_i(f)$ is given by
\begin{subequations}
\label{eq: lower}
\begin{align}
\mathcal{S}_i(f) := \sup_{T\text{ measurable}} \quad  &\sum_{j \in D_i} \int_{T^{-1}(j)} v_j d\mu_i(v) \\
\ \ \text{s.t. } \quad &\mu_i(T^{-1}(j)) = \nu_{f, i}(j), \forall j \in D_i.
\end{align}
\end{subequations}
\end{proposition}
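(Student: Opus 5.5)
The plan is to derive this directly from \cref{prop: max sw 0}, specializing $U_i(x,y)=x-y$ and then separating the valuation term from the latency term inside the lower-level problem \eqref{eq: lower 0}. \Cref{prop: max sw 0} already shows that \eqref{eq: SW} equals the supremum over feasible, capacity-respecting flows $f$ of $\sum_i \mathcal{S}_i(f)$. Here $\mathcal{S}_i(f)$ is the supremum over measurable $T:V_i\to\mathcal{P}_i$ with $\mu_i(T^{-1}(P))=f_P$. So it suffices to show that, for each fixed $f$, the old lower-level value equals the new $\mathcal{S}_i(f)$ defined in \eqref{eq: lower} minus $\sum_{P\in\mathcal{P}_i} f_P\ell_P(f)$.

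\textbf{Step 1 (latency term is constant).} Fix $f$ and any admissible $T$. With separable utility, the objective splits as
\[
\sum_{j}\sum_{P\in\mathcal{P}_{i,j}}\int_{T^{-1}(P)} v_j\,d\mu_i(v)\;-\;\sum_{P\in\mathcal{P}_i}\ell_P(f)\,\mu_i(T^{-1}(P)).
\]
The marginal constraint turns the second term into $\sum_{P\in\mathcal{P}_i} f_P\ell_P(f)$. This does not depend on $T$, so it can be pulled out of the supremum. Summing over $i$ gives the term $\sum_{P\in\mathcal{P}} f_P\ell_P(f)$ in \eqref{eq: upper obj 0}.

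\textbf{Step 2 (path-level vs.\ destination-level maps).} It remains to show that the supremum of $\sum_j\sum_{P\in\mathcal{P}_{i,j}}\int_{T^{-1}(P)}v_j\,d\mu_i$ over path-level maps $T$ equals the supremum over destination-level maps $T':V_i\to D_i$ with $\mu_i(T'^{-1}(j))=\nu_{f,i}(j)$.

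For "$\le$": given $T$, set $T'(v)=j$ whenever $T(v)\in\mathcal{P}_{i,j}$. Then $T'$ is measurable, satisfies the marginal constraint by the definition of $\nu_{f,i}$, and has the same objective value.

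For "$\ge$": given $T'$, split each set $T'^{-1}(j)$ into measurable pieces $A_P$, $P\in\mathcal{P}_{i,j}$, with $\mu_i(A_P)=f_P$. This uses nonatomicity of $\mu_i$, exactly as in the proof of \cref{lemma: feasible flow} via \cite[Theorem 10.52]{aliprantis2006infinite}. Then define $T=P$ on $A_P$. The objective is unchanged because the integrand $v_j$ depends only on the destination.

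Combining both steps yields \eqref{eq: upper 0}–\eqref{eq: lower}.

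\textbf{Main obstacle.} The main difficulty is the splitting in the "$\ge$" direction. One has to make sure that $\mu_i$ restricted to $T'^{-1}(j)$ is still nonatomic, which holds by absolute continuity. Beyond that, the remaining work is bookkeeping about measurability and a.e.\ conventions. No regularity of $\ell$ is used.
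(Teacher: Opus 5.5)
Your proposal is correct and follows the same route as the paper. You specialize \cref{prop: max sw 0} to $U_i(x,y)=x-y$ and pull out the latency term $\sum_{P} f_P\ell_P(f)$, which depends only on the flow. Your Step 2, which passes from path-level to destination-level maps by splitting sets using nonatomicity of $\mu_i$, spells out a reduction that the paper's one-line proof leaves implicit, so your argument is the more complete of the two.
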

\begin{proof}
Since $U_i(v_j,\ell_P(f))=v_j-\ell_P(f)$, the latency term depends only on the aggregate flow and is independent of the assignment of agents to destinations. Hence, social welfare becomes
\begin{equation}
SW(a)=
\sum_{i\in[k]}
\sum_{P\in\mathcal P_i}
\int_{a^{-1}(P)}
\bigl(v_j-\ell_P(f)\bigr)
\,d\mu_i(v)=
\sum_{i\in[k]}
\mathcal{S}_i(f)
-
\sum_{P\in\mathcal P}
f_P\ell_P(f).
\end{equation}
The constraints on feasible flows are identical to those in
\cref{prop: max sw 0}, so the optimization over action profiles is equivalent to \eqref{eq: upper 0}.
\end{proof}
\begin{remark}
We can formulate an equivalent optimization problem by including $f_e$ explicitly as is standard in routing games (e.g., Roughgarden \cite{10.1145/509907.509971}).
\end{remark}

Since the latency term has already been separated in
\cref{prop: max sw}, the lower-level optimization assigns agents only to destinations. Consequently, the dual variables are indexed by destinations rather than by strategies.

\begin{proposition}
\label{prop: kantorovich}
A solution to the optimization problem \eqref{eq: lower} exists, and any optimal transport map $T$ satisfies the following a.e.
\begin{equation}
T(v) \in \arg\max_{j \in D_i} v_j - \tau_{i, j},
\end{equation}
where $\tau_i \in \mathbb{R}^{|D_i|}$ is an optimal solution to the dual problem
\begin{equation}
\label{eq: dual phi}
\min_{p \in \mathbb{R}^{|D_i|}} \underbrace{\int_{V_i} \max_{j \in D_i}(v_j - p_j)d\mu_i(v) + \sum_{j = 1}^{|D_i|} p_j \nu_{f, i}(j)}_{=: \Phi_i(p, f)}.
\end{equation}
The optimal value is
\begin{equation}
\label{eq: solution t}
\mathcal{S}_i(f) = \sum_{j \in D_i} \int_{V_{i, j}(\tau_i)} v_j d\mu_i(v).
\end{equation}
Moreover, for all $j \in D_i$, $T^{-1}(j) = V_{i, j}(\tau_i)$ a.e. Also, there exists some $K > 0$ such that $\tau_{i, j} \in [0, 2K], \forall j \in D_i$. 
\end{proposition}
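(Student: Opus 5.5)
The plan is to treat \eqref{eq: lower} as a semi-discrete optimal transport problem between the absolutely continuous source $\mu_i$ on $V_i$ and the discrete target $\nu_{f,i}$ on $D_i$, with reward $c(v,j)=v_j$, which is continuous and bounded by $\bar V$. This is the separable special case of \cref{prop: kantorovich 0}, with strategies collapsed to destinations. I would redo the argument directly rather than cite it, because the collapse changes the index set.

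\textbf{Primal existence.} First I would relax the problem to couplings $\pi\in\Pi(\mu_i,\nu_{f,i})$. This set is nonempty, since $\mu_i(V_i)=r_i=\sum_j\nu_{f,i}(j)$, and it is weakly compact on the compact space $V_i\times D_i$. The objective $\int v_j\,d\pi$ is weakly continuous, so an optimal coupling exists.

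\textbf{Dual attainment and the bound $[0,2K]$.} Kantorovich duality \cite[Theorem 5.10]{villani2008optimal} applies to bounded continuous costs on compact spaces, so there is no duality gap. On the discrete side the dual variable is a vector $p\in\mathbb{R}^{|D_i|}$, and the $c$-transform gives the source potential $\psi(v)=\max_j(v_j-p_j)$. This yields exactly the dual objective $\Phi_i(p,f)$ in \eqref{eq: dual phi}, which is convex and finite.

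To get a minimizer, I would reduce to a compact set using two facts:
\begin{enumerate}
\item Adding a constant to every $p_j$ leaves $\Phi_i$ unchanged, because $\sum_j\nu_{f,i}(j)=\mu_i(V_i)$.
\item If some $p_j$ exceeds $\min_{j'}p_{j'}+\bar V$, then destination $j$ is never in the argmax except on a null set.
\end{enumerate}
By fact 2, lowering such a $p_j$ to $\min_{j'}p_{j'}+\bar V$ does not increase $\Phi_i$. Combining this with fact 1 (normalizing so that $\min_j p_j=0$), we may restrict to $p\in[0,\bar V]^{|D_i|}$. Taking $K=\bar V$, a minimizer $\tau_i\in[0,2K]^{|D_i|}$ exists by continuity of $\Phi_i$ on this compact set.

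\textbf{Complementary slackness gives the map.} Take an optimal coupling $\pi$ and the dual optimizer $\tau_i$. Equality of the primal and dual values gives
\[
\int\big(\psi(v)+\tau_{i,j}-v_j\big)\,d\pi=0,
\]
and the integrand is pointwise nonnegative. Hence $\pi$ is concentrated on $\{(v,j): j\in\arg\max_{\hat j}(v_{\hat j}-\tau_{i,\hat j})\}$.

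By the footnote on a.e.-disjointness (footnote~\ref{footnote: disjoint}), applied with $U_i(x,y)=x-y$, the argmax is a singleton for $\mu_i$-a.e. $v$. Therefore $\pi$ is induced by the map $T(v)=\arg\max_j(v_j-\tau_{i,j})$. This gives existence of an optimal transport map, and any optimal map (viewed as a coupling) satisfies $T^{-1}(j)=V_{i,j}(\tau_i)$ a.e., because the sets $V_{i,j}(\tau_i)$ are a.e.-disjoint and cover $V_i$. The value formula \eqref{eq: solution t} then follows by integrating $v_j$ over these sets.

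\textbf{Main obstacle.} I expect the main obstacle to be dual attainment together with the bound. Specifically, I need to show that the truncation in fact 2 preserves optimality, including when some $\nu_{f,i}(j)=0$, in which case $p_j$ can be sent arbitrarily high. The truncation handles this case: after capping $p_j$ at $\min_{j'}p_{j'}+\bar V$, destination $j$ receives zero mass, so $V_{i,j}(\tau_i)$ is null, which is consistent with $\nu_{f,i}(j)=0$. Everything else is routine semi-discrete optimal transport.
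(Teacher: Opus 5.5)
Your proposal is correct. Its skeleton matches the paper's: Kantorovich duality, the $c$-transform reduction to $\Phi_i(p,f)$, complementary slackness, and translation invariance plus truncation to bound the dual optimizer. The paper proves the result by rerunning the proof of \cref{prop: kantorovich 0} with destinations in place of strategies and then invoking \cref{footnote: disjoint}.

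The one substantive difference is how you obtain a deterministic optimal map. The paper appeals to \cref{lemma: transport map}, i.e., disintegration plus Lyapunov's convexity theorem. That argument needs only nonatomicity of $\mu_i$, which is what the general strategy-indexed setting requires, since there the sets $V_{i,P}(q)$ can overlap on sets of positive measure. You instead use the a.e.\ uniqueness of the argmax (absolute continuity plus the hyperplane structure of the indifference sets). This shows that every optimal coupling is concentrated on the graph of $v\mapsto\arg\max_j(v_j-\tau_{i,j})$ and is therefore induced by that map. Your route is more elementary and gives slightly more: it shows directly that the optimal coupling is unique and deterministic. However, it is specific to this separable, destination-indexed case.

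There are two smaller differences. You get primal existence from weak compactness of the coupling set rather than from the map lemma. You also get dual attainment directly from continuity of $\Phi_i$ on the cube $[0,\bar V]^{|D_i|}$, instead of quoting attainment from Villani's Theorem~5.10. Your truncation level $\min_{j'}p_{j'}+\bar V$ is a sharper version of the paper's $2K$.
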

\begin{proof}
The proof is identical to that of \cref{prop: kantorovich 0} after replacing the discrete target space of strategies with the discrete destination set $D_i$. Consequently, the dual variable is indexed by destinations, and replacing each strategy $P$ with its associated destination $j$ gives the stated result. Finally, by \cref{footnote: disjoint}, the maximizer in $\arg\max_{j \in D_i} (v_j - \tau_{i, j})$ is unique for $\mu_i$-almost every $v$. Hence, $T^{-1}(j) = V_{i, j}(\tau_i)$ a.e.
\end{proof}

\begin{remark}
In \cref{app: alternative}, we provide an alternative proof showing that any social optimum is threshold-based, without relying on Kantorovich duality. This alternative approach may be of independent interest. 
\end{remark}

Below, we present two additional results. The first is a comparative statics result that relates changes in the value function $\mathcal{S}_i(f)$ to changes in the flow $f$. The second establishes the convexity structure of the social welfare optimization problem \eqref{eq: upper 0}.

We begin with a definition. Let $\Lambda_i(f)$ denote the set of optimal solutions to \eqref{eq: dual phi} given $f$; that is,
\begin{equation}
\label{eq: lambda}
\Lambda_i(f) = \arg\min_{p \in [0, 2K]^{|D_i|}}\Phi_i(p, f), \forall i \in [k].
\end{equation}
Since the objective in \eqref{eq: dual phi} is not necessarily strictly convex, $\Lambda_i(f)$ may contain multiple optimal solutions. Let $\partial \mathcal{S}_i(f)$ be the superdifferential of $\mathcal{S}_i(f)$.

\begin{lemma}
\label{lemma: envelope}
The value function $\mathcal{S}_i(f)$ is concave with respect to the flow $f$. Also, the following hold.
\begin{enumerate}
    \item $\partial \mathcal{S}_i(f) = \{g \mid \exists \lambda_i \in \Lambda_i(f) \text{ s.t. } \forall j \in D_i, \forall P \in \mathcal{P}_{i, j}, g_P = \lambda_{i,j} \}$.
    \item If $\Lambda_i(f)$ is a singleton, $\mathcal{S}_i(f)$ is differentiable at $f$ with $\frac{\partial \mathcal{S}_i(f)}{ \partial f_P} = \lambda_{i, j}, \forall P \in \mathcal{P}_{i}$.
\end{enumerate}
\end{lemma}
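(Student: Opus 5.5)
The plan is to treat $\mathcal{S}_i$ as a function of the destination masses $\nu := \nu_{f,i} \in \mathbb{R}_+^{|D_i|}$ and compose with the linear map $f \mapsto \nu_{f,i}$. By Kantorovich duality (\cref{prop: kantorovich}), strong duality holds:
\begin{equation}
\mathcal{S}_i(f) = \min_{p \in [0,2K]^{|D_i|}} \Phi_i(p,f) = \min_{p} \Bigl( \int_{V_i} \max_{j}(v_j - p_j)\, d\mu_i(v) + \sum_{j} p_j \nu_{f,i}(j) \Bigr).
\end{equation}
For fixed $p$ the bracket is affine in $f$, because $\nu_{f,i}(j) = \sum_{P \in \mathcal{P}_{i,j}} f_P$ is linear. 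A pointwise minimum of affine functions is concave, which gives concavity of $\mathcal{S}_i$ on the feasible set of $f$, i.e.\ the set with $\sum_{P\in\mathcal{P}_i} f_P = r_i$ and $f \ge 0$. The restriction of $p$ to the compact box $[0,2K]^{|D_i|}$ does not change the value, by the bound on $\tau_i$ in \cref{prop: kantorovich}. I need $K$ to be uniform over feasible $f$, so I will check that the proof of that bound depends only on $\bar V$ and $r_i$.

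For part 1, I will apply Danskin's theorem for minima (equivalently, the envelope theorem for a min of affine functions over a compact index set). The function $\Phi_i(p,f)$ is jointly continuous: in $p$ it is Lipschitz because $\max_j(v_j - p_j)$ is $1$-Lipschitz in $p$ and $\mu_i$ is finite, and it is affine in $f$. Danskin's theorem then gives
\begin{equation}
\partial \mathcal{S}_i(f) = \operatorname{conv}\{\nabla_f \Phi_i(p,f) : p \in \Lambda_i(f)\}.
\end{equation}
The gradient $\nabla_f \Phi_i(p,f)$ is the vector $g$ with $g_P = p_j$ for $P \in \mathcal{P}_{i,j}$. The set $\Lambda_i(f)$ is convex, being the argmin of a convex function of $p$ over a convex set, and the map $p \mapsto g$ is linear, so the convex hull adds nothing. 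This gives exactly the stated set. Two technicalities need care here.

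\begin{itemize}
\item Danskin is usually stated for open domains, while $f$ lives on the simplex-like feasible set. I will either extend $\Phi_i$ affinely to all $f \in \mathbb{R}^{\mathcal{P}_i}$, noting that the $\mu_i$-term does not involve $f$, so $\mathcal{S}_i$ extends to a finite concave function everywhere, or interpret the superdifferential relative to the affine hull.
\item The extension off the constraint $\sum_j \nu(j) = r_i$ changes the meaning of $\mathcal{S}_i$ there. This is harmless because the dual formula defines a concave function regardless.
\end{itemize}

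Part 2 then follows: if $\Lambda_i(f) = \{\lambda_i\}$, the superdifferential is a singleton, and a finite concave function with a singleton superdifferential is differentiable. The gradient is $g$ with $g_P = \lambda_{i,j}$ for $P \in \mathcal{P}_{i,j}$.

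The main obstacle is making the Danskin step rigorous: I need compactness of the $p$-domain uniformly in $f$, and the correct handling of the domain boundary and affine hull. If Danskin in convex-analysis form is awkward to apply, the fallback is a direct argument. For $\lambda \in \Lambda_i(f)$, weak duality gives $\mathcal{S}_i(f') \le \Phi_i(\lambda,f') = \mathcal{S}_i(f) + \langle g, f'-f\rangle$, so $g$ is a supergradient. The reverse inclusion, that every supergradient arises this way, I would get from the standard max-of-compact-family subdifferential formula.
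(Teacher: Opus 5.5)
Your proposal is correct and is essentially the paper's proof: you write $\mathcal{S}_i(f)$ through the dual as a minimum over the compact box of functions $\Phi_i(p,\cdot)$ that are affine in $f$, deduce concavity, apply Danskin's theorem to get the convex hull of the gradients $g_P=\lambda_{i,j}$, drop the convex hull using convexity of $\Lambda_i(f)$ and linearity of $\lambda_i\mapsto g$, and conclude differentiability from a singleton superdifferential. The extra points you flag are left implicit in the paper: uniformity of $K$ (which in fact depends only on $\bar{V}$), convexity of $\Lambda_i(f)$, and extending $\Phi_i(p,\cdot)$ affinely to all of $\mathbb{R}^{\mathcal{P}_i}$ so that Danskin applies at boundary flows; of your two domain options, it is this affine extension with $p$ kept in the box that yields exactly the stated set, whereas a superdifferential taken relative to the affine hull would only determine supergradients modulo multiples of $\mathbf{1}$.
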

\begin{proof}
By the dual representation \eqref{eq: dual phi}, $\mathcal{S}_i(f) = \min_{p \in [0, 2K]^{|D_i|}} \Phi_i(p, f)$, where $\Phi_i(p, f)$ is affine in $f$ for each $p$. Hence, $\mathcal{S}_i$ is the pointwise minimum of a family of affine functions of $f$, so it is concave in $f$. To characterize the superdifferential $\partial \mathcal{S}_i(f)$, since $\Phi_i(p, f)$ is affine  (and hence differentiable) in $f$, we invoke Danskin's theorem (cf. \cite[Proposition B.22]{bertsekas1997nonlinear}) to get
\begin{equation}
\label{eq: conv}
\partial \mathcal{S}_i(f) = \text{conv}\{\frac{\partial \Phi_i(\lambda_i, f)}{\partial f}\mid \lambda_i \in \Lambda_i(f)\},
\end{equation}
where $\text{conv}\{\cdot\}$ denotes the convex hull. For each $j \in D_i$ and $P \in \mathcal{P}_{i, j}$, we have
\begin{equation}
\label{eq: derivative}
\frac{\partial \Phi_i(\lambda_i, f)}{\partial f_P} = \lambda_{i, j}.
\end{equation}
Also, since $\Lambda_i(f)$ is a convex set, and the mapping $\lambda_i \mapsto \frac{\partial \Phi_i(\lambda_i, f)}{\partial f}$ is linear, its image is also convex, so \eqref{eq: conv} is equivalent to $\{\frac{\partial \Phi_i(\lambda_i, f)}{\partial f}| \lambda_i \in \Lambda_i(f)\}$. When $\Lambda_i(f)$ contains a unique element, the superdifferential contains a single point, implying that $\mathcal{S}_i(f)$ is differentiable.
\end{proof}

\begin{corollary}
\label{cor: convex}
The social welfare optimization problem \eqref{eq: upper 0} is a convex optimization problem.
\end{corollary}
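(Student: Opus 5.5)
To prove \cref{cor: convex}, I would write \eqref{eq: upper 0} as the maximization of a concave objective over a convex feasible set. That is the same as minimizing a convex function over a convex set.

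\textbf{Feasible set.} The constraints \eqref{eq: upper con1 0}--\eqref{eq: upper con3 0} are linear in the path flows $f$. The nonnegativity and rate constraints are linear by inspection. The capacity constraint is linear because each edge flow $f_e=\sum_{P\ni e} f_P$ is a linear function of $f$, so $c_e \ge f_e$ is a linear inequality (it is vacuous when $c_e=\infty$). The feasible region is therefore a polyhedron, and in fact a compact one, since the rate constraints together with nonnegativity bound every $f_P$.

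\textbf{First term.} By \cref{lemma: envelope}, each $\mathcal{S}_i(f)$ is concave in $f$. The reason is that it equals $\min_{p\in[0,2K]^{|D_i|}}\Phi_i(p,f)$, and $\Phi_i(p,f)$ is affine in $f$ through $\nu_{f,i}$, which is itself linear in $f$. A finite sum of concave functions is concave, so $\sum_i \mathcal{S}_i(f)$ is concave.

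\textbf{Second term.} This is the step that needs care. I would rewrite the total latency edge by edge:
\begin{equation*}
\sum_{P\in\mathcal{P}} f_P\ell_P(f)=\sum_{P}\sum_{e\in P} f_P\ell_e(f_e)=\sum_{e\in E} f_e\ell_e(f_e).
\end{equation*}
In the routing model of \cref{sec: examples and applications}, every $\ell_e$ is assumed standard, meaning $x\mapsto x\ell_e(x)$ is convex on $\mathbb{R}_+$. Composing this convex function with the linear map $f\mapsto f_e$ gives a convex function of $f$, and summing over edges preserves convexity. Hence $-\sum_P f_P\ell_P(f)$ is concave.

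\textbf{Conclusion.} The objective \eqref{eq: upper obj 0} is concave on a convex (polyhedral) feasible set, so \eqref{eq: upper 0} is a convex optimization problem. Since the objective is continuous (by \cref{cor: exist opt} and continuity of $\ell_e$) and the feasible set is compact, the supremum is attained. By \cref{lemma: envelope}, superdifferentials of the objective are available for the KKT analysis that follows.

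The only point where an assumption is genuinely needed is the second term: it requires the edge rewriting and the standardness of the $\ell_e$. Without standardness, $f_e\ell_e(f_e)$ could fail to be convex and the claim could fail.
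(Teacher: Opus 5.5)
Your proposal is correct and follows the same structure as the paper's proof. Concavity of each $\mathcal{S}_i$ comes from \cref{lemma: envelope}, the total latency term is convex, and the constraints are linear. The one place you differ is an improvement. The paper asserts that each $f_P\ell_P(f)$ is convex ``by assumption,'' but standardness of the $\ell_e$ does not give this, and it fails in general: with linear latencies, $f_P\ell_P(f)$ has a cross term $f_Pf_Q$ for any path $Q$ sharing an edge with $P$ but no $f_Q^2$ term. Your rewriting $\sum_P f_P\ell_P(f)=\sum_e f_e\ell_e(f_e)$ is exactly what is needed to apply the standardness hypothesis correctly.
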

\begin{proof}
By \cref{lemma: envelope}, each $\mathcal{S}_i$ is concave. By assumption, $f_P \ell_P(f)$ is convex for every path $P$, so $-\sum_{P\in\mathcal P}f_P\ell_P(f)$ is concave. Hence, the objective $\sum_{i \in [k]} \mathcal{S}_i(f) - \sum_{P \in \mathcal{P}} f_P \ell_P(f)$ is concave. Since the feasible region is defined by linear constraints,
\eqref{eq: upper 0} is a convex optimization problem.
\end{proof}

\subsection{Implementation of Social Optima}
The convex structure established in \cref{cor: convex} results in the following characterization.
\begin{proposition}
\label{prop: opt}
Given an instance $(G, r, \ell, c, V)$, a feasible $f$ is an optimal solution to \eqref{eq: upper} if and only if there exist $\tau$ and $\delta\geq 0$ such that
\begin{enumerate}
    \item for any $e \in E$, we have
    \begin{equation}
    \label{eq: cond 3}
    c_e > f_e \Rightarrow \delta_e = 0,
    \end{equation}

    \item for any $i \in [k], P \in \mathcal{P}_{i, j}, Q \in \mathcal{P}_{i, j'}$ with $f_P > 0$, we have
\begin{equation}
\label{eq: cond 5}
\tau_{i, j} - \delta_{P} - \hat{\ell}_P(f) \geq \tau_{i, j'} - \delta_{Q} - \hat{\ell}_Q(f),
\end{equation}
where $\delta_P = \sum_{e \in P} \delta_e$ and $\delta_Q = \sum_{e \in Q} \delta_e$.
\end{enumerate}
\end{proposition}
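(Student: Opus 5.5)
The plan is to read \cref{prop: opt} as the KKT system of the convex program \eqref{eq: upper 0}. This is the routing specialization of \eqref{eq: upper}, and \cref{prop: max sw} shows it is equivalent. The nonsmooth term $\mathcal{S}_i$ is handled through its superdifferential from \cref{lemma: envelope}. Throughout, $\tau_i$ should be read as an element of $\Lambda_i(f)$, i.e.\ an optimal dual potential of \eqref{eq: dual phi}. This is how the threshold enters via \cref{prop: kantorovich}, and without it condition 2 alone could not be sufficient. By \cref{cor: convex}, \eqref{eq: upper 0} maximizes a concave objective over a polyhedron. Hence, under the standing constraint qualification, the (nonsmooth) KKT conditions are necessary and sufficient for optimality.

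\textbf{Step 1: the Lagrangian.} Attach multipliers $\sigma_i\in\mathbb{R}$ to the rate constraints \eqref{eq: upper con2 0}, $\delta_e\ge 0$ to the capacity constraints \eqref{eq: upper con3 0}, and $\eta_P\ge 0$ to the nonnegativity constraints \eqref{eq: upper con1 0}. For the congestion term I will use the edge form $\sum_P f_P\ell_P(f)=\sum_e f_e\ell_e(f_e)$. Its partial derivative in $f_P$ is then $\sum_{e\in P}\hat\ell_e(f_e)=:\hat\ell_P(f)$, which correctly accounts for edges shared between paths.

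\textbf{Step 2: stationarity.} The stationarity condition reads
\[
0\in \partial\mathcal{S}_i(f)_P-\hat\ell_P(f)-\delta_P-\sigma_i+\eta_P \quad \text{for all } P\in\mathcal{P}_i .
\]
By \cref{lemma: envelope}(1), a supergradient of $\mathcal{S}_i$ has $P$-component $\tau_{i,j}$ for $P\in\mathcal{P}_{i,j}$, for some $\tau_i\in\Lambda_i(f)$ common to all $P\in\mathcal{P}_i$. Stationarity therefore becomes
\[
\tau_{i,j}-\delta_P-\hat\ell_P(f)=\sigma_i-\eta_P\le\sigma_i,
\]
with equality whenever $f_P>0$ by complementary slackness $\eta_Pf_P=0$.

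\textbf{Step 3: necessity.} Complementary slackness for capacities, $\delta_e(c_e-f_e)=0$, gives \eqref{eq: cond 3}. If $f_P>0$, then for any $Q\in\mathcal{P}_{i,j'}$,
\[
\tau_{i,j}-\delta_P-\hat\ell_P(f)=\sigma_i\ge\tau_{i,j'}-\delta_Q-\hat\ell_Q(f),
\]
which is \eqref{eq: cond 5}.

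\textbf{Step 4: sufficiency.} Given $\tau_i\in\Lambda_i(f)$ and $\delta\ge 0$ satisfying the two conditions, construct the remaining multipliers explicitly. Condition \eqref{eq: cond 5}, applied in both directions, forces the quantity $\tau_{i,j}-\delta_P-\hat\ell_P(f)$ to be equal across all used paths of group $i$ and maximal over all of $\mathcal{P}_i$. Set $\sigma_i$ to this common maximal value and $\eta_P:=\sigma_i-(\tau_{i,j}-\delta_P-\hat\ell_P(f))\ge 0$, which vanishes on used paths. Then all KKT conditions hold.

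Optimality follows directly from concavity, with no constraint qualification needed in this direction. For any feasible $g$, the supergradient inequality for $\mathcal{S}_i$ and the gradient inequality for the convex function $\sum_e f_e\ell_e(f_e)$ bound the change in the objective from above by
\[
\sum_P\bigl(\tau_{i,j}-\hat\ell_P(f)\bigr)(g_P-f_P).
\]
Substituting the multipliers, this equals
\[
\sum_P\bigl(\sigma_i-\eta_P+\delta_P\bigr)(g_P-f_P).
\]
The $\sigma_i$ terms vanish because both $f$ and $g$ satisfy the rate constraints. The $\eta$ terms contribute $-\sum_P\eta_Pg_P\le 0$. The $\delta$ terms equal $\sum_e\delta_e(g_e-c_e)\le 0$. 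Hence $f$ is optimal.

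\textbf{Main obstacle.} The delicate point is the nonsmoothness of $\mathcal{S}_i$. One must use the superdifferential description in \cref{lemma: envelope} rather than a gradient, and insist that a single $\tau_i\in\Lambda_i(f)$ serves all paths of group $i$ simultaneously; that is what makes the vector a genuine supergradient. I would state this requirement explicitly when writing the proof. Necessity additionally relies on the assumed constraint qualification (e.g.\ Slater's condition) so that the nonsmooth KKT theorem applies.
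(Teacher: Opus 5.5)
Your proof is correct and follows essentially the paper's route: the KKT conditions for the concave program \eqref{eq: upper 0}, the superdifferential of $\mathcal{S}_i$ from \cref{lemma: envelope} to obtain a single $\tau_i$ per group, necessity via complementary slackness, and sufficiency by constructing $\sigma_i,\eta_P$ from a maximizing path. Your explicit requirement $\tau_i\in\Lambda_i(f)$ is what the paper's reverse direction uses silently when it claims stationarity holds ``by construction''; without it the statement is false, since with one path per destination and $c=\infty$ the choice $\tau_{i,j}=\hat{\ell}_{P_j}(f)$ satisfies condition 2 for every feasible $f$. Your direct weak-duality/supergradient check of sufficiency also removes any need for a constraint qualification in that direction.
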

\begin{proof}
See \cref{app: Proof of prop opt}.
\end{proof}

The variables $\delta_e$ are the Lagrange multipliers associated with the capacity constraints and can be interpreted as congestion prices or queue delays. Condition \eqref{eq: cond 3} is the complementary slackness condition: only binding capacities have positive prices. Condition \eqref{eq: cond 5} states that after augmenting each path's marginal latency by these prices, every used path is utility-maximizing. Thus, the capacity-constrained optimization problem can be viewed as an unconstrained routing game with modified edge costs.

When there are no capacity constraints ($c=\infty$), every multiplier satisfies $\delta_e=0$. The characterization reduces to the classical marginal-cost principle: users choose minimum marginal-cost paths. Thus, we recover the classical marginal-cost tolling theorem as a special case while allowing heterogeneous destination valuations.
\begin{corollary}
\label{cor: opt no c}
Given an instance $(G, r, \ell, c, V)$ with $c = \infty$, a feasible flow $f$ is an optimal solution to \eqref{eq: upper} if and only if there exists a threshold $\tau$ such that for any $i \in [k], P \in \mathcal{P}_{i, j}, Q \in \mathcal{P}_{i, j'}$ with $f_P > 0$, we have
\begin{equation}
\tau_{i, j} - \hat{\ell}_P(f) \geq \tau_{i, j'} - \hat{\ell}_Q(f).
\end{equation}
\end{corollary}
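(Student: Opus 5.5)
We would derive \cref{cor: opt no c} directly from \cref{prop: opt} by specializing to $c=\infty$. The plan is to show that in this case every capacity multiplier $\delta_e$ vanishes, after which condition \eqref{eq: cond 5} becomes exactly the stated inequality.

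For the ``only if'' direction, let $f$ be optimal. \cref{prop: opt} gives a threshold $\tau$ and multipliers $\delta\ge 0$ satisfying \eqref{eq: cond 3} and \eqref{eq: cond 5}. Under the convention of the paper, $c_e=\infty$ means $c_e \ge \sum_{i\in I} r_i$ for the groups that can use $e$. The feasibility constraint $c_e\ge f_e$ is then implied by the rate constraints \eqref{eq: upper con2 0} and is redundant. We should not rely on strict slackness, because $f_e$ may equal $\sum_{i \in I} r_i$. Instead, we would run the KKT argument behind \cref{prop: opt} with these redundant constraints removed; equivalently, the constraint qualification allows us to choose multipliers with $\delta_e=0$ on every such edge. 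This is the one point needing care. The cleanest route is to note that \eqref{eq: upper 0} with $c=\infty$ has the same feasible set without \eqref{eq: upper con3 0}, and to apply \cref{prop: opt} to that formulation, where no $\delta$ appears. With $\delta=0$, we get $\delta_P=\delta_Q=0$, and \eqref{eq: cond 5} reduces to $\tau_{i,j}-\hat{\ell}_P(f)\ge \tau_{i,j'}-\hat{\ell}_Q(f)$ for every $P\in\mathcal{P}_{i,j}$ with $f_P>0$ and every $Q\in\mathcal{P}_{i,j'}$.

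For the ``if'' direction, suppose $\tau$ satisfies the displayed inequalities. Set $\delta_e=0$ for all $e$. Then $\delta\ge 0$ holds, and \eqref{eq: cond 3} holds trivially because its conclusion $\delta_e=0$ is always true. Condition \eqref{eq: cond 5} coincides with the hypothesis, so \cref{prop: opt} shows that $f$ is optimal.

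The only subtlety is the treatment of $\delta$ when an infinite-capacity edge is saturated in the ``only if'' direction. We handle it by dropping the redundant capacity constraints before invoking \cref{prop: opt}. Alternatively, we could invoke the convexity from \cref{cor: convex} together with the superdifferential description in \cref{lemma: envelope} to write the KKT conditions directly with only the constraints \eqref{eq: upper con1 0} and \eqref{eq: upper con2 0}.
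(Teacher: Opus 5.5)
Your reduction is correct and matches the paper's implicit argument (the paper only notes that with $c=\infty$ every capacity multiplier vanishes, so \cref{prop: opt} with $\delta=0$ gives the statement), and your extra step of discarding the redundant constraints \eqref{eq: upper con3 0} properly handles the degenerate case $f_e=c_e$ that the convention $c_e\ge\sum_{i\in I} r_i$ permits and the paper glosses over; equivalently, you can enlarge each $c_e$ strictly beyond $\sum_{i\in I} r_i$, which changes neither the feasible set, the objective, nor $\Lambda_i(f)$. One caveat you inherit from \cref{prop: opt}: in the ``if'' direction $\tau_i$ must be an optimal dual threshold, $\tau_i\in\Lambda_i(f)$, as in that proposition's proof, where $\tau$ is taken from $\partial\mathcal{S}_i(f)$ so that stationarity holds ``by construction''; for unrestricted $\tau$ the flow $(f_1,f_2)=(0,9)$ in \cref{ex: two 1} satisfies the inequalities with $\tau_1=(0,100)$ yet is not optimal, and your argument carries the requirement $\tau_i\in\Lambda_i(f)$ through unchanged.
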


Motivated by the optimality conditions above, we now define a modified routing game in which agents face both marginal-cost latencies and edge-specific prices. We will show that choosing the prices equal to the optimal dual variables implements the social optimum as a Nash equilibrium. First, we replace the latency functions with their marginal costs, as in classical marginal-cost tolling. Second, we introduce an edge-specific charge $\delta_e$, corresponding to the dual variable of the capacity constraint. These two prices align the incentives of self-interested agents with the social welfare objective.

To formalize the ideas, with $\delta_e$, we define the effective latency of edge $e$ to be $\ell_e(f_e) + \delta_e$ and replace the original utility function \eqref{eq: utility} with
\begin{equation}
\label{eq: utility2}
u_{i, v}(a_{i, v}, a_{-(i, v)}) = 
v_j-\ell_P(f) - \delta_P
\end{equation}
where $P = a_{i, v} \in \mathcal{P}_{i, j}$ and $\delta_P = \sum_{e\in P} \delta_e$. This formulation is closely related to the competitive-price interpretation of assignment games (see Shapley and Shubik \cite{shapley1971assignment}) and to the queue-equilibrium interpretation of Larsson and Patriksson\cite{Larsson1998}, where $\delta_e$ represents queue delay rather than a monetary toll. We denote an instance under \eqref{eq: utility2} as $(G, r, \ell, c, V, \delta)$.\footnote{Alternatively, one may treat the capacity constraints as hard constraints:
\begin{equation}
\label{eq: utility alternative}
u_{i, v}(a_{i, v}, a_{-(i, v)}) = \begin{cases}
v_j-\ell_P(f) - \delta_P, &a_{i, v} = P \in \mathcal{P}_{i, j}, \; f_{e} \leq c_{e}, \forall e \in P \\
-\infty, &a_{i, v} = P \in \mathcal{P}_{i, j}, \; \exists e \in P, f_e > c_e.
\end{cases}
\end{equation}
When the edge fees are the dual variables to \eqref{eq: upper}, there is no constraint violation in equilibrium, so \eqref{eq: utility2} and \eqref{eq: utility alternative} are essentially the same. However, in general, they lead to different outcomes.} The following result extends the threshold characterization of Nash equilibria in \cref{prop: ne} to routing games with edge prices. 
\begin{corollary}
\label{cor: ne 2}
Given an instance $(G,r, \ell, c, V, \delta)$, a Nash equilibrium is threshold-based. Moreover, a feasible flow $f$ is induced by a Nash equilibrium if and only if the following hold for every commodity $i$.

\begin{enumerate}
    \item $f_e \leq c_e, \forall e \in E$.
    \item For any paths $P, Q \in \mathcal{P}_{i, j}$ with $f_P > 0$, $\ell_P(f) + \delta_P \leq \ell_Q(f) + \delta_Q$.
    \item $\sum_{P \in \mathcal{P}_{i, j}}f_P = \mu_i(V_{i, j}(\tau_i)), \forall j \in D_i$, where $\tau_i$ is given by 
    \begin{equation}
\label{eq: tau 2}
\tau_i = (\min_{P \in \mathcal{P}_{i, 1}} \ell_P(f) + \delta_P, \min_{P \in \mathcal{P}_{i, 2}} \ell_P(f) + \delta_P, ..., \min_{P \in \mathcal{P}_{i, |D_i|}} \ell_P(f) + \delta_P).
\end{equation}
\end{enumerate}
\end{corollary}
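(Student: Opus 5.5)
The plan is to reduce \cref{cor: ne 2} to \cref{prop: ne} by an explicit change of latency functions. Define the modified edge latencies $\tilde{\ell}_e(x) := \ell_e(x) + \delta_e$. Since $\delta_e \geq 0$ is a constant, each $\tilde{\ell}_e$ is still nonnegative, nondecreasing and continuous. Hence $(E,\mathcal{P}, r, \tilde{\ell}, c, V, U)$ with $U_i(x,y) = x - y$ is an admissible instance of the general model of \cref{sec: model}. Its path latencies are $\tilde{\ell}_P(f) = \sum_{e\in P}(\ell_e(f_e)+\delta_e) = \ell_P(f) + \delta_P$.

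The first step is to check that the two games have the same utilities, so that they have identical Nash equilibria. For agent $(i,v)$ choosing $P \in \mathcal{P}_{i,j}$, the utility in the modified instance is $v_j - \tilde{\ell}_P(f)$, which equals \eqref{eq: utility2} whenever the capacity constraints on $P$ hold.

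The one real subtlety is that \eqref{eq: utility2} imposes no $-\infty$ penalty for capacity violation, while \eqref{eq: utility} does. I will handle this in the way the footnote around \eqref{eq: utility alternative} indicates: read the capacitated instance $(G,r,\ell,c,V,\delta)$ under the hard-constraint convention \eqref{eq: utility alternative}. This is consistent with condition 1 of the statement, which requires $f_e \le c_e$. Under this reading, \eqref{eq: utility alternative} coincides exactly with \eqref{eq: utility} for the instance with latencies $\tilde{\ell}$. The equality holds pointwise in $(i,v)$, for every action profile and every deviation $b$. Therefore the set of Nash equilibria (\cref{def: NE}) is the same in both games, and so is the set of induced flows.

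With this identification, I apply \cref{prop: ne} to the $\tilde{\ell}$-instance. It gives that every Nash equilibrium is threshold-based. It also gives that a feasible $f$ is induced by a Nash equilibrium iff three conditions hold. First, $f_e\le c_e$. Second, for $P,Q\in\mathcal{P}_{i,j}$ with $f_P>0$ we have $\tilde{\ell}_P(f)\le\tilde{\ell}_Q(f)$. Third, $\sum_{P\in\mathcal{P}_{i,j}} f_P = \mu_i(V_{i,j}(\tau_i))$ with $\tau_{i,j} = \min_{P\in\mathcal{P}_{i,j}} \tilde{\ell}_P(f)$. Substituting $\tilde{\ell}_P = \ell_P + \delta_P$ turns these into exactly conditions 1–3 and \eqref{eq: tau 2}. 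Note that the threshold sets $V_{i,j}(\tau_i)$ in \eqref{eq: threshold} depend only on $U_i$ and $\tau_i$, so they carry over unchanged.

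The main obstacle is the convention issue above. If one instead insists on the soft version \eqref{eq: utility2} literally, condition 1 cannot be derived from equilibrium alone. In that reading, I would restate the result as holding for the game with hard constraints, or add $f_e \le c_e$ as part of the hypothesis on the flow. All the other steps are direct substitutions into \cref{prop: ne}.
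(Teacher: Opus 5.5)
Your proposal is correct and follows the paper's intended route: the paper gives no separate proof of \cref{cor: ne 2} and presents it as \cref{prop: ne} applied to the effective latencies $\ell_e+\delta_e$ with separable utility, so $\tilde\ell_P=\ell_P+\delta_P$ and the three conditions and \eqref{eq: tau 2} carry over unchanged. Your caveat about the convention is also well founded: under the literal soft utility \eqref{eq: utility2}, equilibrium flows can violate capacities (for example, take $\delta=0$ in \cref{ex: no nash}), so condition 1 is not necessary, and the statement should be read under the hard-capacity convention \eqref{eq: utility alternative}, as you do.
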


We next give an equivalent optimization-based characterization of a Nash equilibrium. The objective differs from the classical Beckmann potential only through the linear edge-price term, which accounts for the capacity prices.
\begin{subequations}
\label{eq: upper with price}
\begin{align}
\label{eq: upper obj 2}
\sup_{f} \quad  &\sum_{i \in [k]} \mathcal{S}_i(f) - \sum_{e \in E} \left(\int_0^{f_e} \ell_e(x) dx +\delta_e(c_e - \sum_{P \ni e} f_P)\right)\\
\label{eq: upper con1-2}
\ \ \text{s.t. } \quad &f_P \geq 0, \forall P \in \mathcal{P}\\
\label{eq: upper con2-2}
&\sum_{P \in \mathcal{P}_i} f_P = r_i, \forall i \in [k].
\end{align}
\end{subequations}

The following result generalizes the classical optimization characterization of Wardrop equilibria with queue delays in \cite[Theorem 3.1]{LARSSON1999233} to endogenous destination choice.
\begin{corollary}
\label{cor: NE delta}
Given an instance $(G, r, \ell, c, V, \delta)$, a feasible flow $f$ is induced by a Nash equilibrium if and only if it maximizes \eqref{eq: upper with price}.
\end{corollary}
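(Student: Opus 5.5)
The plan is to compare the two problems directly: show that a feasible flow satisfying the conditions of \cref{cor: ne 2} satisfies the first-order optimality conditions of \eqref{eq: upper with price}, and conversely. First I would record that \eqref{eq: upper with price} is a concave maximization over a polytope. By \cref{lemma: envelope}, each $\mathcal{S}_i$ is concave. The Beckmann term $\sum_e \int_0^{f_e}\ell_e$ is convex because each $\ell_e$ is nondecreasing. The price term $\delta_e(c_e - f_e)$ is affine. The feasible set, given by \eqref{eq: upper con1-2} and \eqref{eq: upper con2-2}, is a product of simplices. Hence $f$ is a maximizer if and only if some supergradient $g$ of the objective at $f$ satisfies the variational inequality $\sum_P g_P (h_P - f_P) \le 0$ for every feasible $h$. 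Equivalently, for each commodity $i$, every path $P\in\mathcal{P}_i$ with $f_P>0$ must maximize $g_P$ over $\mathcal{P}_i$.

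Next I would compute the supergradients. By \cref{lemma: envelope}, they have the form $g_P = \lambda_{i,j} - \ell_P(f) - \delta_P$ for $P\in\mathcal{P}_{i,j}$ and some $\lambda_i\in\Lambda_i(f)$. The Beckmann term is differentiable by continuity of $\ell_e$. One sign has to be tracked carefully. In \eqref{eq: dual phi}, $\partial\Phi_i/\partial\nu_{f,i}(j) = p_j$, so the marginal value of sending more mass to destination $j$ is the dual price $\lambda_{i,j}$. I would then relate $\lambda_i$ to thresholds. By \cref{prop: kantorovich}, the optimal map sends $v$ to $\arg\max_j (v_j - \lambda_{i,j})$, so $\nu_{f,i}(j) = \mu_i(V_{i,j}(\lambda_i))$. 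The optimality condition then says that every used path $P\in\mathcal{P}_{i,j}$ maximizes $\lambda_{i,j} - \ell_P(f) - \delta_P$ over all paths of commodity $i$.

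For the direction from equilibrium to maximizer, I would start from the conditions of \cref{cor: ne 2}. I would set $\lambda_{i,j} := \tau_{i,j}$ from \eqref{eq: tau 2}, after adding a common constant if needed to land in $[0,2K]$; the partition is invariant under such shifts. Condition 3 gives $\nu_{f,i}(j) = \mu_i(V_{i,j}(\tau_i))$. This makes the threshold map feasible for \eqref{eq: lower}. By complementary slackness in Kantorovich duality, the primal value $\sum_j \int_{V_{i,j}(\tau_i)} v_j\,d\mu_i$ equals $\Phi_i(\tau_i,f)$, so $\tau_i\in\Lambda_i(f)$. Condition 2 then says that each used path $P\in\mathcal{P}_{i,j}$ attains $\ell_P+\delta_P = \tau_{i,j}$. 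Therefore $\tau_{i,j}-\ell_P-\delta_P = 0 \ge \tau_{i,j'} - \ell_Q - \delta_Q$ for every path $Q$. This is exactly the variational inequality.

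For the converse, I would take a maximizer and the $\lambda_i$ supplied by the variational inequality. The main obstacle is to show that $\lambda_{i,j}$ can be identified with $\min_{P\in\mathcal{P}_{i,j}}(\ell_P+\delta_P)$, up to a common additive constant per commodity. Two facts are available. If destination $j$ carries positive flow, then all used paths in $\mathcal{P}_{i,j}$ share the value $\lambda_{i,j}-\ell_P-\delta_P = M_i$, the common maximum, and this gives condition 2. The optimal value then forces $\lambda_{i,j} = \tau_{i,j} + M_i$ on used destinations. For destinations with zero flow, $\lambda_{i,j}$ is only bounded by $\lambda_{i,j} \le \tau_{i,j}+M_i$. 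Raising $\lambda_{i,j}$ up to $\tau_{i,j}+M_i$ should keep $V_{i,j}$ $\mu_i$-null, because $\mu_i(V_{i,j}(\lambda_i)) = 0$ already. I expect this step to need care: one must check that the region stays null, perhaps using absolute continuity of $\mu_i$ and the monotonicity of the sets $V_{i,j}(\cdot)$ in $\lambda$. If the step succeeds, the replacement is still in $\Lambda_i(f)$ with the same partition, and condition 3 follows. Condition 1 is part of the feasibility of $f$ as stated.
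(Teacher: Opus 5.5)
\textbf{Overall assessment.} Your route is the paper's: concavity of \eqref{eq: upper with price}, first-order (KKT/variational-inequality) conditions with supergradients from \cref{lemma: envelope}, and identification of the dual potential with the threshold vector of \cref{cor: ne 2}. You are more careful than the paper at the identification step, which the paper dispatches with ``we may identify $\lambda_i$ with $\tau_i$''. In the forward direction you actually show that the minimal effective latencies form a dual optimizer (zero duality gap for the threshold map, using condition 3). In the converse you see that a dual optimizer $\lambda_i$ equals $\tau_i+M_i\mathbf{1}$ only on destinations carrying flow.

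\textbf{The step you left open goes through.} Let $W_i=\{j\in D_i:\nu_{f,i}(j)>0\}$. Since $\mu_i(V_{i,j}(\lambda_i))=\nu_{f,i}(j)=0$ for $j\notin W_i$, for $\mu_i$-a.e.\ $v$ all maximizers of $v_j-\lambda_{i,j}$ lie in $W_i$ and strictly exceed the value of every $j'\notin W_i$. By your two facts, passing from $\lambda_i$ to $\tilde\lambda_i:=\tau_i+M_i\mathbf{1}$ leaves the coordinates in $W_i$ unchanged and only raises the others. That only lowers values that were already strictly non-maximal, so the argmax is unchanged a.e. Hence $\mu_i(V_{i,j}(\tau_i))=\mu_i(V_{i,j}(\tilde\lambda_i))=\mu_i(V_{i,j}(\lambda_i))=\nu_{f,i}(j)$ for all $j$. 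A minor point: a common shift need not put $\tau_i$ into $[0,2K]^{|D_i|}$ when effective latencies are widely spread. You do not need the box, though, since weak duality gives $\mathcal{S}_i(h)\le\Phi_i(\tau_i,h)=\mathcal{S}_i(f)+\sum_j\tau_{i,j}(\nu_{h,i}(j)-\nu_{f,i}(j))$ for every feasible $h$ directly.

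\textbf{Condition 1.} Your claim that ``Condition 1 is part of the feasibility of $f$'' is wrong. In the paper a feasible flow only satisfies $\sum_{P\in\mathcal{P}_i}f_P=r_i$, and \eqref{eq: upper with price} contains no capacity constraint. So a maximizer can violate $f_e\le c_e$: take $\delta=0$ and a capacity below the uncapacitated equilibrium load. Condition 1 therefore cannot be extracted from maximality. The equivalence can only hold for the soft-price utility \eqref{eq: utility2}, where capacities do not enter payoffs, and there condition 1 should simply be dropped from the equilibrium characterization. The paper's proof has the same omission: it asserts ``the three conditions'' after establishing two.

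\textbf{Sign of the price term.} With \eqref{eq: upper obj 2} exactly as displayed, the term $-\delta_e(c_e-\sum_{P\ni e}f_P)$ contributes $+\delta_P$ to $\partial/\partial f_P$. Its maximizers would then correspond to effective latencies $\ell_P-\delta_P$. Your $-\delta_P$, like the paper's, is correct for the Lagrangian form $\sum_i\mathcal{S}_i(f)-\sum_e\int_0^{f_e}\ell_e(x)\,dx+\sum_e\delta_e(c_e-f_e)$. Since you flag sign-tracking yourself, state explicitly that this is the objective you are working with.
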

\begin{proof}
See \cref{app: proof of NE delta}.
\end{proof}

\cref{prop: opt} identifies the optimal dual variables associated with the capacity constraints. \cref{cor: NE delta} shows that these same variables can be interpreted as edge prices in a nonatomic routing game. Combining the two yields an implementation theorem: pricing each edge by its optimal dual variable and replacing latencies with marginal costs induces every social optimum as a Nash equilibrium.
\begin{corollary}
\label{cor: induce so}
Given an instance $(G, r, \ell, c, V)$, let $\delta$ be the dual variables corresponding to \eqref{eq: upper con3 0}. Then the following are equivalent:
\begin{enumerate}
    \item $f$ is a Nash equilibrium for $(G, r, \hat{\ell}, c, V, \delta)$
    \item $f$ is a social optimum for $(G, r, \ell, c, V)$.
\end{enumerate}
\end{corollary}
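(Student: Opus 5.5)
The plan is to chain \cref{prop: opt} and \cref{cor: NE delta} (equivalently \cref{cor: ne 2}), applied with latencies $\hat{\ell}$ and the fixed prices $\delta$. The key step is to show that, once $\delta$ is fixed at the optimal multipliers, the KKT conditions of \cref{prop: opt} coincide with the Nash conditions of \cref{cor: ne 2} for the instance $(G,r,\hat{\ell},c,V,\delta)$.

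\emph{Social optimum $\Rightarrow$ Nash.} Let $f$ be a social optimum.
\begin{enumerate}
\item By \cref{prop: opt}, with $\delta$ the optimal dual variables of \eqref{eq: upper con3 0}, there is $\tau$ satisfying \eqref{eq: cond 3} and \eqref{eq: cond 5}. Feasibility of $f$ gives condition 1 of \cref{cor: ne 2}.
\item Condition 2: take $j=j'$ in \eqref{eq: cond 5}. For $P,Q\in\mathcal{P}_{i,j}$ with $f_P>0$ this yields $\hat{\ell}_P(f)+\delta_P\le \hat{\ell}_Q(f)+\delta_Q$.
\item Condition 3: set $\tau'_{i,j}:=\min_{P\in\mathcal{P}_{i,j}}(\hat{\ell}_P(f)+\delta_P)$. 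By \cref{prop: kantorovich} and \cref{lemma: envelope}, the $\tau_{i,j}$ appearing in the KKT system lie in $\Lambda_i(f)$. Hence the destination flows $\nu_{f,i}(j)$ equal $\mu_i(V_{i,j}(\tau_i))$.
\item Stationarity: for every used path $P\in\mathcal{P}_{i,j}$, $\tau_{i,j}-\hat\ell_P(f)-\delta_P$ equals a common multiplier $\alpha_i$. For an unused destination $j$ one only has $\tau_{i,j}-\tau'_{i,j}\le\alpha_i$.
\end{enumerate}
If $\tau$ and $\tau'$ differ by a constant on destinations with positive flow, then $V_{i,j}(\tau_i)$ and $V_{i,j}(\tau'_i)$ agree up to null sets on those destinations. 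For unused $j$, $\mu_i(V_{i,j}(\tau_i))=0$, and raising that coordinate to $\tau'_{i,j}$ only shrinks $V_{i,j}$, so it stays null. Condition 3 then follows and $f$ is a Nash flow.

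\emph{Nash $\Rightarrow$ social optimum.} Let $f$ be a Nash flow for $(G,r,\hat{\ell},c,V,\delta)$. By \cref{cor: ne 2} there are thresholds $\tau'_i$ with $\nu_{f,i}(j)=\mu_i(V_{i,j}(\tau'_i))$. By \cref{prop: kantorovich}, $\tau'_i$ is then dual optimal, since a threshold partition with these masses is an optimal transport map. Condition 2 together with the definition of $\tau'$ gives \eqref{eq: cond 5} with $\tau=\tau'$.

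The remaining condition is complementary slackness \eqref{eq: cond 3}, and this is the main obstacle: a Nash flow under prices $\delta$ could a priori leave a priced edge slack. I would handle it with \cref{cor: NE delta}. Since $\hat\ell$ is a marginal cost, $\int_0^{f_e}\hat{\ell}_e=f_e\ell_e(f_e)$, so \eqref{eq: upper with price} with $\hat\ell$ is exactly the Lagrangian of \eqref{eq: upper 0} at multiplier $\delta$. By \cref{cor: convex} and the assumed constraint qualification, strong duality holds and $\delta$ is dual optimal. It follows that the social optima are exactly the Lagrangian maximizers that are feasible and satisfy complementary slackness.

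If the Lagrangian maximizer is not unique, I would argue that every maximizer attains the same Lagrangian value and that the social optimum $f^*$ is among them. Any feasible maximizer $f$ then has primal value at least the Lagrangian value minus $\sum_e\delta_e(c_e-f_e)\cdot 0$... I expect to need care here; the cleanest route is to invoke the standard saddle-point theorem, under which the set of primal optima equals the set of feasible Lagrangian maximizers with zero duality-gap slack.

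If that is insufficient, I would restrict to the case where the Nash flow must satisfy $f_e\le c_e$, which is condition 1, and compare objective values directly.
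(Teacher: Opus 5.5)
Your social optimum $\Rightarrow$ Nash direction is fine. You match the KKT system of \cref{prop: opt} directly to the three conditions of \cref{cor: ne 2}: the KKT threshold and $\tau'_i$ differ by a constant on used destinations, and raising the coordinates of unused destinations moves only a null set. The paper instead goes social optimum $\Rightarrow$ Lagrangian maximizer (saddle point) $\Rightarrow$ Nash via \cref{cor: NE delta}. Both routes work.

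\textbf{The gap is in the converse, exactly where you flagged it, and it cannot be closed.} The saddle-point theorem says the primal optima are exactly the feasible maximizers of the Lagrangian at $\delta$ \emph{that satisfy} $\delta_e(c_e-f_e)=0$. It does not supply that condition. Condition 1 of \cref{cor: ne 2} gives only $f_e\le c_e$, so your fallback of restricting to feasible Nash flows does not help either. In fact, Nash $\Rightarrow$ social optimum is false as stated. Consider the following instance:
\begin{itemize}
\item one commodity with destinations $t$ and an outside option $o$, $\mu_1$ uniform on $[0,1]^2$, $r_1=1$;
\item an uncapacitated zero-latency link $s\to o$;
\item two parallel links $s\to t$: $e_1$ with $\ell_{e_1}\equiv 0$ and $c_{e_1}=1/4$, and $e_2$ with $\ell_{e_2}\equiv 1/10$ and $c_{e_2}=\infty$.
\end{itemize}
The social optimum sends exactly the agents with $v_1-v_2\ge 1/10$ (mass $0.405$) to $t$, saturating $e_1$ and putting $0.155$ on $e_2$. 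Since both $t$-links are used, stationarity forces the multiplier $\delta_{e_1}=1/10$, and it is unique. Under $(\hat\ell,\delta)$ both $t$-links have effective cost $1/10$. So moving all $t$-traffic onto $e_2$ gives a feasible flow that meets all three conditions of \cref{cor: ne 2}, with the same thresholds $(1/10,0)$ and the same destination masses. Yet $e_1$ is slack while $\delta_{e_1}>0$, and welfare drops by $1/40$.

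The paper's proof has the same hole. Its claim that, at an optimal $\delta$, maximizing the Lagrangian ``is equivalent to'' maximizing welfare holds only in the direction you already proved.

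In general, then, only ``every social optimum is a Nash equilibrium of the priced game'' is true. A correct converse needs an extra hypothesis. One option is to add $\delta_e(c_e-f_e)=0$ to the Nash conditions; then \cref{prop: opt} applies directly. Another is to assume every $x\ell_e(x)$ is strictly convex. Then all maximizers of the concave Lagrangian share their edge flows with a social optimum, so they are feasible, complementary slack, and welfare-maximizing.
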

\begin{proof}
See \cref{app: proof of induce so}.
\end{proof}

\subsection{Proof of \cref{prop: opt}}
\label{app: Proof of prop opt}
\begin{proof}
We introduce the dual variables $\beta, \gamma, \delta$ for for the nonnegativity, total flow, and capacity constraints in \eqref{eq: upper 0}, respectively. \cref{cor: convex} implies that a feasible flow $f$ is optimal if and only the KKT conditions hold. These consist of primal feasibility in \eqref{eq: upper 0} and the following set of constraints.
\begin{subequations}
\begin{align}
&\beta, \delta \geq 0.\label{eq: dual feasibility}\\
&\beta^T f = 0.\label{eq: complementary 1}\\
&\delta_e(c_e - \sum_{P \ni e} f_P) = 0, \forall e \in E.\label{eq: complementary 2}\\
&0 \in \partial_{f_P}\mathcal{S}_i(f) -\hat{\ell}_P(f) + \beta_P - \gamma_i - \sum_{e \in P}\delta_e, \forall i \in [k], j \in D_i, P \in \mathcal{P}_{i, j}.\label{eq: stationarity}
\end{align}
\end{subequations}

By \cref{lemma: envelope}, every element of $\partial \mathcal{S}_i(f)$ is constant over all paths leading to the same destination. Hence, there exists a threshold vector $\tau_{i, j}$ satisfying
\begin{equation}
\tau_{i, j} -\hat{\ell}_P(f) + \beta_P - \gamma_i - \sum_{e \in P}\delta_e = 0, \forall i \in [k], j \in D_i, P \in \mathcal{P}_{i, j}
\end{equation}

\textbf{Forward Direction:} Assume $f$ is optimal, so the KKT conditions hold. Then, $\delta \geq 0$ and \eqref{eq: cond 3} follow from \eqref{eq: dual feasibility} and \eqref{eq: complementary 2}. Now, fix any $i \in [k]$ and consider any $P \in \mathcal{P}_{i, j}, Q \in \mathcal{P}_{i, j'}$ with $f_P > 0$. Complementary slackness \eqref{eq: complementary 1} implies that $\beta_P = 0$. Then, using stationarity \eqref{eq: stationarity}, we get
\begin{equation}
\tau_{i, j'} - \delta_Q - \hat{\ell}_Q(f) = \gamma_i - \beta_Q \leq \gamma_i = \tau_{i, j} - \delta_P - \hat{\ell}_P(f),
\end{equation}
which recovers \eqref{eq: cond 5}.

\textbf{Reverse Direction:} Suppose $f$ and $\tau$ satisfy \eqref{eq: cond 3}-\eqref{eq: cond 5}. For each commodity $i$, choose a maximizer $P^*_i \in \arg\max_{P \in \mathcal{P}_i} \tau_{i, j} - \hat{\ell}_{P}(f) - \delta_P$. Then we set $\delta$ as given and define $\beta, \gamma$ as in \eqref{eq: dual variables}. 
\begin{subequations}
\label{eq: dual variables}
\begin{align}
&\gamma_i = \tau_{i, j} - \hat{\ell}_{P^*_i}(f) -\delta_{P_i^*}.\\
&\beta_P = \hat{\ell}_P(f) + \delta_P + \gamma_i - \tau_{i, j}.
\end{align}
\end{subequations}

By construction, it suffices to verify dual feasibility $\beta \geq 0$ in \eqref{eq: dual feasibility} and complementary slackness $\beta^Tf = 0$ in \eqref{eq: complementary 1}. Indeed, \eqref{eq: cond 5} along with the choice of $P_i^*$ implies that $\beta \geq 0$. Since $P_i^*$ maximizes $\tau_{i, j} - \hat{\ell}_{P}(f) - \delta_P$, we have
\begin{equation}
\tau_{i, j} - \hat{\ell}_{P}(f) - \delta_P \leq \tau_{i, j} - \hat{\ell}_{P_i^*}(f) - \delta_{P_i^*}.
\end{equation}
On the other hand, if $f_P > 0$, \eqref{eq: cond 5} with $Q = P_i^*$ gives the reverse inequality. Hence, equality holds, and $\beta_P = 0$. 
\end{proof}

\subsection{Proof of \cref{cor: NE delta}}
\label{app: proof of NE delta}
\begin{proof}
By \cref{lemma: envelope}, the objective in \eqref{eq: upper with price} is concave and the feasible region is convex. Hence, a feasible flow $f$ is optimal if and only if the KKT conditions hold. Let $F(f)$ denote the objective function in \eqref{eq: upper with price}. For each
$P\in\mathcal P_{i,j}$, define $J_P(f)=\frac{\partial F(f)}{\partial f_P}$.
Using \cref{lemma: envelope}, we obtain $J_P(f)
=
\lambda_{i,j}
-
\ell_P(f)
-
\delta_P$ for some $\lambda_i\in\Lambda_i(f)$.

Introduce dual variables $\alpha_i$ for the flow conservation constraints $\sum_{P\in\mathcal P_i}f_P=r_i$, and $\beta_P\ge0$ for the nonnegativity constraints $f_P\ge0$. The stationarity condition is
\begin{equation}
J_P(f)=\alpha_i-\beta_P,\forall i \in [k],
P\in\mathcal P_i,
\end{equation}
while complementary slackness gives
\begin{equation}
\beta_Pf_P=0, \forall P \in \mathcal{P}.
\end{equation}
Therefore, if $f_P > 0$, then $\beta_P = 0$ and so $J_P(f)=\alpha_i$, whereas for any path $Q\in\mathcal P_i$, $J_Q(f)=\alpha_i-\beta_Q\le\alpha_i$. Hence,
\begin{equation}
f_P>0
\quad\Longrightarrow\quad
J_P(f)=\max_{Q\in\mathcal P_i}J_Q(f).
\end{equation}

Substituting the expression for $J_P(f)$ yields
\[
\lambda_{i,j}
-
\ell_P(f)
-
\delta_P
\ge
\lambda_{i,j'}
-
\ell_Q(f)
-
\delta_Q,
\]
for every $Q\in\mathcal P_{i,j'}$, with equality whenever both $P$ and $Q$ carry positive flow.

Within a fixed destination $j$, the above condition states that every used path has minimum effective latency, $\ell_P(f)+\delta_P
\le
\ell_Q(f)+\delta_Q$, which is the third condition of \cref{cor: ne 2}. 

By \cref{lemma: envelope}, $\lambda_i$ is an optimal solution of the dual problem \eqref{eq: dual phi}. Therefore, it is precisely a threshold vector in the sense of \cref{prop: kantorovich}, so we may identify $\lambda_i$ with $\tau_i$.
\begin{equation}
\tau_{i,j}
-
(\ell_P(f)+\delta_P)
\ge
\tau_{i,j'}
-
(\ell_Q(f)+\delta_Q),
\end{equation}
which is precisely the threshold characterization in \cref{cor: ne 2}. The three conditions above are exactly those in \cref{cor: ne 2}. Hence, a feasible flow maximizes \eqref{eq: upper with price} if and only if it is induced by a Nash equilibrium.
\end{proof}

\subsection{Proof of \cref{cor: induce so}}
\label{app: proof of induce so}
\begin{proof}
By \cref{cor: NE delta}, a feasible flow $f$ is a Nash equilibrium for
$(G,r,\hat{\ell},c,V,\delta)$ if and only if it maximizes
\begin{subequations}
\begin{align}
\sup_f\quad
&
\sum_{i\in[k]}\mathcal S_i(f)
-
\sum_{e\in E}
\left(
\int_0^{f_e}\hat{\ell}_e(x)\,dx
+
\delta_e
\left(
c_e-\sum_{P\ni e}f_P
\right)
\right)
\\
\text{s.t.}\quad
&
f_P\ge0,\qquad P\in\mathcal P,
\\
&
\sum_{P\in\mathcal P_i}f_P=r_i,
\qquad
i\in[k].
\end{align}
\end{subequations}

Since $\int_0^{f_e}\hat{\ell}_e(x)\,dx
=
f_e\ell_e(f)$, the objective is  the Lagrangian of the social welfare optimization problem \eqref{eq: upper 0}, with $\delta$ serving as the multiplier associated with the capacity constraints \eqref{eq: upper con3 0}. By \cref{prop: opt}, when $\delta$ is chosen to be an optimal dual solution, maximizing this Lagrangian is equivalent to maximizing the original social welfare problem. Hence, a feasible flow is a Nash equilibrium for
$(G,r,\hat{\ell},c,V,\delta)$ if and only if it is a social optimum for
$(G,r,\ell,c,V)$.
\end{proof}

\subsection{An Alternative Proof of \cref{prop: kantorovich}}
\label{app: alternative}
Below, we present a different proof that shows that any social optimum is threshold-based without invoking optimal transport theory. 

\begin{lemma}
\label{lemma: threshold}
Given an instance $(G, r, \ell, c, V)$, any social optimum $a$ is threshold-based. 
\end{lemma}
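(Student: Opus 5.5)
Fix a social optimum $a$ with induced flow $f$ and a commodity $i$. By \cref{prop: max sw}, $a$ restricted to $V_i$ must solve the lower-level problem \eqref{eq: lower}: agents of commodity $i$ are assigned to destinations with fixed masses $\nu_{f,i}(j)$. If some other assignment with the same masses had strictly larger $\sum_j\int_{T^{-1}(j)}v_j\,d\mu_i$, we could reroute agents within each destination class using the same path split. Then $f$ and the latency term are unchanged, and welfare strictly increases. So it suffices to show that any optimal assignment $T$ of $\mu_i$ to destinations with prescribed masses has the threshold form $T^{-1}(j)=V_{i,j}(\tau_i)$ a.e. for some $\tau_i$. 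I would prove this by an elementary exchange argument, avoiding duality.

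\textbf{Pairwise exchange.} Write $A_j=T^{-1}(j)$. First I claim that for any two destinations $j\neq j'$ and any real $c$, it cannot happen that both
\[
\mu_i\bigl(\{v\in A_j: v_j-v_{j'}<c\}\bigr)>0 \quad\text{and}\quad \mu_i\bigl(\{v\in A_{j'}: v_j-v_{j'}>c\}\bigr)>0 .
\]
If both sets had positive measure, nonatomicity (as in \cref{lemma: feasible flow}) would let us take subsets of equal positive mass $m$ from each and swap their destinations. The masses $\nu_{f,i}$ are preserved. The objective changes by
\[
\int_{B'}(v_j-v_{j'})\,d\mu_i-\int_{B}(v_j-v_{j'})\,d\mu_i ,
\]
where $B\subseteq A_j$ and $B'\subseteq A_{j'}$ are the swapped pieces. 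The first integral exceeds $cm$ and the second is below $cm$, so the change is strictly positive. This contradicts optimality. Hence
\[
\operatorname*{ess\,sup}_{v\in A_{j'}}(v_j-v_{j'})\;\le\;\operatorname*{ess\,inf}_{v\in A_j}(v_j-v_{j'}).
\]
Consequently there is a separating number $c_{jj'}$ with $v_j-v_{j'}\ge c_{jj'}$ a.e. on $A_j$ and $\le c_{jj'}$ a.e. on $A_{j'}$.

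\textbf{Main obstacle: consistency.} Pairwise constants $c_{jj'}$ need not automatically be of the form $\tau_{i,j}-\tau_{i,j'}$. Longer cycles must also be handled, since with three or more destinations one can improve by a cyclic reallocation $A_{j_1}\to A_{j_2}\to\cdots\to A_{j_1}$. My plan is to use a cyclic exchange argument. Let $\underline{c}_{jj'}$ denote the essential infimum of $v_j-v_{j'}$ over $A_j$ (set to $+\infty$ if $A_j$ is null). Cyclic monotonicity gives
\[
\sum_{m}\underline{c}_{j_mj_{m+1}}\ \ge\ 0\quad\text{for every cycle},
\]
since otherwise shifting small equal masses around the cycle from near-infimum regions strictly improves the objective. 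This is the same swap computation with $m$ pieces. A no-negative-cycle condition on the weighted complete graph with weights $\underline{c}_{jj'}$ then yields potentials via shortest paths. Concretely, I would set $-\tau_{i,j}:=$ the shortest-path distance from a fixed root to $j$, so that
\[
\tau_{i,j}-\tau_{i,j'}\le \underline{c}_{jj'}\quad\text{for all } j,j'.
\]
This gives $v_j-\tau_{i,j}\ge v_{j'}-\tau_{i,j'}$ a.e. on $A_j$, i.e. $A_j\subseteq V_{i,j}(\tau_i)$ a.e. Boundedness of $\tau_i$ may fail if some $A_j$ is null. In that case I would restrict attention to destinations with positive mass and assign large thresholds to the others.

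\textbf{Conclusion.} Since the $A_j$ partition $V_i$ and the $V_{i,j}(\tau_i)$ are a.e.-disjoint by \cref{footnote: disjoint}, the inclusions $A_j\subseteq V_{i,j}(\tau_i)$ upgrade to equality a.e. Therefore \eqref{eq: mono} holds and $a$ is threshold-based.
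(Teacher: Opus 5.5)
Your proof is correct, and it follows a genuinely different route from the paper's.

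The paper argues by construction and verification. It first applies Brouwer's fixed-point theorem to a clipped map $\tau\mapsto\tau+\mu_i(V_{i,\cdot}(\tau))-r_{i,\cdot}$ to obtain a threshold vector whose cells carry exactly the destination masses of $a$. It then uses $v_{j_2}-v_{j_1}\ge\tau_{j_2}-\tau_{j_1}$ on the switched sets $A_{j_1,j_2}$, a weak-duality computation, to show that the threshold profile $\hat a$ does at least as well as $a$. The equality case then forces the switched sets to be null.

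You go the other way. From optimality of $a$ you extract a necessary condition: there are no profitable cyclic exchanges, i.e.\ the essential-infimum weights $\underline{c}_{jj'}$ admit no negative cycles (your pairwise step is just the 2-cycle case). You then read off $\tau_i$ as shortest-path potentials, which is a discrete Rockafellar argument, and finish with the a.e.-disjointness of the cells.

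Your route needs no fixed-point theorem and no continuity of $\tau\mapsto\mu_i(V_{i,j}(\tau))$. It also avoids the boundary analysis of the clipped map. That analysis is more delicate than written: the paper's step excluding $\mu_i(V_{i,j}(\tau_i))>r_{i,j}$ also needs the fact that some under-filled coordinate is then clipped at $-\bar V$.

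The paper's route, in return, proves more than the lemma. For any prescribed destination masses, some threshold partition realizes them and is the a.e.-unique optimizer of the lower-level problem. Yours gives only the necessary condition satisfied by a given optimum, which is all the lemma asks.

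Finally, your worry about finiteness for null classes disappears if you root the shortest-path computation at a destination with $\mu_i(A_j)>0$. All outgoing weights from such a root lie in $[-\bar V,\bar V]$, so every distance is finite.
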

\begin{proof}
We prove the claim in two steps. We first construct a threshold-based action profile $\hat{a}$ inducing the same destination flow as the given social optimum $a$. We then show that $\hat{a}$ weakly improves social welfare. Since the original action profile is already optimal, equality must hold, implying that $a$ coincides with $\hat{a}$ almost everywhere.

\textbf{First Step:} For any $j \in D_i$, let $r_{i, j} = \sum_{P \in \mathcal{P}_{i, j}} f_P$ be the amount of traffic between source $s_i$ and destination $t_{i, j}$, where $f$ is the flow induced by $a$. We define a function $\phi: [-\bar{V}, \bar{V}]^{|D_i|} \rightarrow [-\bar{V}, \bar{V}]^{|D_i|}$ as follows.
\begin{equation}
\phi_j(\tau) = \min(\bar{V}, \max(-\bar{V}, \tau_{i, j} + \mu_i(V_{i, j}(\tau_i)) - r_{i, j})), \forall j \in D_i.
\end{equation}

First, continuity of $\phi$ follows from continuity of the map $\tau \mapsto \mu_i(V_{i, j}(\tau))$, which is continuous by absolute continuity of $\mu_i$ together with continuity of half-space intersections.

Next, note that $V_i$ is a nonempty compact convex set. Hence, by Brouwer’s fixed-point theorem, $\phi$ has a fixed point; that is, a threshold vector $\tau_i$ such that
\begin{equation}
\tau_{i, j} = \min(\bar{V}, \max(-\bar{V}, \tau_{i, j} + \mu_i(V_{i, j}(\tau_i)) - r_{i, j})), \forall j \in D_i.
\end{equation}

Below, we argue that the threshold vector $\tau_i$ must satisfy the constraint $\mu_i(V_{i, j}(\tau_i)) = r_{i, j}$ for any $j \in D_i$.

If $\mu_i(V_{i, j}(\tau_i)) > r_{i, j}$, then we must have $\tau_{i, j} = \bar{V}$. Thus, $V_{i, j}(\tau) = \{v: v_j - \bar{V} \geq v_{j'} - \tau_{j'}, \forall j' \neq j\}$. Since $v_j \leq \bar{V}$, this set is $\mu_i$-null, a contradiction. 

If $\mu_i(V_{i, j}(\tau_i)) < r_{i, j}$, then, since $\sum_{j=1}^{|D_i|}\mu_i(V_{i,j}(\tau_i))
=
\sum_{j=1}^{|D_i|}r_{i,j}
=
r_i$, there must exist some $\hat{j}\in D_i$ such that $\mu_i(V_{i,\hat{j}}(\tau_i))
>
r_{i,\hat{j}}$, contradicting the previous case.

Thus, $\mu_i(V_{i, j}(\tau_i)) = r_{i, j}$ for any $j \in D_i$.

\textbf{Second Step:} Let $A_{j_1, j_2}$ denote the set of agents switching from destination $t_{i, j_1}$ in the original action profile $a$ to destination $t_{i, j_2}$ in $\hat{a}$. 
\begin{equation}
A_{j_1, j_2} := \{(i, v)| a_{i, v} \in \mathcal{P}_{i, j_1}, \hat{a}_{i, v} \in \mathcal{P}_{i, j_2}\}.
\end{equation}

Also, define the following two sets.
\begin{equation}
A_{j_1 \rightarrow} = \cup_{j_2} A_{j_1, j_2}, \quad A_{\rightarrow j_2} = \cup_{j_1} A_{j_1, j_2}.
\end{equation}

Since both action profiles induce the same flow, the two sets have the same measure: $\mu(A_{j \rightarrow}) = \mu(A_{\rightarrow j}), \forall j \in D_i$. Then, we show that the social welfare gain from $a$ to $\hat{a}$ is nonnegative, so $\hat{a}$ is optimal. Note that since every agent assigned to $j_2$ under $\hat{a}$ satisfies $v_{j_2} - \tau_{j_2} \geq v_{j_1} - \tau_{j_1}$, we have $v_{j_2} - v_{j_1} \geq \tau_{j_2} - \tau_{j_1}$.
\begin{equation}
\begin{aligned}
SW(\hat{a}) - SW(a) &= \sum_{j_1, j_2} \int_{A_{j_1, j_2}} (v_{j_2} - v_{j_1}) d\mu_i(v) \\
&\geq \sum_{j_1, j_2} (\tau_{j_2} - \tau_{j_1}) \mu(A_{j_1, j_2}) \\
&= \sum_{j_2} \tau_{j_2} \mu(A_{\rightarrow j_2}) - \sum_{j_1} \tau_{j_1} \mu(A_{j_1 \rightarrow})\\
&= \sum_j \tau_j (\mu(A_{\rightarrow j}) - \mu(A_{j \rightarrow}))\\
&= 0.
\end{aligned}
\end{equation}

Also, since both action profiles are optimal, the above inequality must hold with an equality. Thus, $v_{j_1} = \tau_{j_1}$ and $v_{j_2} = \tau_{j_2}$ for $\mu_i$-almost every agent in $A_{j_1, j_2}$, which implies that $\mu_i(A_{j_1, j_2}) = 0$.

Therefore $a$ and $\hat{a}$ differ only on a null set, so $a$ is threshold-based.
\end{proof}

\section{Omitted Proofs in \cref{sec: Characterizations}}
\label{app: proofs}

\subsection{Proof of \cref{prop: ne exists}}
\begin{proof}
We map our model to the framework of Schmeidler
\cite{schmeidler1973equilibrium}. The player space is the nonatomic measure space $(V,\mathcal B_V,\mu)$, and the action space is the finite set $\mathcal P$. We verify the assumptions of \cite[Theorem~2]{schmeidler1973equilibrium}. To match the notation of Schmeidler \cite{schmeidler1973equilibrium}, we temporarily allow \emph{mixed strategy profiles} $a:V \rightarrow \Delta(\mathcal{P})$; pure strategy profiles correspond to the action profiles in \cref{def: action profile}.
\begin{enumerate}
\item[(a)]
Since each edge latency function is continuous, every latency $\ell_P(f)$ depends continuously on the aggregate flow $f$. Because the aggregate flow depends continuously on the mixed strategy profile and each $U_i$ is continuous, the utility $u_i(a_{i, v}, a_{-(i, v)})$ is continuous in the aggregate flow and therefore weakly continuous in the strategy profile.

\item[(b)]
Fix a mixed strategy profile. Then every latency is fixed. Thus, for any two strategies $P_1,P_2\in\mathcal P_i$, because $U_i$ is continuous, the preference set
$\{v\in V_i: U_i(v_{j_1}, \ell_{P_1}(f)) > U_i(v_{j_2},\ell_{P_2}(f))\}$ is Borel measurable.
\end{enumerate}

Finally, each player's payoff depends only on her own type, her chosen strategy, and the induced aggregate flow. Since the aggregate flow is determined entirely by the distribution of strategies, the payoff depends on the strategy profile only through this distribution. Thus, the remaining assumption of \cite[Theorem 2]{schmeidler1973equilibrium} is satisfied, and a pure-strategy Nash equilibrium exists.
\end{proof}

\subsection{Proof of \cref{prop: ne}}
\begin{proof}
Let $a$ be a Nash equilibrium and let $f$ denote its induced flow. Define $\tau_i$ by $\tau_{i, j} = \min_{P \in \mathcal{P}_{i, j}} \ell_P(f)$. The first condition obviously holds; otherwise, those agents will prefer the outside option. Then, consider the second condition. Suppose $P, Q \in \mathcal{P}_{i, j}$ satisfy $f_P > 0$ and $\ell_P(f) > \ell_Q(f)$. Then every agent assigned to $P$ can switch to $Q$ and obtain a strictly higher utility, a contradiction. Finally, we consider agents in group $i$, based on \eqref{eq: utility}, the set of agents choosing destination $t_{i, j}$ must have a measure of $\mu_i(V_{i, j}(\tau_i))$. Thus, the third condition is satisfied.

Conversely, given a flow $f$ satisfying the conditions, we construct an action profile $a$ as follows. We focus on group $i$ and destination $t_{i, j}$. By the second condition, $\sum_{P \in \mathcal{P}_{i, j}} f_P = \mu_i(V_{i, j}(\tau_i))$. Index $\mathcal{P}_{i, j}$ by $\mathcal{P}_{i, j} = \{P_{i, j, m} | m = 1, ..., |\mathcal{P}_{i, j}|\}$. Since $\mu_i$ is nonatomic, there exists a measruable partition $\{A_{i, j, m}\}_{1 \leq m \leq |\mathcal{P}_{i, j}|}$ of $V_{i, j}(\tau_i)$ such that $\mu_i(A_{i, j, m}) = f_{P_{i, j, m}}$ for $m = 1, ..., |\mathcal{P}_{i, j}|$ (cf. \cite[Theorem 10.52]{aliprantis2006infinite}). Together, the action profile $a$ is given below. 
\begin{equation}
a_{i, v} = P_{i, j, m} \text{ with } (i, v) \in A_{i, j, m}.
\end{equation}
Since each $A_{i,j,m}$ is measurable, the resulting action profile is measurable. It is easy to check that $a$ is a Nash equilibrium.
\end{proof}

\subsection{Proof of \cref{prop: max sw 0}}
\begin{proof}
Let $a$ be any feasible action profile and let $f$ denote its induced flow. Then,
\begin{equation}
SW(a) = \sum_{i \in [k]} \sum_{P \in \mathcal{P}_i} \int_{a^{-1}(P)} U_i(v_j, \ell_P(f)) d\mu_i(v).
\end{equation}
Since the latency depends only on the induced flow, fixing $f$ completely determines the utilities appearing in the integrand. Therefore maximizing the social welfare is equivalent to, first, choosing a feasible flow $f$, and second, among all action profiles inducing $f$, choosing one maximizing the total utility. The second problem is precisely $\mathcal{S}_i(f)$.

Given a feasible flow $f$, the expression $\mathcal{S}_i(f)$ is the value function of the lower-level problem that maximizes total utilities subject to the flow constraints. The problem assigns agents in the valuation space $(V_i, \mathcal{B}_{V_i}, \mu_i)$ to the discrete space $(\mathcal{P}_i, 2^{\mathcal{P}_i}, f^i)$, where $f^i$ denotes the restriction of $f$ to $\mathcal{P}_i$. Thus, we can formulate $\mathcal{S}_i(f)$ as a semi-discrete optimal transport problem as in \eqref{eq: lower 0}, where the transport map $T: V_i \rightarrow \mathcal{P}_i$ is the restriction of the action profile $a$ to commodity $i$; that is, $T(v) = a((i, v))$. 
\end{proof}

\subsection{Proof of \cref{prop: kantorovich 0}}
\begin{proof}
Since the target measure is discrete and $\mu_i$ is nonatomic, every optimal transport plan $\pi$ is induced by a measurable transport map $T$ (see \cref{lemma: transport map}). Kantorovich duality \cite[Theorem 5.10]{villani2008optimal} then implies that the optimization problem \eqref{eq: lower 0} is a convex optimization problem with the following dual formulation, where strong duality holds:
\begin{subequations}
\label{eq: dual}
\begin{align}
\inf_{\phi \in L^1(V_i, \mathcal{B}_{V_i}, \mu_i), w \in \mathbb{R}^{\mathcal{P}_i}} \quad  &\int_{V_i} \phi(v) d\mu_i(v) + \sum_{P \in \mathcal{P}_i} w_P f_P \\
\text{s.t. } \quad &\phi(v) + w_P \geq U_i(v_j, \ell_P(f)), \quad \forall v \in V_i, P \in \mathcal{P}_{i, j}.
\end{align}
\end{subequations}

For each $v \in V_i$, since the objective is increasing in $\phi$, every optimal dual solution satisfies $\phi(v) = \max_{\hat{P} \in \mathcal{P}_i} U_i(v_{\hat{j}}, \ell_{\hat{P}}(f)) - w_{\hat{P}}$ a.e. It follows that $\phi \in L^1(V_i, \mathcal{B}_{V_i}, \mu_i)$. Substituting this expression for $\phi$ into the dual objective reduces the dual problem to the finite-dimensional optimization problem \eqref{eq: dual phi 0}. Since the dual optimum is attained \cite[Theorem 5.10]{villani2008optimal}, there exists an optimal vector, which we denote by $q$.

By complementary slackness in Kantorovich duality, 
\begin{equation}
\phi(v) + q_{T(v)} = U_i(v_{\hat{j}}, \ell_{T(v)}(f)) \text{ a.e.},
\end{equation}
where $\hat{j}$ is such that $T(v) \in \mathcal{P}_{i, \hat{j}}$. Combining the above result with the pointwise expression for $\phi(v)$, we get
\begin{equation}
T(v) \in \arg\max_{\hat{P} \in \mathcal{P}_i} U_i(v_{\hat{j}}, \ell_{\hat{P}}(f)) - q_{\hat{P}}
\end{equation}
for $\mu_i$-almost every $v$. Equivalently, $T^{-1}(P) \subseteq V_{i, P}(q)$ up to a $\mu_i$-null set for every $P \in \mathcal{P}_i$. Substituting the optimal transport map into the primal objective yields \eqref{eq: solution t 0}. 

We further prove that such an optimal $q$ can always be found in a compact set. First, the extreme value theorem implies the existence of some $K > 0$ such that $U_i(v_j, \ell_P(f)) \in [-K, K]$ for any feasible flow $f$ and strategy $P \in \mathcal{P}_i$. Observe that $\Phi_i(q, f)$ is translation-invariant with respect to $q$; that is, $\Phi_i(q, f) = \Phi_i(q + c\mathbf{1}, f)$ for any $c \in \mathbb{R}$, where $\mathbf{1}$ denotes the vector of ones. Thus, we can normalize the vector to ensure $\min_{P} q_P = 0$. 

Now, if $q_P > 2K$ for some strategy $P$, then for every valuation vector $v$, we have $U_i(v_j, \ell_P(f)) - q_P < -K$, whereas at least one alternative strategy achieves a value of at least $-K$ because $\min_{P} q_P = 0$. Hence, the $P$-term never contributes to the pointwise maximum in the first term of \eqref{eq: dual phi 0}. Replacing $q_P$ with $2K$ leaves the integral term of $\Phi_i$ unchanged while weakly decreasing the linear term $q_P f_P$ (since $f_P \geq 0$). Therefore, an optimal minimizer satisfying $q_P \in [0, 2K]$ for all $P \in \mathcal{P}_i$ is guaranteed to exist.
\end{proof}

\begin{lemma}
\label{lemma: transport map}
Let $(X,\mathcal B,\mu)$ be a nonatomic finite measure space, and let
$Y=\{1,\dots,m\}$ be finite. Suppose $q\in\mathbb R^m$ is an optimal
solution to
\begin{equation}
\min_q
\left\{
\int_X
\max_{j\in Y}
(c(x,j)-q_j)\,d\mu
+\sum_{j=1}^m q_j\nu_j
\right\},
\end{equation}
where $\nu_j\ge0$ and $\sum_j\nu_j=\mu(X)$. Then there exists a measurable map $T:X\to Y$ such that
\begin{equation}
T(x)\in
\arg\max_j(c(x,j)-q_j)
\quad\text{for }\mu\text{-a.e. }x,
\end{equation}
and
\begin{equation}
\mu(T^{-1}(j))=\nu_j,
\qquad
j=1,\dots,m.
\end{equation}
\end{lemma}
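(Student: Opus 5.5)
The plan is to combine the first-order optimality conditions of the dual problem with a measurable splitting of the tie sets. Write $\Psi(q) := \int_X \max_{j}(c(x,j)-q_j)\,d\mu + \sum_j q_j\nu_j$. Assume $c(\cdot,j)$ is measurable and bounded, as it is in our application with $c(v,P)=U_i(v_j,\ell_P(f))$. Then $\Psi$ is a finite convex function on $\mathbb{R}^m$.

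For $x\in X$, set $A(x) := \arg\max_j (c(x,j)-q_j)$. For $S\subseteq Y$, let $X_S := \{x : A(x)=S\}$; these $2^m-1$ sets are measurable and partition $X$. Define
\[
\underline{\nu}_j := \mu(\{x : A(x)=\{j\}\}), \qquad \overline{\nu}_j := \mu(\{x : j\in A(x)\}).
\]

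\textbf{Step 1: optimality gives interval bounds.} Fix $j$ and $t>0$. Compute the one-sided directional derivatives of $\Psi$ at $q$ in the directions $\pm e_j$ by dominated convergence. The difference quotients of the integrand are bounded by $1$ in absolute value and converge pointwise. This gives
\[
\Psi'(q;e_j) = \nu_j - \underline{\nu}_j, \qquad \Psi'(q;-e_j) = \overline{\nu}_j - \nu_j .
\]
Optimality forces both to be nonnegative, so $\underline{\nu}_j \le \nu_j \le \overline{\nu}_j$.

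\textbf{Step 2: all coalitional bounds.} Single-coordinate bounds are not enough to split the ties consistently. I will therefore use the directions $-e_B=-\sum_{j\in B}e_j$ for every $B\subseteq Y$. The same computation gives
\[
\Psi'(q;-e_B) = \mu(\{x : A(x)\cap B\neq\emptyset\}) - \nu(B) \ge 0 .
\]
Hence
\[
\nu(B) \le \sum_{S\cap B\neq\emptyset}\mu(X_S)\quad\text{for all } B\subseteq Y .
\]

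\textbf{Step 3: discrete transportation.} Now consider a bipartite flow problem. It has supplies $\mu(X_S)$ at the nodes $S$, demands $\nu_j$ at the nodes $j$, and an edge $S\to j$ exactly when $j\in S$. Total supply equals total demand, namely $\mu(X)$. The inequality from Step 2 is exactly the Hall/Gale supply–demand condition, so by max-flow/min-cut there are amounts $m_{S,j}\ge 0$, supported on $j\in S$, with $\sum_j m_{S,j}=\mu(X_S)$ and $\sum_S m_{S,j}=\nu_j$.

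\textbf{Step 4: build $T$.} Since $\mu$ is nonatomic, I split each $X_S$ into measurable pieces $X_{S,j}$, $j\in S$, with $\mu(X_{S,j})=m_{S,j}$; this is \cite[Theorem 10.52]{aliprantis2006infinite}, as used in \cref{lemma: feasible flow}. Then set $T\equiv j$ on $X_{S,j}$. By construction $T$ is measurable, $T(x)\in A(x)$ everywhere, and $\mu(T^{-1}(j))=\nu_j$.

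The main obstacle is Step 2. The per-coordinate bounds $\underline{\nu}_j\le\nu_j\le\overline{\nu}_j$ do not by themselves guarantee a consistent splitting of overlapping tie sets, so the Hall-type condition has to be extracted from the directional derivatives along $-e_B$. The flow-existence step in Step 3 is then routine, and the splitting in Step 4 relies on nonatomicity.
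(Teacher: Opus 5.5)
Your argument is correct, but it takes a different route from the paper's.

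The paper works through general optimal transport theory. It takes an optimal plan $\pi$, with existence, strong duality and complementary slackness all imported from Villani's Theorem 5.10. It deduces that $\pi$ is concentrated on the graph of $M(x)=\arg\max_j(c(x,j)-q_j)$, disintegrates $\pi$ into measurable weights $\lambda_j(x)$ supported on $M(x)$, and purifies with Lyapunov's theorem.

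You never leave the finite-dimensional dual. The one-sided derivatives of $\Psi$ along $-e_B$ are exactly the Gale--Hall supply--demand inequalities for the finitely many tie cells $X_S$. A finite max-flow argument then distributes the cell masses, and Sierpi\'nski's theorem does the splitting. So the ``purification'' happens cell by cell rather than pointwise. (Your Step 1 is redundant: its upper bounds are the case $B=\{j\}$ of Step 2, and its lower bounds follow from $B=Y\setminus\{j\}$ together with $\nu(Y)=\mu(X)$.)

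What your route buys:
\begin{itemize}
\item It is more elementary and matches the lemma's abstract hypotheses: an arbitrary nonatomic finite measure space and a merely measurable, bounded (or integrable) $c$. Invoking Villani's theorem tacitly requires Polish spaces and semicontinuity of the cost. These hold in the application ($V_i$ compact, $U_i$ continuous) but are not assumed in the lemma.
\item It yields duality as an output rather than an input. Your $T$ selects from the argmax and has marginals $\nu$, so $\int_X c(x,T(x))\,d\mu=\Psi(q)$. Weak duality then shows that $T$ is primal optimal and that there is no duality gap. It also follows that every optimal plan is concentrated on the graph of $M$.
\end{itemize}

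The paper's route is shorter given the black boxes. It also delivers that last property directly, which is what \cref{prop: kantorovich 0} uses when it speaks of any optimal transport map.
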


\begin{proof}
Let
\begin{equation}
M(x)= \arg\max_j(c(x,j)-q_j).
\end{equation}
Since $Y$ is finite and $c(\cdot,j)$ is measurable,
$M$ is a measurable correspondence with nonempty finite values.

By complementary slackness for Kantorovich duality \cite[Theorem 5.10]{villani2008optimal}, every optimal
transport plan $\pi$ is concentrated on the graph of $M$; namely,
\begin{equation}
\pi\bigl(\{(x,j):j\notin M(x)\}\bigr)=0.
\end{equation}

Disintegrating $\pi$ with respect to $\mu$ (cf. \cite[Theorem 4.1.17]{Durrett_2019}), there exist measurable probability vectors $\lambda(x)=(\lambda_1(x),\dots,\lambda_m(x))$ such that $\lambda_j(x)=0
$ whenever $j\notin M(x)$, and $\pi(dx,j)=\lambda_j(x)\,\mu(dx)$. Moreover,
\begin{equation}
\int_X\lambda_j(x)\,d\mu(x)=\nu_j,
\qquad j=1,\dots,m.  
\end{equation}

Since $\mu$ is nonatomic, the Lyapunov convexity theorem \cite[Theorem 13.33]{aliprantis2006infinite} implies that there exists a measurable map $T:X\to Y$ satisfying
\begin{equation}
T(x)\in M(x)
\quad\text{for }\mu\text{-a.e. }x,
\end{equation}
and
\begin{equation}
\mu(T^{-1}(j))
=
\int_X\lambda_j(x)\,d\mu(x)
=
\nu_j,
\qquad
j=1,\dots,m.
\end{equation}

This proves the claim.
\end{proof}

\subsection{Proof of \cref{cor: exist opt}}
\begin{proof}
By \cref{prop: kantorovich 0}, the value function admits the dual representation $\mathcal{S}_i(f) = \min_{q \in [0, 2K]} \Phi_i(q, f)$. Since $U_i$ and each latency function are continuous, $\Phi_i(q, f)$ is jointly continuous in $q$ and $f$, so the Berge maximum theorem (see \cite[Theorem 17.31]{aliprantis2006infinite}) implies that $\mathcal{S}_i(f)$ is continuous in $f$, and the correspondence $f \mapsto \arg \min_{q \in [0, 2K]^{|\mathcal{P}_i|}}\Phi_i(q, f)$ is upper hemicontinuous with nonempty, compact values. Moreover, the feasible set in \eqref{eq: upper} is compact, and the objective function is continuous, so a solution exists by the extreme value theorem.

Now, we prove the lattice structure by first showing that $\Phi_i(q, f)$ is submodular in $q$. See \cref{app: lattice} for some basics of lattice theory.

Let $g(q, v) = \max_{P \in \mathcal{P}_i} \{ U_i(v_j, \ell_P(f)) - q_P \}$. To establish submodularity of $g$, consider the mapping $h(q) = \max_k \{ a_k - q_k \}$, where
$a_k=U_i(v_{j},\ell_{P}(f))$
is a constant once $v$ and $f$ are fixed. For any $q^1, q^2 \in \mathbb{R}^n$, submodularity requires $h(q^1 \vee q^2) + h(q^1 \wedge q^2) \leq h(q^1) + h(q^2)$. 

Since $h$ is componenetwise decreasing, $h(q^1\wedge q^2) \le h(q_1)$ and $h(q^1\wedge q^2) \le h(q_2)$, so $h(q^1\vee q^2) \le \min\{h(q^1),h(q^2)\}$. Similarly, $q^1\wedge q^2 \le q^1,q^2$, so $h(q^1\wedge q^2) \le \max\{h(q^1),h(q^2)\}$. Adding the two inequalities gives
\begin{equation}
\begin{aligned}
h(q^1 \vee q^2) + h(q^1 \wedge q^2) &\leq \min\{h(q^1), h(q^2)\} + \max\{h(q^1), h(q^2)\} \\
&= h(q^1) + h(q^2).
\end{aligned}
\end{equation}
Therefore, $g(q,v)$ is submodular in $q$. Since integration preserves submodularity, the integral term of $\Phi_i$ is submodular. The linear term $\sum_P q_Pf_P$ is modular and hence submodular. Therefore, $\Phi_i$ is submodular. Then \cite[Theorem 2.7.1]{topkis1998supermodularity} implies that the set of minimizers $\Lambda_i(f)$ forms a sublattice. 

Since $\Lambda_i(f)$ is compact by the first part of the proof,
\cite[Theorem 2.3.1]{topkis1998supermodularity} implies that every compact sublattice of $\mathbb{R}^n$ is complete, which proves the claim.
\end{proof}

\subsection{Basics of Lattice Theory}
\label{app: lattice}
We provide some basics of lattice theory based on \cite[Chapter 2]{topkis1998supermodularity}.

A \emph{partially ordered set} is a set $X$ on which there is a binary relation $\succeq$ that is reflexive ($x \succeq x$), antisymmetric ($x \succeq y$ and $y \succeq x \implies x = y$), and transitive ($x \succeq y$ and $y \succeq z \implies x \succeq z$). For any two elements $x^1, x^2 \in X$, their least upper bound is the \emph{join}, denoted $x^1 \vee x^2$, and their greatest lower bound is the \emph{meet}, denoted $x^1 \wedge x^2$.

The set $X$ is a \emph{lattice} if $x^1 \vee x^2 \in X$ and $x^1 \wedge x^2 \in X$ for every pair $x^1, x^2 \in X$. A lattice is \emph{complete} if every nonempty subset has a supremum and an infimum. A subset $X' \subseteq X$ is a \emph{sublattice} of $X$ if it contains the join and meet (with respect to $X$) of each pair of its elements.

In the context of $X = \mathbb{R}^n$ endowed with the product order ($x \geq y$ if $x_i \geq y_i$ for all $i$), the join $x^1 \vee x^2$ corresponds to the pointwise maximum and the meet $x^1 \wedge x^2$ corresponds to the pointwise minimum.

A function $f: X \rightarrow \mathbb{R}$ on a lattice $X$ is \emph{supermodular} if for all $x^1, x^2 \in X$,
\begin{equation}
f(x^1 \vee x^2) + f(x^1 \wedge x^2) \geq f(x^1) + f(x^2).
\end{equation}
If the inequality is reversed, $f$ is \emph{submodular}. A key result \cite[Theorem 2.7.1]{topkis1998supermodularity} states that if $f$ is supermodular on a lattice $X$, then $\arg\max_{x \in X} f(x)$ is a sublattice of $X$. Similarly, if $f$ is submodular, then $\arg\min_{x \in X} f(x)$ is a sublattice of $X$.

\end{document}